 \definecolor{BLACK}{gray}{0}
 \definecolor{WHITE}{gray}{1}
 \definecolor{RED}{rgb}{1,0,0}
 \definecolor{GREEN}{rgb}{0,1,0}
 \definecolor{BLUE}{rgb}{0,0,1}
 \definecolor{CYAN}{cmyk}{1,0,0,0}
 \definecolor{MAGENTA}{cmyk}{0,1,0,0}
 \definecolor{YELLOW}{cmyk}{0,0,1,0}
\newenvironment{protocol*}[1]
  {
    \begin{center}
      \hrulefill\\
      \textbf{#1}
  }
  {
    \vspace{-1\baselineskip}
    \hrulefill
    \end{center}
  }
\theoremstyle{plain}
\newtheorem{thm}{Theorem}[section]
\theoremstyle{definition}
\theoremstyle{remark}
\newtheorem{lemma}{Lemma}  
\def\bel{\begin{lemma}}
\def\eel{\end{lemma}}
\newtheorem{theorem}{Theorem}
\newtheorem{proposition}[theorem]{Proposition}
\newtheorem{definition}[theorem]{Definition}
\begin{document}

\title{No-signaling-proof randomness extraction from public weak sources}
\author{Ravishankar Ramanathan}
\affiliation{Department of Computer Science, The University of Hong Kong, Pokfulam Road, Hong Kong}
\author{Micha{\l} Banacki}
\affiliation{International Centre for Theory of Quantum Technologies, University of Gda\'{n}sk, Wita Stwosza 63, 80-308 Gda\'{n}sk, Poland}
\affiliation{Institute of Theoretical Physics and Astrophysics, National Quantum Information Centre, Faculty of Mathematics, Physics and Informatics, University of Gda\'{n}sk, Wita Stwosza 57, 80-308 Gda\'{n}sk, Poland}
\author{Pawe{\l} Horodecki}
\affiliation{International Centre for Theory of Quantum Technologies, University of Gda\'{n}sk, Wita Stwosza 63, 80-308 Gda\'{n}sk, Poland}
\affiliation{Faculty of Applied Physics and Mathematics, National Quantum Information Centre, Gda\'{n}sk University of Technology, Gabriela Narutowicza 11/12, 80-233 Gda\'{n}sk, Poland}


\begin{abstract}
The extraction of randomness from weakly random seeds is a topic of central importance in cryptography. Weak sources of randomness can be considered to be either private or public, where public sources such as the NIST randomness beacon broadcast the random bits once they are generated. The problem of device-independent randomness extraction from weak public sources against no-signalling adversaries has remained open. In this paper, we show protocols for device-independent and one-sided device-independent amplification of randomness from weak public sources that use a finite number of devices and are secure against no-signaling adversaries. Specifically, under the assumption that the device behavior is as prescribed by quantum mechanics the protocols allow for amplification of public $\epsilon$-SV sources for arbitrary initial $\epsilon \in [0,0.5)$. On the other hand, when only the assumption of no-signaling between the components of the device is made, the protocols allow for amplification of a limited set of public weak sources. 
\end{abstract}


\keywords{}

\maketitle

\section{Introduction.}
The extraction of randomness from weak seeds is a topic of great importance given the utility of uniformly random bits in multiple cryptographic primitives. In most of these applications, one requires private bits that are uniform even given any side information possessed by a malicious adversary. Many physical sources of randomness, such as from radioactive decay or thermal noise, may be considered to be weak in that they produce biased correlated bits. Furthermore, many sources of randomness are also public in the sense that the partially random bits are available to everyone once they are generated. A paradigmatic example of a weak public source is the NIST randomness beacon \cite{nist}, which produces random numbers that are publicly visible to everyone on the internet. In particular, such public random bits can be utilized in statistical sampling, Monte Carlo simulations and fair resource sharing, while private random bits are a more precious resource, useful in establishing shared keys and initiating authenticated connections.  

It is well known that it is impossible, given a single weak source of randomness and classical resources, to create close to uniform random bits \cite{SV84}. A paradigmatic model for the source of randomness is that of the $\epsilon$-SV source \cite{SV84}, a model of a biased coin where the individual coin tosses are not independent but rather the bits $R_i$ produced by the source obey 
\begin{eqnarray}
\frac{1}{2} - \epsilon \leq P\left(R_i = 0 | R_{i-1}, \dots, R_1 \right) \leq \frac{1}{2} + \epsilon.
\end{eqnarray}
The parameter $0 \leq \epsilon < \frac{1}{2}$ describes the reliability of the source, and the task can be expressed as being to convert a single $\epsilon$-SV source with $\epsilon < \frac{1}{2}$ into one with $\epsilon \rightarrow \frac{1}{2}$. No classical deterministic procedure exists for this task, which is termed as \textit{randomness amplification} or \textit{randomness extraction} \cite{CR12, CSW14}. In particular, classical randomness extractors either require an additional independent random seed or multiple independent weak sources in order to produce uniformly random bits \cite{CG88}.

The discovery that quantum resources can help in achieving the task provided a major fillip to the field of quantum cryptography \cite{CR12}. Furthermore, the task is considered within the paradigm of Device-Independent (DI) quantum cryptography, which achieves the highest form of security called information-theoretic security based on the very laws of nature. In this paradigm, users do not even need to trust the devices executing the cryptographic protocol, and can instead verify its correct execution by means of simple statistical tests on the devices. These statistical tests take the form of Bell tests, that verify that the  correlations in the devices satisfy the conditions for quantum non-local correlations. In its highest form, such a cryptographic protocol guarantees security based on the simplest and most fundamental law of nature, namely the rule of no-superluminal signalling of relativity, i.e., the adversary is said to hold no-signaling side information \cite{CR12, GMTD+13, BRGH+16} (for a more general principle based only on relativistic causality, see \cite{HR19}).

Relaxations of the highest security guarantees afforded by no-signaling secure DI protocols have been considered. In particular, one may consider security against adversaries limited to quantum or even classical side information about the devices \cite{CSW14, KAF17, FWEBC20}. Furthermore, it is also possible to consider situations in which the devices used by the honest parties are partially trusted. This trust can take several forms, one may trust the state preparation procedure or measurements performed on the devices \cite{Pirandola2019}. An interesting paradigm is that of \textit{one-sided device-independence} ($1$sDI) where one of the parties (a trusted party such as NIST, say) involved in the protocol is assumed to have full control over their device \cite{BCW+12}. In particular (as considered in the second part of this paper), one of the honest parties holds a well-characterized quantum system upon which they may perform trusted measurements including for quantum tomography. The statistical tests in this paradigm take the form of tests for the violation of steering inequalities, that verify that the correlations in the devices satisfy the criteria for quantum steering. 

Besides the cryptographic importance of private random bits, the task of randomness amplification is also of foundational interest with implications for the philosophical problem of the existence of free-will \cite{CR12, HR19}. The existence of a physical process producing fully uniform bits starting from arbitrarily weak seeds under the assumption of no-signalling can be thought of, in that context, as the statement that given a speed limit to the propagation of information, the existence of an arbitrarily small amount of free-will in nature implies the existence of a complete freedom of choice. Thus for both fundamental and cryptographic reasons, it is of central interest to produce simple, practical and efficient DI and $1$sDI protocols for randomness amplification of weak public sources and show their security on the basis of the no-signalling principle. 

In this paper, we present a device-independent protocol for amplification of a single weak public SV source (with $\epsilon < 0.09$) using a device consisting of two spatially separated components and show a proof of its composable security against no-signaling side information. When the device behavior is as prescribed by quantum theory, we show a much stronger result that the amplification is possible for the entire range of $\epsilon \in [0, 0.5)$. While the protocol is secure against general no-signaling adversaries, it only tolerates a vanishing rate of noise and has vanishing output rate. As such, while it represents a foundational advance, the scheme is not practical at its present stage of development. In the second part of the paper, we introduce a one-sided device-independent scheme for randomness amplification of arbitrarily weak public SV sources secure against no-signaling side information. This latter protocol utilises a device consisting of three spatially separated components (of which one is a trusted qubit system) and works for arbitrary no-signaling adversaries. We show that the protocol yields secure private bits in the noiseless scenario by analysing the extremal points of tripartite time-ordered no-signaling assemblages. We defer the optimal rate calculations for the analogous noise-tolerant version of the protocol for future work. 

\section{Device-Independent Amplification of weak public sources.}
Colbeck and Renner \cite{CR12} were the first to show a DI protocol for the amplification of randomness from SV sources. Their protocol had the important feature of amplifying weak public sources using a device consisting of only two spatially separated components, as well as being secure against general no-signalling adversaries. 
However, it also had a few drawbacks including being applicable only to a restricted set of public SV sources satisfying $\epsilon < 0.058$, requiring a large number of measurements (growing with the security parameter), tolerating a vanishing rate of noise and producing vanishing extraction rate. Nevertheless, their result was also important from a more fundamental point of view. Namely, the result shows that the presence of weakly random process in nature (up to $\epsilon < 0.058$) together with the no-signalling assumption imply the existence of fully random processes. In the first part of the paper, we give an improved version of their protocol that achieves amplification of sources up to $\epsilon \lessapprox 0.090$.



Since \cite{CR12}, advances have been made in multiple works with security proven against both quantum \cite{CSW14, KAF17} and no-signalling adversaries \cite{GMTD+13}. The no-signaling paradigm, constrained by the impossibility of sending messages instantaneously (i.e., of signaling) has an appealing advantage over the paradigm with quantum adversaries. Namely, while knowledge of quantum mechanics is necessary in order to build the devices implementing the protocol, verifying whether the device provides  true randomness does not require any such knowledge. Security of the randomness produced by the device can be verified by honest everyday users using knowledge of basic statistics.

While \cite{GMTD+13} tackled the problem of amplifying weak public sources against  no-signalling side information, it required a large number of spatially separated devices (polynomial in the number of bits taken from the source) making it quite impractical. In \cite{BRGH+16, RBHH+16}, we showed some practical noise-tolerant protocols for amplifying arbitrarily weak SV sources against no-signaling side information. However, these latter protocols were designed to handle private rather than public sources, i.e., they assume that any and all of the bits produced by the source are kept private, and are never leaked to the adversary even after completion of the protocol. It is an open problem to quantitatively analyse how much the security of these protocols is affected when (a portion of the) bits used in the protocol are leaked to the adversary.

The task of public source amplification against no-signaling adversaries faces a major obstacle that has prevented further progress on the problem. Namely, the usual structure of a randomness amplification protocol, of performing a Bell test and applying a multi-source randomness extractor to the outputs (along with any further bits from the weak source), does not seem to work in this scenario. There are significant indications that no-signalling proof multi-source randomness extractors do not exist \cite{RA12}. The task of estimating the min-entropy from the outputs of the device, for which tools such as the \textit{entropy accumulation theorem} have been developed in the setting of quantum side information, also becomes difficult when considering the general scenario of no-signaling side information. In this paper, we  overcome these difficulties with a different proof. We show the following result. 
\begin{theorem}
\label{thm:result-inf}
There exists a protocol using a device consisting of two spatially separated components that takes as inputs bits $R_i$ from an $\epsilon$-SV source (of arbitrary fixed $\epsilon < 0.09$) and outputs a bit $R_r$ such that under the assumption of no-signaling between the components of the device, the bit $R_r$ is certified to be arbitrarily free, except with negligible probability. Under the additional assumption that the device behavior is as predicted by quantum theory, the statement holds for arbitrary initial $\epsilon < 0.5$.
\end{theorem}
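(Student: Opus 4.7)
The plan is to build the protocol around a chained Bell inequality with a large number $N$ of measurement settings per party, evaluated on inputs drawn block-wise from the public $\epsilon$-SV source. Concretely, Alice and Bob would use chunks of SV bits to select among $2N$ settings each, record binary outcomes over many rounds, and the protocol aborts unless the observed chained-inequality value is within some $\delta$ of its algebraic maximum on every test round. The output bit $R_r$ would be Alice's output in a single, SV-selected ``key'' round, with the remaining rounds reserved for certification.

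The no-signaling security analysis reduces to bounding, for every realization of the public SV input string $r$ and every no-signaling extension of the tripartite box $P(AB\mid XY, E)$, the predictive bias $\max_e \bigl| P(R_r = 0 \mid e, r) - 1/2 \bigr|$. I would proceed in three steps. First, condition on the adversary's side-information value $e$: by no-signaling between the device and $E$, the conditional box $P(AB\mid XY,e)$ is itself a no-signaling distribution. Second, impose the passing condition, which forces the conditional box to still produce a chained-inequality value at least $I_N - \delta$ with non-negligible probability over the SV-biased input distribution. Third, cast the maximization of the bias under these two constraints as a linear program over the bipartite no-signaling polytope, together with an SV-admissibility constraint on the marginal over inputs. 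The sharp threshold $\epsilon < 0.09$ should then emerge from analyzing this LP asymptotically as $N \to \infty$ and $\delta \to 0$; the improvement over the Colbeck--Renner threshold $0.058$ of \cite{CR12} should come from exploiting the chained inequality's structure across \emph{all} rounds (rather than a single test round) and from a tighter accounting of which SV input distributions can be compatible with the passing event.

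For the quantum strengthening to arbitrary $\epsilon < 1/2$, I would invoke rigidity of near-maximal quantum violations of the chained inequality: a violation within $\delta$ of the Tsirelson bound self-tests the local state and measurements to be close to the canonical configuration whose marginal output distribution is perfectly uniform. Because Tsirelson's bound is strictly below the algebraic maximum, the passing event can be realized with nonzero probability for any $\epsilon < 1/2$, and rigidity then converts every passing transcript into a certificate that $R_r$ is close to uniform even given quantum side information. Composability with the abort event, plus a union bound over rounds, then yields the claim for the entire range $\epsilon \in [0, 1/2)$.

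The main obstacle is clearly the no-signaling case. As the paper itself notes, no-signaling-proof multi-source extractors are not known to exist, and the entropy accumulation theorem is unavailable in the fully no-signaling setting, so one cannot invoke a generic extractor off the shelf. The hard core is therefore the tightness of the single-shot LP bound: one must simultaneously exploit the full chained-inequality structure, the no-signaling constraints across Alice, Bob, and $E$, and the SV-admissible marginal on the inputs conditioned on $E$, without slack in any of the three. Pushing the threshold from $0.058$ up to $0.09$ is precisely where the new analysis must do genuinely more work than \cite{CR12}, and I expect this to be the technically most delicate part of the proof.
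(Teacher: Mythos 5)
Your overall architecture (block-wise SV-selected inputs to a many-setting chained-type test, output taken from a single SV-selected round, security via conditioning on the adversary's outcome and exploiting no-signaling) matches the paper's skeleton, but there is a genuine gap in the part of the theorem that is actually new: the claim that under the quantum assumption the protocol works for \emph{arbitrary} $\epsilon<1/2$. Your route to that claim --- rigidity of near-Tsirelson violations of the plain chained inequality --- does not work, and in fact the obstruction is exactly the one that limited Colbeck--Renner to $\epsilon<0.058$. When the inputs are drawn from an $\epsilon$-SV source, the certified bias of the output bit is not controlled by the Bell value itself but by the Bell value multiplied by a measurement-dependence penalty of order $\left(\frac{1/2+\epsilon}{1/2-\epsilon}\right)^{2\log(N+1)}$, which grows polynomially in $N$. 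For the chained inequality the gap between the Tsirelson value and the algebraic (no-signaling) optimum is nonzero and decays only like $1/N$, so the product of gap and penalty blows up unless $\epsilon$ is small; the very fact you cite as helpful ("Tsirelson's bound is strictly below the algebraic maximum") is precisely what caps $\epsilon$. The paper's essential move is to replace the chained inequality by a Hardy \emph{ladder}: the test is split into a set of "zero constraints" $\mathcal{I}_0$, which quantum correlations satisfy \emph{exactly} (value $0$, equal to the no-signaling optimum), and a separate tomographic check that $P(0,0|N,N)\geq 1/2-\delta_2$. Since the penalty factor multiplies $\mathcal{I}_0=0$, it is harmless for any $\epsilon<1/2$, and a chaining lemma ($P(0,0|N,N)\leq\frac{1}{2}(1+\mathcal{I}_0)$ for any no-signaling box) pins the Hardy probability to $1/2$ from above while the second test pins it from below. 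Without this two-sided structure your protocol also lacks a lower bound on the key probability: passing only a "zero events are rare" test is compatible with a deterministic output.

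On the no-signaling side, your plan to cast the single-round bias bound as an LP over the no-signaling polytope with an SV-admissibility constraint is a reasonable framing, but you leave the entire quantitative content ("the sharp threshold $0.09$ should then emerge") unargued. In the paper the threshold does not come from a tight single-shot LP; it comes from an explicit estimation chain --- Freedman's martingale inequality to convert the empirical zero-event frequency into a per-run bound on $\mathcal{I}_0$, Azuma--Hoeffding for the tomographic test, a count of "bad" runs, and finally a union bound on the probability that the SV-selected key round lands in the bad set --- and $\epsilon\lessapprox 0.09$ is the point where the SV-bias exponents in the bad-run count and in the selection step balance. The authors explicitly note this value is an artefact of their estimation method rather than an LP-tight threshold, so you should not expect an asymptotic LP analysis to reproduce it; you would need to either carry out that concentration argument or supply a genuinely different (and currently missing) tightness proof.
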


\begin{figure}
	\begin{protocol*}{Protocol I}
		\begin{enumerate}
			\item The $\epsilon$-SV source is used to choose the measurement settings $(\textbf{x}_i, \textbf{y}_i)$ (with each $\textbf{x}_i$ and $\textbf{y}_i$ chosen by taking $\log (N+1)$ bits from the source) for $M$ runs on a single device consisting of two components. The device produces output bits $(\textbf{a}_i, \textbf{b}_i) \in \{0,1\}^2$ with $i \in \{1,\dots, M\}$.
			\item Define $\mathcal{S}_{\text{H}}$ to be the subset of measurement runs in which the measurement settings involved in the Hardy ladder test \eqref{eq:Hardy-ineq-1} appear, i.e., $\mathcal{S}_{\text{H}} := \big\{ i \in \{1, \dots, M \}| (\textbf{x}_i, \textbf{y}_i) = (k, k-1) \; \text{for} \; k \in \{1, \dots, N\} \vee (\textbf{x}_i, \textbf{y}_i) = (k-1, k) \; \text{for} \; k \in \{1, \dots, N\} \vee (\textbf{x}_i, \textbf{y}_i) = (0, 0) \vee (\textbf{x}_i, \textbf{y}_i) = (N, N) \big\}$. The parties check that the cardinality of the set $\mathcal{S}_{\text{H}}$ satisfies $|\mathcal{S}_{\text{H}}| \in \left[\frac{M}{(N+1)}, \frac{3M}{(N+1)}\right]$. If not, they set the final output bit $R_r = \perp$ and abort.
			\item Define $\mathcal{S}_{\text{H-nz}}$ to be the subset of measurement runs in which the measurement setting $(N,N)$ appears, i.e., $\mathcal{S}_{\text{H-nz}} := \big\{i \in \{1, \dots, M\} | (\textbf{x}_i, \textbf{y}_i) = (N,N) \big\}$. The parties check that the cardinality of the set $\mathcal{S}_{\text{H-nz}}$ satisfies $\big| \mathcal{S}_{\text{H-nz}} \big| \in \left[\frac{M}{2(N+1)^2}, \frac{3M}{2(N+1)^2} \right]$. If not they set the final output bit $R_r = \perp$ and abort. 
			\item 
			The parties check that the quantity $\textit{Z}_{\text{H}} :=  \frac{1}{|\mathcal{S}_{\text{H}}|}\sum_{i \in \mathcal{S}_{\text{H}}} B_{\text{H}}(\textbf{a}_i, \textbf{b}_i, \textbf{x}_i, \textbf{y}_i) \leq \delta_1$, for fixed $\delta_1 > 0$, i.e., the parties check that the Hardy zero probability events do not occur (other than with some small frequency $\delta_1$). Otherwise, the protocol is aborted and the parties set $R_r = \perp$.
			\item The parties perform a partial tomographic test by computing the empirical average $Z_{\text{H-nz}} := \frac{1}{|\mathcal{S}_{\text{H-nz}}|} \sum_{i \in \mathcal{S}_{\text{H-nz}}} B_{\text{H-nz}}(\textbf{a}_i, \textbf{b}_i, \textbf{x}_i, \textbf{y}_i)$. The protocol is aborted and the parties set $R_r = \perp$ unless $Z_{\text{H-nz}} \geq \frac{1}{2} - \delta_2$ for fixed $\delta_2 > 0$.
			\item Conditioned on not aborting in the previous steps, the parties choose $r \in \mathcal{S}_{\text{H-nz}}$ using $\log | \mathcal{S}_{\text{H-nz}} |$ bits from the $\epsilon$-SV source and output the bit $R_r$ obtained from the measurement outcomes of the $r$-th run as
 \[R_r =  \begin{cases} 
      0 & (\textbf{a}_r, \textbf{b}_r) = (0,0) \\
      1 & (\textbf{a}_r, \textbf{b}_r) \neq (0,0)
   \end{cases}
\]
		\end{enumerate}
	\end{protocol*}
	\caption{\label{protocolsingle} Protocol for device-independent randomness amplification of a public SV source.}
\end{figure}
Theorem \ref{thm:result-inf} is realized by Protocol I shown in Fig. \ref{protocolsingle}. The quantities $B_{\text{H}}(\textbf{a}_i, \textbf{b}_i, \textbf{x}_i, \textbf{y}_i)$ and $B_{\text{H-nz}}(\textbf{a}_i, \textbf{b}_i, \textbf{x}_i, \textbf{y}_i)$ are specific indicator functions for the tests performed in the protocol and are shown together with the tested Bell inequality and the proof of security of the Protocol I in the Supplemental Material. A key ingredient in the protocol is a modified version of Hardy's 'ladder" argument for non-locality, given in Section \ref{sec:Bell-ineq} in the Supplemental Material, for which we derive the optimal quantum value. A second key ingredient in the protocol is the introduction of two statistical tests, performed by the honest parties, which allow us to certify that the final output bit $R_r$ from any no-signaling box that passes the tests must necessarily be arbitrarily free. Particular modifications to the estimation procedure are introduced in the proof here to allow us to reach this conclusion for a weak initial public seed.

While the Protocol I achieves amplification of arbitrarily weak public sources against no-signaling adversaries, it requires a large number of measurement settings (that grows as an inverse polynomial in the distance to uniform of the final output bit) to do so. The Protocol I also cannot tolerate a constant rate of noise (the distance to uniform of the output bit is directly proportional to the degree of quantum violation of the inequality) and has zero extraction rate. Furthermore, the devices are assumed to be fixed given the source (more precisely, this is the assumption that given the history and Eve's side information, the device and the string of bits produced by the source are independent). It is an important question for future work, whether some of these conditions can be relaxed. In particular, a route to relaxing the source-device model introduced in \cite{WBG+17} appears to be a promising direction for further exploration. 


\section{One-sided Device-Independent Amplification of weak public sources.} In the previous section, we have seen a protocol that allows honest users to achieve the goal of extracting randomness from a weak public SV source. However, at its present stage of development, the protocol only affords a vanishing extraction rate as well as vanishing noise-tolerance. In this section, we show a more practical  protocol for the same task, i.e., we show a protocol that allows to extract a linear fraction of the random bits. The price to pay is to assume that one of the parties, Alice say, holds a trusted qubit system on which she is able to perform state preparation and measurements in a fully trusted manner. We thus work within the paradigm of one-sided device-independence (1sDI) \cite{BCW+12}. In this paradigm, the parties test for the violation of an EPR-steering inequality \cite{Sch35, CJWR09} in an analogous manner to the testing of Bell inequality violation in the DI paradigm. In particular, assuming that one of the parties' system has a definite (though unknown) quantum state, the protocol must verify that the other parties, by their choice of measurement on their black box devices, can affect this state. It has been shown \cite{BCW+12} that the experimental requirements (in particular, the detection efficiencies required) for practical implementations of 1sDI protocols are much lower than for the DI protocols, making 1sDI a promising approach for devising protocols feasible with exisiting devices. 

We show a 1sDI randomness amplification protocol II for extracting randomness from arbitrarily weak SV sources secure against general no-signaling adversaries. The protocol shown in Fig. \ref{protocol1sDI} considers a steering scenario with three honest parties (Alice, Bob and Charlie) and a no-signaling adversary Eve. That is, the devices held by the honest parties and adversary Eve are such that one honest user (Charlie) holds a trusted characterised quantum system while the devices held by the other parties and Eve are only constrained by the no-signaling principle. The devices are thus characterised as general no-signaling assemblages \cite{SBCSV15}, . The formulation of 1sDI randomness amplification protocols secure against no-signaling adversaries has been facilitated by our recent results showing that quantum correlations can realise extremal non-classical no-signaling assemblages \cite{RBRH20, BMRH21} in a three-party scenario. Furthermore, the corresponding steering inequalities can be formulated in the framework of arbitrarily limited measurement independence, i.e., for measurements chosen using $\epsilon$-SV sources of arbitrary $\epsilon < 1/2$. 

\begin{figure}
	\begin{protocol*}{Protocol II}
		\begin{enumerate}
			\item The $\epsilon$-SV source is used to choose the measurement settings $(\textbf{x}_i, \textbf{y}_i)$ for two parties Alice and Bob (with each $\textbf{x}_i$, $\textbf{y}_i$ chosen by taking one bit from the source) for $M$ runs on a single device consisting of two components. The device produces output bits $(\textbf{a}_i, \textbf{b}_i) \in \{0,1\}^2$ with $i \in \{1,\dots, M\}$ and the corresponding state $\sigma_{\textbf{a}_i, \textbf{b}_i | \textbf{x}_i, \textbf{y}_i}$ of the third party Charlie.
			\item 
			The parties check that the quantity $\textit{Z}_{\text{S}} :=  \frac{1}{M}\sum_{i =1}^{M} B_{\text{S}}(\textbf{a}_i, \textbf{b}_i, \textbf{x}_i, \textbf{y}_i) =0$, i.e., the parties check for algebraic violation of the steering inequality in all the $M$ runs. Otherwise, the protocol is aborted and the parties set $R_r = \perp$.
			\item Conditioned on not aborting in the previous step, the parties choose $r \in \{1, \dots, M\}$ using $\log M$ bits from the $\epsilon$-SV source and output Alice's measurement outcome $\textbf{a}_r$ from the $r$-th run as the final output bit $R_r$.
		\end{enumerate}
	\end{protocol*}
	\caption{\label{protocol1sDI} Protocol for one-sided device-independent randomness amplification of a public SV source.}
\end{figure}

In the Supplemental Material, we sketch the proof of security of the one-sided device-independent (1sDI) randomness amplification protocol II, namely of the following theorem.
\begin{theorem}
\label{thm:result-inf-2}
There exists a one-sided device-independent protocol using a device consisting of three spatially separated components that takes as inputs bits $R_i$ from an $\epsilon$-SV source (for arbitrary initial $\epsilon < 0.5$) and outputs a bit $R_r$ such that under the assumption that the device behavior is as predicted by quantum theory, and that one component is fully trusted and characterised, the bit $R_r$ is certified to be arbitrarily free, except with negligible probability.
\end{theorem}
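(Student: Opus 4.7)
The plan is to proceed in three stages: first, specify the tripartite steering inequality $B_{\text{S}}$ whose algebraic violation is tested in Step 2; second, invoke the characterization of extremal tripartite time-ordered no-signaling assemblages (from \cite{RBRH20, BMRH21}) to show that this algebraic violation forces Alice's outcome, conditioned on any no-signaling side information held by Eve, to be uniformly distributed on the chosen round; third, combine this with the $\epsilon$-SV bound on the input source to conclude that $R_r$ is negligibly far from uniform given Eve's side information, for arbitrary $\epsilon < 1/2$.

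For the first stage I would exhibit $B_{\text{S}}$ as the indicator of a tripartite Hardy-type steering test: a small family of zero-probability constraints on the assemblage $\{\sigma_{\textbf{a}, \textbf{b}|\textbf{x},\textbf{y}}\}$ of Charlie, together with one non-zero event that is incompatible with any unsteerable (LHS) decomposition. The condition $Z_{\text{S}} = 0$ enforces that in every single run the Hardy zeros are realized exactly, while the non-zero event consistent with the zeros must occur, steering Charlie's system into a fixed pure state that Alice and Bob can verify by tomography on Charlie's trusted qubit. Using the construction from our earlier works, this Hardy-type steering inequality has a quantum strategy attaining algebraic violation, so honest implementations pass with certainty in the noiseless setting.

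For the second stage, which I expect to be the main technical obstacle, the key claim is a monogamy-type statement for no-signaling assemblages: any extension $\{\sigma_{\textbf{a}, \textbf{b}, e | \textbf{x}, \textbf{y}, z}\}$ of the assemblage to include Eve's classical output $e$ for her setting $z$, consistent with no-signaling among Alice, Bob, Charlie and Eve, must have $P(\textbf{a}=0 \,|\, \textbf{x}, \textbf{y}, z, e) = \tfrac{1}{2}$ on the selected input. To prove this, I would use the characterization of extremal points of the tripartite time-ordered no-signaling assemblage polytope from \cite{RBRH20, BMRH21}: the Hardy-zero constraints restrict any non-aborting assemblage to a face of this polytope whose extremal points all have uniform Alice-marginal at the Hardy non-zero input. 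Since any Eve extension is a convex mixture of such extremal assemblages and Eve's side information labels the mixture, the conditional distribution of Alice's bit remains $\tfrac{1}{2}$ for every value of $e$. Care is needed to handle the measurement-independence relaxation: the $\epsilon$-SV source generates the inputs $(\textbf{x}_i, \textbf{y}_i)$ from a distribution correlated with Eve's variable, so the zero constraints must be formulated robustly enough that algebraic violation in the $\epsilon$-SV measure still certifies uniformity on every input, which is exactly the formulation afforded by the Hardy-type test of \cite{RBRH20, BMRH21}.

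For the third stage I would first bound the non-aborting probability: any assemblage that does not attain algebraic violation fails the zero test in at least one run with probability going to $1$ as $M \to \infty$, by a union-bound estimate using the $\epsilon$-SV lower bound $\left(\tfrac{1}{2}-\epsilon\right)^{O(1)}$ on the probability of hitting a specific input pair in each round. Conditioned on not aborting, every round's assemblage lies in the zero-set of $B_{\text{S}}$, so by stage two Alice's output in each round is uniform given Eve. The index $r$ is drawn from the $\epsilon$-SV source, hence is only weakly random over $\{1,\dots,M\}$; however, since the certified uniformity of $\textbf{a}_i$ given Eve holds for every $i$, the $r$-weighted average $R_r$ is also uniform given Eve regardless of how biased the choice of $r$ is. Composable security then follows by a standard argument (via a trace-distance bound between the real and ideal output distributions), and the main obstacle — the assemblage-theoretic monogamy argument in stage two combined with the measurement-independence formulation — is exactly the point at which the results of \cite{RBRH20, BMRH21} are essential and where the proof departs from the fully device-independent treatment of Protocol I.
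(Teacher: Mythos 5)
Your overall architecture — an algebraically violated steering test, extremality of the certified assemblage implying decoupling from Eve, uniformity of Alice's marginal, and the observation that a biased choice of the round $r$ is harmless because uniformity holds in every round — matches the strategy of the paper. However, there are two discrepancies, one of which is a genuine gap. First, a minor one: the test actually used is not a Hardy-type collection of zero-probability constraints but a fidelity-type steering functional $F_{\Sigma^{(C)}_{GHZ}}(\tilde{\Sigma}) = \sum_{a,b,x,y}\mathrm{Tr}(\rho_{ab|xy}\tilde{\sigma}_{ab|xy})$ whose algebraic maximum $4$ is \emph{uniquely} attained by the GHZ assemblage; the condition $Z_{\text{S}}=0$ is a tomographic check that Charlie's conditional states coincide exactly with the target pure states, and uniqueness at a single inflexible (hence extremal) point — not merely a face with uniform-marginal extremal points — is what drives the decoupling.

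The substantive gap is in your stage two/three: you certify uniformity of Alice's outcome given Eve \emph{round by round}, but the object produced by the $M$-run protocol is a single sequential (time-ordered no-signaling) assemblage in which round $r$ may a priori be correlated with the inputs, outputs and Charlie states of all other rounds, and Eve may exploit these cross-round correlations. The paper's main technical content is precisely the step you skip: a "relative inflexibility" theorem showing that if every single-round marginal of a sequential TONS assemblage equals a fixed inflexible assemblage, then the full sequential assemblage必 factorizes as the tensor product $\otimes_{i=1}^{n}\Sigma_i$, together with a self-testing statement that the product functional attains $4^n$ only on this tensor product. It is this factorization, proved by induction on the number of rounds, that justifies treating round $r$ as an independent copy of the extremal GHZ assemblage and hence as decoupled from both Eve and the other rounds. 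Without it, your claim that "the conditional distribution of Alice's bit remains $\tfrac{1}{2}$ for every value of $e$" does not follow, since $e$ could encode information about the transcript of the other $M-1$ rounds.
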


We defer a full security proof to future work \cite{RBH21} for the following reason. It may be expected that the Protocol II can be suitably modified to allow for a constant rate of noise. However, such a modification would require the application of a (strong) randomness extractor at the end of the protocol rather than obtaining the output bit from the outcomes of one randomly chosen run. It is then required to study if the techniques of the proofs against quantum side information from \cite{ADF+18, KAF17} can be adapted to prove general security in the 1sDI scenario. In the Supplemental Material, we show the existence of extremal sequential non-signalling assemblages that admit quantum realization in the time-ordered no-signaling (TONS) scenario. Furthermore, we show that such extremal assemblages are such that the outcomes of each party are perfectly random with respect to any non-signalling adversary. As such, under the correctness assumption that the devices behave according to quantum theory, picking the output bit from the measurement outcomes in a randomly chosen run (using an $\epsilon$-SV source) guarantees that the output bit is uniformly random and private with respect to any non-signalling adversary. Furthermore, since quantum correlations achieve algebraic violation of the steering inequality used in this protocol, the certification of perfect randomness can be made for arbitrary measurement independence, i.e., for arbitrary initial $\epsilon < 0.5$.

\textit{Discussion.-}
While the 1sDI protocol here considers a steering scenario with three honest parties (with a single characterised qubit system held by one of the parties). The reasons for the choice of the three-party setting are as follows. Firstly, while it is possible to realise extremal no-signaling assemblages already in the two-party scenario (by virtue of the fact that there are no super-quantum no-signaling assemblages in this scenario), none of the extremal points so far have been found to give rise to a perfect random bit under the constraint of arbitrarily weak measurement independence. On the other hand, such a random bit can be readily obtained from a three-party test on a GHZ state for arbitrary initial $\epsilon$. Secondly, as remarked earlier, it was shown in \cite{RBRH20, BMRH21} that quantum correlations allow to realise non-trivial extremal no-signaling assemblages in the three-party scenario. Thirdly, one may readily anticipate that the 1sDI protocol for randomness amplification may be generalised into one for quantum key distribution \cite{BCW+12} or for conference key agreement \cite{RMW18}.   

\textit{Open Questions.-} 
In the future, it would be important to prove or disprove if fully device-independent randomness amplification of public weak sources against no-signaling adversaries is possible in a protocol using finite devices and settings, for arbitrary initial randomness of the source. In this regard, it is important to verify whether improved statistical techniques can improve the range of initial $\epsilon$ that can be amplified using Protocol I. It would also be interesting to show device-independent security against so-called hybrid quantum-no-signalling adversaries \cite{BMRH21} who are restricted to preparing devices whose input-output statistics conform to the laws of quantum mechanics on a subset of the parties. Remarkably, this new paradigm differs from the one-sided-device-independent scenario in that only the measurement statistics are required to behave quantumly and no trust is placed on any of the subsystems. Finally, it is important to use our protocols as a building block to design device- and one-sided device-independent randomness expansion and key distribution protocols from weak public sources of randomness.

{\it Acknowledgments.-}
R. R. acknowledges support from the Start-up Fund "Device-Independent Quantum Communication Networks" from The University of Hong Kong, the Seed Fund "Security of Relativistic Quantum Cryptography" (Grant No. 201909185030) and the Early Career Scheme (ECS) grant "Device-Independent Random Number Generation and Quantum Key Distribution with Weak Random Seeds" (Grant No. 27210620). P.H. and M.B. acknowledge support from the Foundation for Polish Science (IRAP project, ICTQT, contract no. 2018/MAB/5, co-financed by EU within Smart Growth Operational Programme).




\section{Supplemental Material: Proof of Security of the Device-Independent Randomness Amplification Protocol I}
Here, we give the formal proof of composable security for the device-independent protocol I for randomness amplification from public SV sources presented in the main text. 

\subsection{The Bell inequality.}
\label{sec:Bell-ineq}
The Bell inequality tested in a device-independent randomness amplification protocol against no-signaling adversaries is required to obey specific properties: (i) it should not be possible to simulate the quantum value of the Bell expression using classical strategies, under any bias $\epsilon$ of the source; 
(ii) the Bell test must certify randomness against a no-signaling adversary, i.e., there must exist a hashing function $f: |\textbf{A}| \times |\textbf{B}| \times |\textbf{X}| \times |\textbf{Y}| \rightarrow \{0,1\}$ which when applied to any no-signaling box $P_{\textbf{A}, \textbf{B} | \textbf{X}, \textbf{Y}}(\textbf{a},\textbf{b}| \textbf{x}, \textbf{y})$ that achieves the quantum value, obeys $P(f(\textbf{a}, \textbf{b}, \textbf{x}, \textbf{y}) | \textbf{x}, \textbf{y}) < 1$

In Protocol I, we test for non-local quantum correlations by means of a modified version of Hardy's ladder test of non-locality \cite{Hardy93, Hardy97, BBMH97, Cereceda02, Cereceda15}. This is a tilted version of the chained Bell inequality used in \cite{CR12} that allows for arbitrary measurement independence \cite{PRB+14}, besides satisfying strong monogamy properties \cite{RH14}. In the test, we consider two honest parties Alice and Bob who each perform measurements of $N+1$ dichotomic observables. That is, in each run of the protocol, Alice and Bob choose measurements given by $\textbf{x}, \textbf{y} \in \{0, \dots, N\}$ respectively and obtain the corresponding outcomes $\textbf{a}, \textbf{b} \in \{0,1\}$. The following Hardy conditions then yield a contradiction between quantum mechanics and local realistic theories.
 \begin{eqnarray}
 \label{eq:Hardy-zeros}
&&P_{\textbf{A}, \textbf{B}| \textbf{X}, \textbf{Y}}(0,1|k,k-1) = 0 \; \; \; \; \forall k \in \{1,\dots, N\} \nonumber \\
&&P_{\textbf{A}, \textbf{B} | \textbf{X}, \textbf{Y}}(1,0|k-1,k) = 0 \; \; \; \; \forall k \in \{1, \dots, N\}  \nonumber \\
&&P_{\textbf{A}, \textbf{B}| \textbf{X}, \textbf{Y}}(0,0|0,0) = 0,
\end{eqnarray}
 and 
 \begin{equation}
 \label{eq:Hardy-prob}
P_{\text{H}} = P_{\textbf{A}, \textbf{B}| \textbf{X}, \textbf{Y}}(0,0|N,N) \neq  0.
 \end{equation}
It can be readily verified that for all local deterministic boxes, the conditions in Eq.\eqref{eq:Hardy-zeros} guarantee that the probability $P_{\textbf{A}, \textbf{B}| \textbf{X}, \textbf{Y}}(0,0|N,N) = 0$, so that a non-zero value of this quantity indicates a contradiction with the notion of local realism. 

The amount of violation of local realism can be quantified in terms of the expression
\begin{widetext}
\begin{eqnarray}
\label{eq:Hardy-ineq-1}
\mathcal{I}\left(P_{\textbf{A}, \textbf{B}| \textbf{X}, \textbf{Y}} \right) &=& \left[\frac{1}{2} - P_{\textbf{A}, \textbf{B}| \textbf{X}, \textbf{Y}}(0,0|N,N) \right] + \sum_{k = 1}^{N} \bigg[ P_{\textbf{A}, \textbf{B}| \textbf{X}, \textbf{Y}}(0,1|k,k-1) + P_{\textbf{A}, \textbf{B}| \textbf{X}, \textbf{Y}}(1,0|k-1,k) \bigg]  \nonumber \\ && +P_{\textbf{A}, \textbf{B}| \textbf{X}, \textbf{Y}}(0,0|0,0) \nonumber \\
\end{eqnarray}
\end{widetext}
The minimum value of $\mathcal{I}\left(P_{\textbf{A}, \textbf{B}| \textbf{X}, \textbf{Y}} \right)$ over all local realistic boxes $\big\{P_{\textbf{A}, \textbf{B}| \textbf{X}, \textbf{Y}}\big\}$ is $\frac{1}{2}$, while the minimum value over all boxes that obey the no-signaling constraints is $0$. The optimal no-signaling strategy is achieved by a generalization of the Popescu-Rohrlich box \cite{Rohlich-Popescu} which assigns $P_{\textbf{A}, \textbf{B}| \textbf{X}, \textbf{Y}}(0,0|N,N) = \frac{1}{2}$ and value $0$ to the rest of the probabilities that appear in Eq.\eqref{eq:Hardy-ineq-1}. 

Another way to view this inequality is as a set of constraints that impose that all probabilities in the Hardy set be zero. Specifically, these Hardy constraints are given by setting to zero the quantity
\begin{widetext}
\begin{eqnarray}
\label{eq:Hardy-ineq-2}
\mathcal{I}_0\left(P_{\textbf{A}, \textbf{B}| \textbf{X}, \textbf{Y}} \right) = \sum_{k = 1}^{N} \bigg[ P_{\textbf{A}, \textbf{B}| \textbf{X}, \textbf{Y}}(0,1|k,k-1) + P_{\textbf{A}, \textbf{B}| \textbf{X}, \textbf{Y}}(1,0|k-1,k) \bigg] + P_{\textbf{A}, \textbf{B}| \textbf{X}, \textbf{Y}}(0,0|0,0),
\end{eqnarray}
\end{widetext}
which implies in a local realistic theory the constraint $P_{\textbf{A}, \textbf{B}| \textbf{X}, \textbf{Y}}(0,0|N,N) = 0$, while this last probability can reach close to the value $1/2$ in quantum theory under the same Hardy constraints. 

The optimal quantum strategy achieves a value close to $1/2$ for the probability $P_{\textbf{A}, \textbf{B}| \textbf{X}, \textbf{Y}}(0,0|N,N)$, and thus a value close to $0$ for the quantity $\mathcal{I}\left(P_{\textbf{A}, \textbf{B}| \textbf{X}, \textbf{Y}} \right)$. Specifically, Alice and Bob share copies of the state 
\begin{eqnarray}
| \psi \rangle = \alpha |00 \rangle - \beta | 11 \rangle,
\end{eqnarray}
where $\alpha, \beta \in \mathbb{C}$ are parameters depending on $N$, that obey $|\alpha|^2 + | \beta|^2 = 1$. Alice and Bob perform on their half of the shared state,  measurements given by the following observables
\begin{eqnarray}
A_k &=& \sum_{j = 0, 1} (-1)^j \Pi_{k}^{j}, \nonumber \\
B_k &=& \sum_{j=0, 1} (-1)^j \Sigma_{k}^{j}.
\end{eqnarray}
Here, the projectors $\Pi_k^j = |\pi_k^j \rangle \langle \pi_k^j|$ and $\Sigma_k^j = |\sigma_k^j \rangle \langle \sigma_k^j|$ are in general given by
\begin{eqnarray}
| \pi_k^0 \rangle &=& \cos a_k | 0 \rangle + \sin a_k | 1 \rangle, \nonumber \\
| \pi_k^1 \rangle &=& - \sin a_k | 0 \rangle + \cos a_k | 1 \rangle,
\end{eqnarray}
and 
\begin{eqnarray}
| \sigma_k^0 \rangle &=& \cos b_k | 0 \rangle + \sin b_k | 1 \rangle, \nonumber \\
| \sigma_k^1 \rangle &=& - \sin b_k | 0 \rangle + \cos b_k | 1 \rangle,
\end{eqnarray}
for angles $a_k, b_k \in [0, 2 \pi]$. In order to set all the Hardy constraint probabilities to zero, we obtain the conditions 
\begin{eqnarray}
\tan a_k \cot b_{k-1} &=& - \frac{\alpha}{\beta}, \; \; \forall k = 1, \dots, N \nonumber \\
\cot a_{k-1} \tan b_k &=& - \frac{\alpha}{\beta}, \; \; \forall k = 1, \dots, N \nonumber \\
\tan a_0 \tan b_0 &=& \frac{\alpha}{\beta},
\end{eqnarray}
for any fixed parameters $\alpha, \beta$ defining the state. These conditions give that
\begin{eqnarray}
\tan b_N \tan a_N &=& \left( \frac{\alpha}{\beta} \right)^{2N+1}.
\end{eqnarray}
With these parameters, we obtain the Hardy probability $P_{\text{H}} = P_{\textbf{A}, \textbf{B}| \textbf{X}, \textbf{Y}}(0,0|N,N)$ to be given by
\begin{eqnarray}
P_{\text{H}} = P_{\textbf{A}, \textbf{B}| \textbf{X}, \textbf{Y}}(0,0|N,N) = \alpha^2 \left[1 - \left(\frac{\alpha}{\beta} \right)^{2N} \right]^2 \frac{\cos^2 a_N}{1 + \left(\frac{\alpha}{\beta}\right)^{4N+2} \cot^2 a_N }.
\end{eqnarray}
For fixed $\alpha, \beta$, we find the optimal value of $a_N$ to maximize $P_{\text{H}}$. Differentiating and setting $P_{\text{H}}$ to zero gives the optimal value to be
\begin{eqnarray}
a_N = \arctan \left[ \left( \frac{\alpha}{\beta} \right)^{N+1/2} \right].
\end{eqnarray}
This then gives that $\tan a_k = \tan b_k$ for all $k = 0, \dots, N$ or in other words that $b_k = a_k + m \pi$ for each $k$. Substituting gives the Hardy probability to be
\begin{eqnarray}
P_{\text{H}} = \max_{\alpha, \beta} \left( \frac{\alpha \beta^{2N+1} - \beta \alpha^{2N+1}}{\beta^{2N+1} + \alpha^{2N+1}} \right)^2,
\end{eqnarray}
which is similar to the expression originally derived by Hardy for this ladder scenario. Hardy had argued from this expression that $P_{\text{H}}$ tends to $1/2$ in the limit of large $N$, with the state getting close to a maximally entangled state. For our purposes though, this is not sufficient, we need to identify the exact dependence on $N$ of the expression $1/2 - P_{\text{H}}$. 

To this end, let us set $x = \alpha/\beta$ and write 
\begin{eqnarray}
P_{\text{H}} = \max_{0 \leq x \leq 1} \frac{x^2}{1+x^2} \left(\frac{1 - x^{2N}}{1+x^{2N+1}} \right)^2.
\end{eqnarray}
This gives that
\begin{eqnarray}
\mathcal{I}\left(P_{\textbf{A}, \textbf{B}| \textbf{X}, \textbf{Y}} \right) = \min_{0 \leq x \leq 1} \frac{1}{2} -  \frac{x^2}{1+x^2} \left(\frac{1 - x^{2N}}{1+x^{2N+1}} \right)^2.
\end{eqnarray}
To identify the behavior of the optimal value of $\mathcal{I}\left(P_{\textbf{A}, \textbf{B}| \textbf{X}, \textbf{Y}} \right)$ as a function of $N$, we will use an ansatz $x^*(N) = c^{N^{-d}}$ for reals $0 < c, d < 1$ so that
\begin{eqnarray}
\mathcal{I}\left(P_{\textbf{A}, \textbf{B}| \textbf{X}, \textbf{Y}} \right) &\leq & \frac{1}{2} -  \frac{(x^*(N))^2}{1+(x^*(N))^2} \left(\frac{1 - (x^*(N))^{2N}}{1+(x^*(N))^{2N+1}} \right)^2, \nonumber \\
& = &\frac{1}{2} -  \frac{c^{2N^{-d}}}{1+c^{2N^{-d}}} \left(\frac{1 - c^{2N^{1-d}}}{1+c^{2N^{1-d} + N^{-d}}} \right)^2.
\end{eqnarray}
Now, noting that $0 < c, d <1$ so that $c^{2N^{1-d}} \rightarrow 0$ and $c^{N^{-d}} \rightarrow 1$, we write
\begin{eqnarray}
\mathcal{I}\left(P_{\textbf{A}, \textbf{B}| \textbf{X}, \textbf{Y}} \right) &\leq &  \frac{1}{2} - \frac{c^{2N^{-d}}}{1+c^{2N^{-d}}} \left( 1 - c^{2N^{1-d}} - c^{2N^{1-d} + N^{-d}} + c^{4N^{1-d} + N^{-d}} \right)^2.
\end{eqnarray}
We thus get
\begin{eqnarray}
\mathcal{I}\left(P_{\textbf{A}, \textbf{B}| \textbf{X}, \textbf{Y}} \right) &\leq & \frac{-1}{2} \log_e c \; N^{-d} + \frac{1}{6} (\log_e c)^3 N^{-3d} - \frac{1}{15} (\log_e c)^5 N^{-5d} + O(N^{-7d}).
\end{eqnarray}
We can choose $d < 1$ to say $d = 0.99$. To fix the value of $c$, we use the value of $\mathcal{I}\left(P_{\textbf{A}, \textbf{B}| \textbf{X}, \textbf{Y}} \right)$ for $N = 1$. We know that in this case the Hardy probability is $p_H = \frac{5 \sqrt{5}-11}{2} \approx 0.09$ \cite{Hardy93} so that $\mathcal{I}\left(P_{A,B|X,Y} \right) = \frac{12 - 5 \sqrt{5}}{2} \approx 0.41$. This gives that $c \approx 0.4643$ so that we obtain 
\begin{eqnarray}
\mathcal{I}\left(P_{\textbf{A}, \textbf{B}| \textbf{X}, \textbf{Y}} \right) = 0.384 N^{-d} + O(N^{-3d}),
\end{eqnarray}
with $d = 0.99$, where we used $(-1/2) \log_e c \approx 0.384$. We thus see that the quantum value of the quantity $\mathcal{I}\left(P_{\textbf{A}, \textbf{B}| \textbf{X}, \textbf{Y}} \right)$ reaches its optimal no-signaling value of $0$ at the rate $N^{-d}$ with $d < 1$. 

\subsubsection{Measurement-dependent locality inequalities.}
\label{subsec:MDL-ineq}

We here present an intuitive explanation for why the ladder Hardy inequality considered in this article allows for the extraction of randomness from public SV sources, while previous approaches did not. The reason is that the ladder Hardy inequality allows to test for the violation of local realism with an arbitrarily small amount of measurement independence (aka free will). This is particularly evident when the ladder Hardy inequality is cast as a measurement-dependent locality (MDL) inequality \cite{PRB+14}.

The MDL inequalities were designed to certify quantum non-locality in the situation of limited measurement-independence. In our scenario where each party chooses from among $N+1$ binary measurements using an $\epsilon$-SV source, the MDL inequality corresponding to the ladder Hardy paradox is written as 
\begin{eqnarray}
\label{eq:MDL-gen}
&&\left( \frac{1}{2} - \epsilon \right)^{2 \log_2 (N+1)} P_{\textbf{A}, \textbf{B}, \textbf{X}, \textbf{Y}}(0,0,N,N) - \nonumber \\
&&\left( \frac{1}{2} + \epsilon \right)^{2 \log_2 (N+1)} \left( \sum_{k = 1}^{N} \bigg[ P_{\textbf{A}, \textbf{B}, \textbf{X}, \textbf{Y}}(0,1,k,k-1) + P_{\textbf{A}, \textbf{B}, \textbf{X}, \textbf{Y}}(1,0,k-1,k) \bigg] + P_{\textbf{A}, \textbf{B}, \textbf{X}, \textbf{Y}}(0,0,0,0) \right) \leq 0, \nonumber \\
\end{eqnarray}
Firstly note that this is an inequality over the joint distributions $P_{\textbf{A}, \textbf{B}, \textbf{X}, \textbf{Y}}$ rather than the conditional distribution $P_{\textbf{A}, \textbf{B}| \textbf{X}, \textbf{Y}}$ that we have been dealing with thus far. Secondly, note that by the Hardy constraints any classical box for which $P_{\textbf{A}, \textbf{B}| \textbf{X}, \textbf{Y}}(0,0|N,N) > 0$ satisfies the condition that at least one of the probabilities belonging to the Hardy constraint set is also non-zero. This implies that even when the $\epsilon$-SV source generates the input $(N,N)$ with probability $P_{\textbf{X}, \textbf{Y}}(N, N) = \left(\frac{1}{2} + \epsilon \right)^{2 \log (N+1)}$, the inequality in \eqref{eq:MDL-gen} cannot be violated by a classical box. On the other hand, the corresponding quantum strategies satisfy the Hardy constraints and achieve $P_{\textbf{A}, \textbf{B}| \textbf{X}, \textbf{Y}}(0,0|N,N) > 0$, thereby violating the inequality. In other words, the ladder Hardy inequality when cast in this MDL form, shows a violation of local realism for any value of $\epsilon$ in the range $[0, 1/2)$.  

This also means that if the honest parties observe a violation of the MDL inequality, then they can infer that the measurement outputs possess some min-entropy for any arbitrary value of $\epsilon \in [0, 1/2)$. At first glance, one might then suggest a randomness amplification protocol such as proposed in \cite{our-4} for our task of extracting randomness from public sources against no-signaling adversaries. However, it turns out that this is not the case. This is because, while the violation of an MDL inequality allows us to infer the presence of randomness (specifically some amount of min-entropy), we still require an extraction procedure to generate fully uniform private random bits from such a min-entropy source. As mentioned earlier, the quantification and extraction from min-entropy sources against a no-signaling adversary is a difficult problem, that requires the development of new tools. We therefore adopt a different procedure in this paper, in which the extraction procedure is explicitly given by the honest parties picking a run and generating the output random bit from the measurement outcomes in that run. Since one cannot infer the perfect randomness in measurement outcomes directly from the amount of MDL inequality violation, the procedures of \cite{our-4} do not directly apply.

\subsection{Composable Security Definition for Public Source Randomness Amplification} 

We begin by describing the scenario for the device-independent randomness amplification protocol. Let us recall that the $\epsilon$-SV source is defined by the condition that bits $b_i$ produced by the source obey
\begin{equation}
\label{eq:SVdef-sup}
\frac{1}{2} - \epsilon \leq \text{Pr}(b_i = 0 | b_{i-1}, \dots, b_1) \leq \frac{1}{2} + \epsilon
\end{equation}
for some $0 \leq \epsilon < \frac{1}{2}$. The honest parties and adversary Eve share a no-signaling box $\big\{P_{A,B,W|X',Y',Z}\big\}$ where $X' = \left(\textbf{X'}_1, \dots, \textbf{X'}_M \right)$, $Y' =\left(\textbf{Y'}_1, \dots, \textbf{Y'}_M \right)$ and $A = \left(\textbf{A}_1, \dots, \textbf{A}_M \right)$, $B = \left(\textbf{B}_1, \dots, \textbf{B}_M \right)$ are the random variables denoting the inputs and outputs respectively of the honest parties for the $M$ runs of the Protocol I. $W$ and $Z$ are the random variables denoting the output and input respectively of the adversary Eve. 

The honest parties draw bits $X, Y$ (of size $\log (N+1)$ for each of the $M$ runs) from the $\epsilon$-SV source to input into the box, i.e., they set $X' = X$ and $Y' = Y$. As defined, $\mathcal{S}_{\text{H}}$ is the subset of measurement runs in which the measurement settings involved in the Hardy ladder test \eqref{eq:Hardy-ineq-1} appear, i.e., $\mathcal{S}_{\text{H}} := \big\{ i \in \{1, \dots, M \}| (\textbf{x}_i, \textbf{y}_i) = (k, k-1) \; \text{for} \; k \in \{1, \dots, N\} \vee (\textbf{x}_i, \textbf{y}_i) = (k-1, k) \; \text{for} \; k \in \{1, \dots, N\} \vee (\textbf{x}_i, \textbf{y}_i) = (0, 0) \vee (\textbf{x}_i, \textbf{y}_i) = (N, N) \big\}$. The parties check that the size of the set $\mathcal{S}_{\text{H}}$ satisfies $|\mathcal{S}_{\text{H}}| \in \left[\frac{M}{(N+1)}, \frac{3M}{(N+1)}\right]$, and only accept if this condition is satisfied. 
We therefore obtain that 
\begin{eqnarray}
\text{P}_{\textbf{X}_i, \textbf{Y}_i}(\textbf{x}_i, \textbf{y}_i) \geq p_{\min}(\epsilon, N),
\end{eqnarray}
for all inputs $(\textbf{x}_i, \textbf{y}_i) \in \mathcal{S}_{\text{H}}$. 
Here 
\begin{eqnarray}
p_{\min}(\epsilon, N) := \frac{1}{2(N+1)} \left(\frac{1 - 2 \epsilon}{1 + 2 \epsilon} \right)^{2 \log (N+1)}
\end{eqnarray}
is the minimum probability of choosing any particular pair of inputs $\left(\textbf{X}_i, \textbf{Y}_i \right)$ from the $\epsilon$-SV source (normalized by the total probability of choosing the $2(N+1)$ inputs in the Hardy ladder test). 

The adversary has classical information $E$ correlated to $X, Y$ (note that in future, it would be interesting to consider no-signaling side information about the source as well). The box we consider for the protocol is given by the family of probability distributions $\big\{P_{A, B, W, X, Y, E | X', Y', Z} \big\}$.

The box $\big\{P_{A, B, W, X, Y, E | X', Y', Z} \big\}$ is only restricted by the constraints of no-signaling between Alice and Bob, and time-ordered no-signaling within Alice's (and Bob's) laboratory. Specifically, the assumptions are 
\begin{eqnarray}
\label{eq:ns1}
P_{A, B|X', Y', Z} &=& P_{A, B | X', Y'} \nonumber \\
P_{A, W| X', Y', Z} &=& P_{A, W | X', Z} \nonumber \\
P_{B, W | X', Y', Z} &=& P_{B, W | Y', Z}
\label{eq:as-nosig}
\end{eqnarray}

Each device component also obeys a time-ordered no-signaling (\texttt{TONS}) condition for the $k \in [M]$ runs performed on it:
\begin{eqnarray}
\label{eq:tons1}
&&P_{\textbf{A}_k | Z, W, X', Y', X, Y, E} = P_{\textbf{A}_k | Z, W, X'_{\leq k}, X, Y, E} \; \; \; \; \forall k \in [M], \nonumber \\
&&P_{\textbf{B}_k | Z, W, X', Y', X, Y, E} = P_{\textbf{B}_k | Z, W, Y'_{\leq k}, X, Y, E} \; \; \; \; \forall k \in [M].
\end{eqnarray}
Here, note that $\textbf{A}_k, \textbf{B}_k$ denote the outputs for the $k$-th run. 

In words, the main assumptions are that the different components of the device do not signal to each other and to the adversary Eve. Additionally, there is also a time-ordered no-signaling (\texttt{TONS}) structure assumed on different runs of a single party, the outputs in any run may depend on the previous inputs within the party's component but not on future inputs.

We make three additional assumptions that are standard in the literature on randomness amplification (except in the protocol of \cite{CSW14} which requires a polynomially large number of devices). 
\begin{enumerate}
\item \textbf{Assumption A1:} The probability distribution $P_{X,Y,E}$ satisfies the $\epsilon$-SV condition in \eqref{eq:SVdef-sup}.

\item \textbf{Assumption A2:} The devices do not signal to the SV source, i.e. the distribution of
$(X,Y, E)$ is independent of the inputs $(X',Y', Z)$, i.e.,
\begin{eqnarray}
\sum_{a, b, w} P_{A,B,W,X,Y,E|X',Y',Z}(a,b,w,x,y,e|x',y',z) = P_{X,Y,E}(x,y,e) \quad \forall (x,y,e,x',y',z)
\end{eqnarray}

\item \textbf{Assumption A3:} The box is fixed independently of the SV source, i.e.,
\begin{eqnarray}
P_{A,B,W|X',Y',W,X,Y,E}(a,b,w|x',y',z,x,y,e) &=& P_{A,B,W|X',Y',Z}(a,b,w|x',y',z) \nonumber \\ && \qquad \qquad \forall (a,b,w,x',y',z,x,y,e) 
\end{eqnarray}

\end{enumerate}

Assumption A1 comes from the fact that the random variables $X,Y,E$ correspond to the $\epsilon$-SV source. Assumption A2 is a natural assumption that clarifies the notion of inputting random variables ($X,Y$) from the $\epsilon$-SV source into the box (as $X', Y'$). Assumption A3 is the assumption of source-device independence, i.e., that the device does not change depending on bits extracted from the SV source. This assumption is also employed in most of the results on device-independent randomness amplification including \cite{CR12, BRGH+16, RBHH+16, GMTD+13} and can be viewed as the Markov assumption in \cite{KAF17}. The one result that does not consider this assumption is the protocol of \cite{CSW14}, in which as mentioned a polynomially large number of devices is needed. The assumption A3 may be viewed as the quantum analogue of the problem of extracting randomness from one source and an independent (but unknown and arbitrary) channel. While no randomness can be extracted if the channel is classical, the situation is different when considering quantum non-local correlations.


\subsubsection{Public SV source.}
Our assumptions as stated are the standard ones in the literature on the topic. In order to show that the final output bit $R_r$ is uniform and private when the bits $X,Y$ are chosen from the public $\epsilon$-SV source, we need to first consider the security definition, and in particular how the public source changes this definition. 
After inputting $X, Y$ as $X', Y'$, we obtain $\big\{ P_{A,B,W,X,Y,E|X,Y,Z} \big\}$. 
The first test in Protocol I checks that $\textit{Z}_{\text{H}}(A,B,X,Y) := 
\sum_{i \in \mathcal{S}_{\text{H}}} B_{\text{H}}(\textbf{a}_i, \textbf{b}_i, \textbf{x}_i, \textbf{y}_i) \leq \delta_1$, for fixed $\delta_1 > 0$.
Here $B_{\text{H}}(\textbf{a}_i, \textbf{b}_i, \textbf{x}_i, \textbf{y}_i)$ is an indicator function that takes value $1$ when $(\textbf{a}_i, \textbf{b}_i, \textbf{x}_i, \textbf{y}_i)$ corresponds to one of the 'zero' probabilities from the set of Hardy constraints, i.e., 
\begin{equation}
 \label{eq:Hardy-z}
B_{\text{H}}(\textbf{a}_i, \textbf{b}_i, \textbf{x}_i, \textbf{y}_i)=  \begin{cases} 

    1 & (\textbf{x}_i,\textbf{y}_i) = (k,k-1) \wedge (\textbf{a}_i, \textbf{b}_i) = (0,1) \; \; \; \; \forall k \in \{1, \dots, N\} \\
       1 & (\textbf{x}_i,\textbf{y}_i) = (k-1,k) \wedge (\textbf{a}_i, \textbf{b}_i) = (1,0) \; \; \; \; \forall k \in \{1, \dots, N\} \\
       1 & (\textbf{x}_i,\textbf{y}_i) = (0,0) \wedge (\textbf{a}_i, \textbf{b}_i) = (0,0) \\
      0 & \text{otherwise}
   \end{cases}
\end{equation}
We define the set of output-inputs $(A,B,X,Y)$ that pass this test for fixed $\delta_1 > 0$ as the set $\texttt{ACC}_1$, i.e., 
\begin{eqnarray}
\texttt{ACC}_1 :=  \bigg\{ (A,B,X,Y) \big| \textit{Z}_{\text{H}}(A,B,X,Y)  \leq \delta_1 \bigg\}.
\end{eqnarray}
The second test in Protocol I checks that 
\begin{eqnarray}
Z_{\text{H-nz}}(A,B,X,Y) := \frac{1}{|\mathcal{S}_{\text{H-nz}}|} \sum_{i \in \mathcal{S}_{\text{H-nz}}} B_{\text{H-nz}}(\textbf{a}_i, \textbf{b}_i, \textbf{x}_i, \textbf{y}_i) \geq \frac{1}{2} - \delta_2, \nonumber \\
\end{eqnarray}
for fixed $\delta_2 > 0$. Here, $\mathcal{S}_{\text{H-nz}}$ is the subset of measurement runs in which the measurement setting $(N,N)$ appears, i.e., $\mathcal{S}_{\text{H-nz}} := \big\{i \in \{1, \dots, M\} | (\textbf{x}_i, \textbf{y}_i) = (N,N) \big\}$. And $B_{\text{H-nz}}(\textbf{a}_i, \textbf{b}_i, \textbf{x}_i, \textbf{y}_i)$ is an indicator function that takes value $1$ when $(\textbf{a}_i, \textbf{b}_i, \textbf{x}_i, \textbf{y}_i)$ corresponds to the Hardy probability $P_{\text{H}}$, i.e., 
\begin{equation}
\label{eq:B-Hardy-nz}
B_{\text{H-nz}}(\textbf{a}_i, \textbf{b}_i, \textbf{x}_i, \textbf{y}_i)=  \begin{cases} 
      1 & (\textbf{x}_i, \textbf{y}_i) = (N,N) \wedge (\textbf{a}_i, \textbf{b}_i) = (0,0)  \\
      0 & \text{otherwise}
   \end{cases}
\end{equation}
We define the set of output-inputs $(A,B,X,Y)$ that pass this test for fixed $\delta_2 > 0$ as the set $\texttt{ACC}_2$, i.e., 
\begin{eqnarray}
\texttt{ACC}_2:=  \bigg\{ (A,B,X,Y) \big| \textit{Z}_{\text{H-nz}}(A,B,X,Y) \geq \frac{1}{2} - \delta_2 \bigg\}
\end{eqnarray}
We define the set $\texttt{ACC}$ as the set of output-inputs $(A,B,X,Y)$ that pass both tests of Protocol I, i.e., $\texttt{ACC} = \texttt{ACC}_1 \cap \texttt{ACC}_2$. Upon acceptance of the protocol, we obtain the family of probability distributions $P_{A,B,W,X,Y,E|Z, \texttt{ACC}}$. The final output bit $R_r$ is chosen as a function of the output bits $A,B$ after which we end up with the family of distributions $P_{R_r,W,X,Y,E|Z, \texttt{ACC}}$. In this case, the procedure to obtain $R_r$ is for the honest parties to choose a run $r \in S_{\text{H-nz}}$ using bits from the $\epsilon$-SV source and output the bit $R_r$ obtained from the measurement outcomes of the $r$-th run as
 \[R_r =  \begin{cases} 
      0 & (\textbf{a}_r, \textbf{b}_r) = (0,0) \\
      1 & (\textbf{a}_r, \textbf{b}_r) \neq (0,0)
   \end{cases}
\]

To quantify the quality of the output $R_r$, we use the universally composable distance $d_{\text{comp}}$ \cite{BHLMO05, RK05}. This is defined as
\begin{eqnarray}
d_{\text{comp}} := \sum_{s = 0,1} \sum_{x,y,e} \max_z \sum_{w} \bigg| P_{R_r, W,X,Y,E|Z, \texttt{ACC}}(s,w,x,y,e|z, \texttt{ACC}) - \frac{1}{2} P_{W,X,Y, E|Z, \texttt{ACC}}(w,x,y,e|z, \texttt{ACC}) \bigg| \nonumber \\
\end{eqnarray}
Here $s$ is the value of the final bit $R_r$ given to the distinguisher attempting to distinguish between the distribution $P_{R_r, W,X,Y,E|Z, \texttt{ACC}}(r,w,x,y,e|z, \texttt{ACC})$ from the ideal distribution $\frac{1}{2} P_{W, X,Y,E|Z, \texttt{ACC}}(w,x,y,e|z, \texttt{ACC})$. The maximization occurs over inputs $z$ belonging to the adversary and the fact that this maximization depends on the values of $x,y$ (as well as his side information $e$) reflects the fact that the $\epsilon$-SV source is public, i.e., the inputs $x,y$ become known to the adversary and the adversary can choose a different input $z$ depending on the $x, y$.

\subsection{Randomness of outcomes in individual runs of the device}

Statistical tests prescribed by the protocol are performed on the outcomes of measurements for given inputs. We shall show that the tests have the property that the final output bit $R_r$ from any no-signaling box $\big\{P_{A, B, W, X, Y, E | X', Y', Z} \big\}$ that passes the tests must necessarily be arbitrarily free.


From the SV source condition \eqref{eq:SVdef-sup}, the probability that the parties draw, in the $j$-th run any specific pair of measurement settings $(x_j,y_j)$ is bounded as
\begin{eqnarray}
\left(\frac{1}{2} - \epsilon \right)^{2 \log (N+1)} \leq P_{\textbf{X}_j, \textbf{Y}_j | X_{< j}, Y_{< j}}(\textbf{x}_j, \textbf{y}_j | x_{< j}, y_{< j}) \leq \left( \frac{1}{2} + \epsilon \right)^{2 \log (N+1)}, \nonumber \\
\end{eqnarray}
where $X_{< j} := \left(\textbf{X}_1, \dots, \textbf{X}_{j-1} \right)$ and $Y_{< j} := \left(\textbf{Y}_1, \dots, \textbf{Y}_{j-1} \right)$.
Since only $2(N+1)$ out of the $(N+1)^2$ settings appear in the inequality, the probability for any measurement setting within the Hardy set after the parties throw away the remaining measurements is bounded as 
\begin{eqnarray}
P_{\textbf{X}_j, \textbf{Y}_j | X_{< j}, Y_{< j}}(\textbf{x}_j, \textbf{y}_j | x_{< j}, y_{< j}) \geq p_{\min}(\epsilon, N).
\end{eqnarray}
 


We first consider how the quantities $\mathcal{I}\left(P_{\textbf{A}, \textbf{B}|\textbf{X}, \textbf{Y}} \right)$ and $\mathcal{I}_0\left(P_{\textbf{A}, \textbf{B}| \textbf{X}, \textbf{Y}} \right)$ can allow us to certify randomness in the device outcomes in individual runs of the device. 
\begin{lemma}
\label{lem:single-run-bound}
Consider a two-party no-signaling box $\{P_{\textbf{A}, \textbf{B}| \textbf{X}, \textbf{Y}} \}$ with $A, B \in \{0,1\}$ and $X, Y \in \{0, \dots, N\}$ which obeys the condition $\mathcal{I}_0\left(P_{\textbf{A}, \textbf{B} | \textbf{X}, \textbf{Y}} \right) \leq \kappa_1$ for some fixed $\kappa_1 > 0$, where $B_{\text{H}}(a_i, b_i, x_i, y_i)$ is given by Eq.(\ref{eq:Hardy-z}). Then we have that the Hardy probability $P_{\textbf{A}, \textbf{B}|\textbf{X}, \textbf{Y}}(0,0|N, N)$ satisfies
\begin{eqnarray}
\label{eq:single-bound}
P_{\textbf{A}, \textbf{B}|\textbf{X}, \textbf{Y}}(0,0|N, N) \leq \frac{1}{2} \left( 1 + \kappa_1 \right).
\end{eqnarray}
\end{lemma}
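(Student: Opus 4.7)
The plan is to prove the stronger bound $P(0,0|N,N) \leq (1+\mathcal{I}_0)/2 \leq (1+\kappa_1)/2$ by a telescoping argument down the Hardy ladder. Write $\alpha_k := P(0,1|k,k-1)$ and $\beta_k := P(1,0|k-1,k)$ for $k = 1, \dots, N$, and $\gamma := P(0,0|0,0)$, so that $\mathcal{I}_0 = \gamma + \sum_{k=1}^{N}(\alpha_k + \beta_k)$.

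First I would establish a \emph{top-rung} inequality by combining the two no-signaling identities $P_A(0|N) = P(0,0|N,N) + P(0,1|N,N) = P(0,0|N,N-1) + \alpha_N$ and $P_B(0|N) = P(0,0|N,N) + P(1,0|N,N) = P(0,0|N-1,N) + \beta_N$; adding them and dropping the non-negative terms $P(0,1|N,N)$ and $P(1,0|N,N)$ yields
\begin{equation*}
2 P(0,0|N,N) \leq P(0,0|N,N-1) + P(0,0|N-1,N) + \alpha_N + \beta_N.
\end{equation*}
This is the step that produces the factor $2$ on the left-hand side, and hence the crucial factor $1/2$ in the final bound (as opposed to the weaker $P(0,0|N,N) \leq 1/2 + \mathcal{I}_0$ one would get from $\mathcal{I} \geq 0$ alone).

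Next I would telescope the right-hand side. Define $p_k := P(0,0|k+1,k) + P(0,0|k,k+1)$ for $k = 0,1,\dots,N-1$. For $k \geq 1$, applying the no-signaling identity $P_B(0|k) = P(0,0|x,k) + P(1,0|x,k)$ at $x = k+1$ (and dropping the non-negative $P(1,0|k+1,k)$) and at $x = k-1$ (using the Hardy-zero identification $P(1,0|k-1,k) = \beta_k$) gives $P(0,0|k+1,k) \leq P(0,0|k-1,k) + \beta_k$. The symmetric argument on Alice's side gives $P(0,0|k,k+1) \leq P(0,0|k,k-1) + \alpha_k$. Summing yields the recursion $p_k \leq p_{k-1} + \alpha_k + \beta_k$, which iterates to $p_{N-1} \leq p_0 + \sum_{k=1}^{N-1}(\alpha_k + \beta_k)$.

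Finally I would bound the \emph{bottom rung} $p_0 = P(0,0|1,0) + P(0,0|0,1) \leq 1 + \gamma$. The no-signaling relations give $P(0,0|1,0) \leq P_B(0|0) = \gamma + P(1,0|0,0)$ and $P(0,0|0,1) \leq P_A(0|0) = \gamma + P(0,1|0,0)$; summing and using normalization at input $(0,0)$, which implies $P(0,1|0,0) + P(1,0|0,0) = 1 - \gamma - P(1,1|0,0) \leq 1 - \gamma$, delivers the claim. Assembling the three steps yields $2 P(0,0|N,N) \leq p_{N-1} + \alpha_N + \beta_N \leq 1 + \mathcal{I}_0 \leq 1 + \kappa_1$, which is the statement of the lemma. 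The argument is essentially pure bookkeeping in the no-signaling polytope, so I do not anticipate any serious obstacle; the only mildly delicate point is identifying the correct telescoping pattern, since the Hardy zeros only connect adjacent input pairs of the form $(k,k-1)$ and $(k-1,k)$, so the reductions must zig-zag between $x$- and $y$-reductions, and the paired quantity $p_k$ (summing both ladder branches) is what makes the recursion clean.
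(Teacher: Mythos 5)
Your proof is correct and follows essentially the same route as the paper's: a zig-zag descent of the Hardy ladder via the no-signaling marginals, collecting one Hardy-zero term per rung and closing at input $(0,0)$ with the normalization bound $2P(0,0|0,0)+P(0,1|0,0)+P(1,0|0,0)\leq 1+P(0,0|0,0)$. Your explicit recursion $p_k \leq p_{k-1} + \alpha_k + \beta_k$ is just a cleaner bookkeeping of the chain of inequalities the paper writes out with ellipses.
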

\begin{proof}
The proof goes through repeated applications of the two-party no-signaling constraints \eqref{eq:ns1}. We have
\begin{eqnarray}
2P_{\textbf{A}, \textbf{B}|\textbf{X}, \textbf{Y}}(0,0|N, N) & \leq & \sum_{b = 0,1} P_{\textbf{A}, \textbf{B}|\textbf{X}, \textbf{Y}}(0,b|N, N-1) + \sum_{a = 0,1} P_{\textbf{A}, \textbf{B}|\textbf{X}, \textbf{Y}}(a,0|N-1, N) \nonumber \\
& \leq & P_{\textbf{A}, \textbf{B}|\textbf{X}, \textbf{Y}}(0,1|N,N-1) + P_{\textbf{A}, \textbf{B}|\textbf{X}, \textbf{Y}}(1,0|N-1, N) \nonumber \\
&&\;  + \sum_{a=0,1} P_{\textbf{A}, \textbf{B}|\textbf{X}, \textbf{Y}}(a,0|N-2,N-1) + \sum_{b=0,1} P_{\textbf{A}, \textbf{B}|\textbf{X}, \textbf{Y}}(0, b|N-1, N-2), \nonumber \\
& \leq & P_{\textbf{A}, \textbf{B}|\textbf{X}, \textbf{Y}}(0,1|N,N-1) + P_{\textbf{A}, \textbf{B}|\textbf{X}, \textbf{Y}}(1,0|N-1, N) \nonumber \\
&& \; + P_{\textbf{A}, \textbf{B}|\textbf{X}, \textbf{Y}}(1,0|N-2, N-1) + P_{\textbf{A}, \textbf{B}|\textbf{X}, \textbf{Y}}(0,1|N-1, N-2) \nonumber \\
&&\; + \sum_{b=0,1} P_{\textbf{A}, \textbf{B}|\textbf{X}, \textbf{Y}}(0, b|N-2, N-3) +\sum_{a=0,1} P_{\textbf{A}, \textbf{B}|\textbf{X}, \textbf{Y}}(a,0|N-3, N-2), \nonumber \\
&\leq & \dots \nonumber \\
&\leq & \sum_{k=1}^{N} \left[ P_{\textbf{A}, \textbf{B}|\textbf{X}, \textbf{Y}}(0,1|k,k-1) + P_{\textbf{A}, \textbf{B}|\textbf{X}, \textbf{Y}}(1,0|k-1,k) \right] + P_{\textbf{A}, \textbf{B}|\textbf{X}, \textbf{Y}}(0,0|0,0) \nonumber \\
&&\; P_{\textbf{A}, \textbf{B}|\textbf{X}, \textbf{Y}}(0,0|0,0) + P_{\textbf{A}, \textbf{B}|\textbf{X}, \textbf{Y}}(0,1|0,0) + P_{\textbf{A}, \textbf{B}|\textbf{X}, \textbf{Y}}(1,0|0,0) \nonumber \\
& \leq & 1 + \mathcal{I}_0\left(P_{\textbf{A}, \textbf{B}|\textbf{X}, \textbf{Y}}\right) \nonumber \\
& \leq & 1 + \kappa_1,
\end{eqnarray}
giving 
\begin{eqnarray}
P_{\textbf{A}, \textbf{B}|\textbf{X}, \textbf{Y}}(0,0|N, N) \leq \frac{1}{2} \left( 1 + \kappa_1 \right).
\end{eqnarray}
Note that in each step of the proof, we have used two no-signaling constraints \eqref{eq:ns1} corresponding to the two parties' output distribution being independent of the other party's input. The ellipses denote repeated application of the no-signaling constraints as in the previous steps to show the expression for general $N$.  
\end{proof}

\subsection{Amplification under the Quantum assumption}
Let us first consider the ideal scenario that $P_{\textbf{A}, \textbf{B}|\textbf{X}, \textbf{Y}}$ are exactly as predicted by Quantum Theory. This assumption (denoted as assumption $QT$ in \cite{CR12}) was used to show an amplification protocol for $\epsilon$-SV sources with $\epsilon < (\sqrt{2}-1)^2/2 \approx 0.086$ in \cite{CR12} which was subsequently improved to $\epsilon < 0.0961$ in \cite{GHH+14}. We first show that under the assumption of no-signaling and quantum theory, the $\epsilon$-SV source can be amplified to produce perfectly uniform bits for arbitrary initial $\epsilon < 0.5$ using the ladder Hardy correlations shown in the previous subsections. That is, we show the following Proposition.

\begin{proposition}
\label{prop:quant-assum}
There exists a protocol using a device consisting of two spatially separated components that takes as input bits $R_i$ from an $\epsilon$-SV source (of arbitrary fixed $\epsilon < 0.5$) and outputs a bit $R_r$ such that under the assumptions of no-signaling and quantum theory, the bit $R_r$ is arbitrarily free except with negligible probability.
\end{proposition}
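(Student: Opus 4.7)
The plan is to combine three ingredients already developed in the excerpt: (i) the explicit quantum strategy for the ladder Hardy test analyzed in Section \ref{sec:Bell-ineq}, whose Hardy probability $P_{\text{H}}$ approaches $1/2$ at rate $N^{-d}$ with $d<1$; (ii) the single-run bound in Lemma \ref{lem:single-run-bound}, which converts near-satisfaction of the Hardy zero constraints into the inequality $P_{\textbf{A},\textbf{B}|\textbf{X},\textbf{Y}}(0,0|N,N)\le(1+\kappa_1)/2$; and (iii) the pair of statistical tests $Z_{\text{H}}\le\delta_1$ and $Z_{\text{H-nz}}\ge 1/2-\delta_2$ in Protocol I, which together pin $P_{\textbf{A},\textbf{B}|\textbf{X},\textbf{Y}}(0,0|N,N)$ inside a window of size $O(\delta_1+\delta_2)$ around $1/2$. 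Given any target closeness $\eta>0$ to a free bit, I would first fix $N$ large enough that the ideal quantum ladder Hardy box satisfies $1/2-P_{\text{H}}<\eta/4$, and choose tolerances $\delta_1,\delta_2$ of the same order so that any no-signaling box that passes the tests is forced to live in an $\eta$-neighborhood of the ideal Hardy point.

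The second step is to lift this single-run picture to the $M$-run protocol. Conditioned on the acceptance event $\texttt{ACC}$, I would show that the empirical averages $Z_{\text{H}}$ and $Z_{\text{H-nz}}$ track the true per-run conditional expectations of the underlying quantum device. The natural tool is Freedman's martingale inequality \cite{Fre75}, applied to the martingales whose increments are $B_{\text{H}}(\textbf{a}_i,\textbf{b}_i,\textbf{x}_i,\textbf{y}_i)$ and $B_{\text{H-nz}}(\textbf{a}_i,\textbf{b}_i,\textbf{x}_i,\textbf{y}_i)$, using the SV bound $P_{\textbf{X}_i,\textbf{Y}_i|X_{<i},Y_{<i}}(\textbf{x}_i,\textbf{y}_i|\cdot)\ge p_{\min}(\epsilon,N)$ to control the conditional variance. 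Assumptions A1--A3 (the SV condition on $P_{X,Y,E}$, source--device independence, and no-signalling from device to source) guarantee that each per-run conditional distribution of the device is well-defined and stable under the SV history, so Freedman's inequality applies with only the standard rescaling by $p_{\min}(\epsilon,N)$. Combining this concentration estimate with Lemma \ref{lem:single-run-bound} yields, with probability $1-\text{negl}(M)$, $|P_{\textbf{A},\textbf{B}|\textbf{X},\textbf{Y}}(0,0|N,N)-1/2|\le\eta$ in every typical run, so that by the definition of $R_r$ the marginal $\Pr[R_r=0\mid\textbf{x}_r,\textbf{y}_r]$ is $\eta$-close to $1/2$.

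For privacy against the no-signaling adversary, I would invoke the no-signaling monogamy principle of \cite{RH14}: any tripartite extension $P_{A,B,W|X',Y',Z}$ whose bipartite Alice--Bob marginal nearly saturates the ladder Hardy paradox must have $W$ nearly uncorrelated with $(\textbf{a}_r,\textbf{b}_r)$, with a bound vanishing in $\delta_1+\delta_2$. The main obstacle is carrying out this last step under the public source assumption: since the index $r$ is drawn from the same SV bits, Eve's input $z$ can depend on the full string $(x,y,r)$, and the $\max_z$ in $d_{\text{comp}}$ sits outside the sum over $(x,y)$. My plan for handling this is to observe that the monogamy bound applies uniformly to the run-conditioned box $P_{\textbf{A}_r,\textbf{B}_r,W|\textbf{X}_r,\textbf{Y}_r,Z=z}$ for every fixed pair $(r,z)$, because the acceptance tests enforce the Hardy structure on every $i\in\mathcal{S}_{\text{H}}$ in the same way and the per-run device statistics are independent of $z$ by the no-signaling relations \eqref{eq:as-nosig}. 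Combining the Freedman estimate, Lemma \ref{lem:single-run-bound}, and this input- and run-uniform monogamy bound then yields $d_{\text{comp}}\le\eta+\text{negl}(M)$, which is the composable security statement for arbitrary initial $\epsilon<1/2$.
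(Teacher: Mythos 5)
Your proposal misses the one idea that makes Proposition \ref{prop:quant-assum} work for \emph{arbitrary} $\epsilon<0.5$, and as a consequence the route you describe cannot reach the stated conclusion. The proposition is the idealized statement under the quantum assumption: the box is \emph{assumed} to produce exactly the quantum ladder-Hardy correlations, so no statistical estimation is needed at all. The paper's proof is purely structural: it applies Lemma \ref{lem:single-run-bound} to each Eve-conditioned box $P_{\textbf{A},\textbf{B}|\textbf{X},\textbf{Y},W=w,Z=z}$ (still no-signaling between Alice and Bob), decomposes the marginal $\mathcal{I}_0$ of Eq.~\eqref{eq:Hardy-ineq-2} as a convex combination of the conditional ones with input-dependent weights controlled by $\alpha(N,N)\geq\left((1/2-\epsilon)/(1/2+\epsilon)\right)^{2\log(N+1)}$, and then observes that the quantum strategy achieves $\mathcal{I}_0\left(P_{\textbf{A},\textbf{B}|\textbf{X},\textbf{Y}}\right)=0$ \emph{exactly}. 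The $\epsilon$-dependent penalty $1/\alpha(N,N)$ therefore multiplies zero and disappears, which is precisely why the range of $\epsilon$ is unrestricted (and why the chained inequality of \cite{CR12}, whose quantum ``zero'' terms are small but nonzero, is restricted). Privacy then follows from the same convexity argument, not from an external monogamy theorem: $\mathcal{I}_0=0$ for the marginal forces $\mathcal{I}_0=0$ for every Eve-conditioned box of nonzero weight, pinning each $P(0,0|N,N,w,z)$ at $1/2$ up to $\delta=O(N^{-d})$, uniformly in $z$.

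Your route instead runs the two statistical tests and Freedman's inequality to localize the per-run conditional boxes. That is the machinery of the \emph{subsequent} estimation section, and it necessarily pays the factor $1/p_{\min}(\epsilon,N)$ when passing from accepted empirical frequencies of the joint distribution $P_{\textbf{A},\textbf{B},\textbf{X},\textbf{Y}}$ to the conditional $P_{\textbf{A},\textbf{B}|\textbf{X},\textbf{Y}}$. Since $1/p_{\min}(\epsilon,N)$ grows like $(N+1)^{1+2\log(1/2+\epsilon)-2\log(1/2-\epsilon)}$ and the exponent diverges as $\epsilon\to 1/2$, your claim that the tests ``force the box into an $\eta$-neighborhood of the ideal Hardy point'' with $\eta=O(\delta_1+\delta_2)$ fails for large $\epsilon$; this is exactly the obstruction that limits the full no-signaling theorem to $\epsilon\lessapprox 0.09$. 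Finally, the appeal to the monogamy relations of \cite{RH14} for the privacy step is asserted rather than derived --- those bounds are stated for correlation Bell inequalities, and you would still need the quantitative decoupling estimate for the Hardy ladder, which in the paper is obtained directly and elementarily from Lemma \ref{lem:single-run-bound} together with convexity of $\mathcal{I}_0$ over Eve's outcomes.
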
 
\begin{proof}
Firstly, we observe that from Lemma \ref{lem:single-run-bound} we have that for any input-output pair $(w,z)$ of Eve, it holds that
\begin{eqnarray}
P_{\textbf{A}, \textbf{B} | \textbf{X}, \textbf{Y}, W, Z}(0,0|N, N, w, z) \leq \frac{1}{2} \left[1 + \mathcal{I}_0\left(P_{\textbf{A}, \textbf{B} | \textbf{X}, \textbf{Y}, w, z} \right) \right].
\end{eqnarray}
We can now derive
\begin{eqnarray}
\label{eq:marg-conditional}
\mathcal{I}_0\left(P_{\textbf{A}, \textbf{B}| \textbf{X}, \textbf{Y}} \right) &=& \sum_{w,z} \sum_{k = 1}^{N} \bigg[ P_{W, Z | k, k-1}(w,z) P_{\textbf{A}, \textbf{B}| \textbf{X}, \textbf{Y}, W, Z}(0,1|k,k-1, w, z)  \nonumber \\  && \qquad  +P_{W, Z | k-1, k}(w,z) P_{\textbf{A}, \textbf{B}| \textbf{X}, \textbf{Y}, W, Z}(1,0|k-1,k, w, z) \bigg]  + P_{W, Z | 0, 0}(w,z) P_{\textbf{A}, \textbf{B}| \textbf{X}, \textbf{Y}, W, Z}(0,0|0,0, w, z) \nonumber \\
&\geq & \min_{\textbf{x}, \textbf{y}, w, z} \frac{P_{W,Z|\textbf{X}, \textbf{Y}}(w,z)}{P_{W,Z|N, N}(w,z)}  \sum_{w,z} P_{W, Z|N, N}(w,z) \mathcal{I}_0\left(P_{\textbf{A}, \textbf{B} | \textbf{X}, \textbf{Y}, w,z} \right).
\end{eqnarray}
Following \cite{CR12}, we may denote the quantity measuring how free the measurement settings are as 
\begin{eqnarray}
\alpha(N, N) := \min_{\textbf{x}, \textbf{y}, w, z} \frac{P_{\textbf{X}, \textbf{Y} | w, z}(\textbf{x}, \textbf{y})}{P_{\textbf{X}, \textbf{Y}|w,z}(N, N)} \geq \left(\frac{1/2 -  \epsilon}{1/2+  \epsilon} \right)^{2 \log (N+1)},
\end{eqnarray}
so that 
\begin{eqnarray}
\frac{\mathcal{I}_0\left(P_{\textbf{A}, \textbf{B}| \textbf{X}, \textbf{Y}} \right)}{\alpha(N, N)} &\geq & \sum_{w,z} P_{W, Z|N, N}(w,z) \left[ 2 P_{\textbf{A}, \textbf{B}| \textbf{X}, \textbf{Y}, W, Z}(0,0|N, N, w, z) - 1 \right] \nonumber \\
& \geq & 2 P_{\textbf{A}, \textbf{B}| \textbf{X}, \textbf{Y}}(0,0|N, N) - 1,
\end{eqnarray}
or equivalently
\begin{equation}
P_{\textbf{A}, \textbf{B}| \textbf{X}, \textbf{Y}}(0,0|N, N) = \sum_{w,z} P_{R_r, W, Z|\textbf{X}, \textbf{Y}}(0, w, z|N,N) \leq \frac{1}{2} \left[1 + \frac{\mathcal{I}_0\left(P_{\textbf{A}, \textbf{B}| \textbf{X}, \textbf{Y}} \right)}{\alpha(N, N)}  \right],
\end{equation}
where recall that 
 \[R_r =  \begin{cases} 
      0 & (\textbf{a}, \textbf{b}) = (0,0) \wedge  (\textbf{x}, \textbf{y}) = (N, N)\\
      1 & (\textbf{a}, \textbf{b}) \neq (0,0) \wedge  (\textbf{x}, \textbf{y}) = (N, N)
   \end{cases}
\]
Note that crucially, in contrast to the correlations from the chain inequality considered in \cite{CR12, GHH+14}, here quantum correlations achieve exactly $\mathcal{I}_0\left(P_{\textbf{A}, \textbf{B}| \textbf{X}, \textbf{Y}} \right) = 0$, so that the upper bound of $1/2$ holds irrespective of the value of $\epsilon$. 

On the other hand, the second test works on the runs when the single input $(\textbf{x}, \textbf{y}) = (N,N)$ was chosen. We see that if
\begin{eqnarray}
P_{\textbf{A}, \textbf{B}| \textbf{X}, \textbf{Y}, W, Z}(0,0|N, N, w, z) \geq \frac{1}{2} - \delta,
\end{eqnarray}
for any fixed $\delta > 0$, then it also holds that
\begin{eqnarray}
P_{\textbf{A}, \textbf{B}| \textbf{X}, \textbf{Y}}(0,0|N, N) = \sum_{w,z} P_{R_r, W, Z|\textbf{X}, \textbf{Y}}(0, w, z|N,N) \geq \frac{1}{2} - \delta.
\end{eqnarray}
Noting that quantum correlations achieve $\delta = O(N^{-d})$ with $d = 0.99$, we have that under the assumptions of no-signaling and quantum theory, 
\begin{eqnarray}
\label{eq:dist-final-bit}
D\left(P_{R_r, W, Z|N,N}, P_{U} \times P_{W, Z| N, N}\right) &=& \sum_{s, w, z} \bigg|P_{R_r, W, Z|N, N}(s, w, z) - (1/2) \times P_{W, Z|N,N}(w,z) \bigg| \nonumber \\
& =& \sum_{w,z} P_{W,Z|N,N}(w,z) \bigg| P_{R_r|N, N, w, z}(s) - (1/2) \bigg| \nonumber \\
&\leq &  \sum_{w,z} P_{W,Z|N,N}(w,z) \max \bigg\{ \delta, \mathcal{I}_0\left(P_{\textbf{A}, \textbf{B}|\textbf{X}, \textbf{Y}, w, z} \right) \bigg\} \nonumber \\
& \leq & \max \bigg\{ \delta, \frac{ \mathcal{I}_0\left(P_{\textbf{A}, \textbf{B} | \textbf{X}, \textbf{Y}} \right)}{\alpha(N, N)} \bigg\} \nonumber \\
& \leq & \max \bigg\{ \delta, \mathcal{I}_0 \left(P_{\textbf{A}, \textbf{B} | \textbf{X}, \textbf{Y}} \right) \left(\frac{1/2 +  \epsilon}{1/2 - \epsilon} \right)^{2 \log (N+1)} \bigg\}
\end{eqnarray}
where $D$ denotes the variational distance $D\left(P_{A}, Q_A \right)  =  \sum_{a = 1}^{|A|} \big|P_A(a) - Q_A(a) \big|$ and $P_{U}$ denotes the uniform distribution, and the last inequality comes from Eq.\eqref{eq:marg-conditional}. 
Thus, in the ideal scenario when the correlations are exactly as predicted by quantum theory, the Protocol allows for the extraction of a perfectly uniform bit for any arbitrarily weak source (i.e., for any initial $\epsilon < 0.5$).
\end{proof}

\subsection{Estimation}
In this subsection, we work with the conditional distribution $P_{\textbf{A}, \textbf{B} | \textbf{X}, \textbf{Y}, W=w, Z=z}$ for arbitrary but fixed $w,z$ and show that the final bit for this distribution is close to uniform, similar to the approach followed in \cite{CR12}. Then we will use Prop. \ref{prop:quant-assum} to show that the same holds also for the marginal distribution $P_{\textbf{A}, \textbf{B} | \textbf{X}, \textbf{Y}}$. For ease of notation, we do not explicitly include the conditionals $W = w, Z =z$ in the rest of this subsection and take them to be implicitly given. For instance, we will estimate $\mathcal{I}_0\left(P_{\textbf{A}_i, \textbf{B}_i | \textbf{X}_i, \textbf{Y}_i, W = w, Z =z} \right)$ dropping $W =w, Z = z$.

The first test in Protocol I checks that $\textit{Z}_{\text{H}}(A,B,X,Y) := \frac{1}{| \mathcal{S}_{\text{H}}|} \sum_{i \in \mathcal{S}_{\text{H}}} B_{\text{H}}(\textbf{a}_i, \textbf{b}_i, \textbf{x}_i, \textbf{y}_i) \leq \delta_1$,
for fixed $\delta_1 > 0$, 
where $B_{\text{H}}(\textbf{a}_i, \textbf{b}_i, \textbf{x}_i, \textbf{y}_i)$ is defined in Eq.(\ref{eq:Hardy-z}). In words, $B_{\text{H}}(\textbf{a}_i, \textbf{b}_i, \textbf{x}_i, \textbf{y}_i)$ is an indicator that takes value $1$ whenever the measurement setting and outcome correspond to one of the Hardy 'zero' constraints, and takes value $0$ otherwise. 
Note that quantum correlations detailed in Sec. \ref{sec:Bell-ineq} achieve $\textit{Z}_{\text{H}}(A,B,X,Y) = 0$ for the Hardy zero constraints. 

Let us define the set of 'bad' runs $\mathcal{S}_{\text{H-bad}}$ for this test as 
\begin{eqnarray}
\mathcal{S}_{\text{H-bad}} := \big\{ i \in \mathcal{S}_{\text{H}} \big|  \mathcal{I}_0\left(P_{\textbf{A}_i, \textbf{B}_i | \textbf{X}_i, \textbf{Y}_i} \right) > \kappa \big\},
\end{eqnarray}
for fixed $\kappa > 0$. Recall that $ \mathcal{S}_{\text{H}}$ denotes the set of runs in which one of the $2N$ Ladder Hardy settings appear and is bounded as $M/(N+1) \leq |\mathcal{S}_{\text{H}}| \leq 3M/(N+1)$. We will choose $\kappa = O(N^{-t})$ for a constant $t > 0$, specifically we set $t = 0.99$. 

We note the following Freedman inequality, which is a martingale extension of the Bernstein inequality. In particular, the inequality shows that the large-deviation behaviour of a martingale is controlled by the predictable quadratic variation
and a uniform upper bound for the martingale difference sequence. 

\begin{theorem}[\cite{Fre75}]
\label{thm:Freedman-ineq}
Consider a real-valued martingale $\{X_k : k = 0, 1, 2,  \dots \}$ with difference sequence $\{X_{k} - X_{k-1} : k = 1, 2, 3, \dots\}$, and assume that the difference sequence is uniformly bounded, i.e., $|X_{k} - X_{k-1}| \leq R$ almost surely for $k = 1, 2, 3, \dots$. Define the predictable quadratic variation process of the martingale: 
\begin{eqnarray}
W_k := \sum_{j=1}^{k} \mathbb{E}_{j-1} \left(X_{j} - X_{j-1} \right)^2, \quad \text{for} \; k = 1, 2, 3, \dots.
\end{eqnarray}
Then, for all $\beta \geq 0$ and $\sigma^2 > 0$, it holds that
\begin{eqnarray}
\text{Pr}\large\{ \exists k \geq 0: X_k \geq \beta \; \wedge \; W_k \leq \sigma^2 \large\} \leq \exp \left\{ \frac{-\beta^2/2}{\sigma^2 + R \beta/3} \right\}.
\end{eqnarray}
\end{theorem}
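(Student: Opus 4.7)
The plan is to prove Freedman's inequality via the classical exponential supermartingale technique combined with a maximal inequality, which is the canonical route for Bernstein-type tail bounds on martingales. Given the one-sided bound $X_k - X_{k-1} \le R$ and conditional mean zero, the core idea is to build an auxiliary nonnegative process of the form $Z_k(\lambda) := \exp\bigl(\lambda X_k - \theta(\lambda) W_k\bigr)$ for a suitable function $\theta(\lambda)$ and a free parameter $\lambda > 0$, show it is a supermartingale with $Z_0(\lambda) = 1$, and then apply Markov's / Doob's maximal inequality to extract the tail bound on the event $\{\exists k \ge 0:\ X_k \ge \beta \wedge W_k \le \sigma^2\}$.

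First I would verify the supermartingale property. Let $D_j := X_j - X_{j-1}$, so $\mathbb{E}_{j-1}[D_j] = 0$ and $|D_j| \le R$. The key elementary inequality is that $e^{x} \le 1 + x + \frac{\psi(\lambda R)}{R^2}\, x^2$ for $x \le \lambda R$, where $\psi(u) := e^{u} - 1 - u$; this follows because the map $x \mapsto (e^{x} - 1 - x)/x^2$ is increasing. Applying it to $\lambda D_j$, taking conditional expectations, and using $1 + y \le e^{y}$ yields
\begin{equation}
\mathbb{E}_{j-1}\bigl[\exp(\lambda D_j)\bigr] \le \exp\!\Bigl(\tfrac{\psi(\lambda R)}{R^2}\,\mathbb{E}_{j-1}[D_j^2]\Bigr).
\end{equation}
Setting $\theta(\lambda) := \psi(\lambda R)/R^2$ then gives $\mathbb{E}_{j-1}[Z_j(\lambda)] \le Z_{j-1}(\lambda)$, so $\{Z_k(\lambda)\}$ is a nonnegative supermartingale with unit initial value.

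Next I would handle the uniform-in-$k$ statement. Define the stopping time $\tau := \inf\{k \ge 0 : X_k \ge \beta \wedge W_k \le \sigma^2\}$ (set $\tau = \infty$ if the event never occurs). On $\{\tau < \infty\}$, monotonicity of $W_k$ in $k$ (it is nondecreasing, by definition of the quadratic variation) gives $Z_\tau(\lambda) \ge \exp(\lambda \beta - \theta(\lambda) \sigma^2)$. By Doob's optional stopping / maximal inequality applied to the nonnegative supermartingale $Z_{k \wedge \tau}(\lambda)$, combined with Markov,
\begin{equation}
\Pr\{\tau < \infty\} \le \mathbb{E}[Z_0(\lambda)]\cdot \exp\!\bigl(-\lambda \beta + \theta(\lambda)\sigma^2\bigr) = \exp\!\bigl(-\lambda \beta + \tfrac{\psi(\lambda R)}{R^2}\sigma^2\bigr).
\end{equation}
Optimizing over $\lambda > 0$ by setting the derivative of the exponent to zero gives $\lambda^{*} = R^{-1}\log(1 + R\beta/\sigma^2)$ and yields the Bennett-type bound $\exp\!\bigl(-(\sigma^2/R^2)\, h(R\beta/\sigma^2)\bigr)$ with $h(u) := (1+u)\log(1+u) - u$. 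Finally, the elementary analytic inequality $h(u) \ge u^2/(2(1 + u/3))$ for $u \ge 0$ converts this Bennett form into the Bernstein-type form $\exp\!\bigl(-\beta^2/(2(\sigma^2 + R\beta/3))\bigr)$ stated in the theorem.

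The main obstacle I anticipate is a careful bookkeeping of the uniform-in-$k$ statement, since the theorem asserts existence of some $k$ rather than a bound for fixed $k$. The natural temptation is to truncate at a deterministic horizon and take a union bound, but Freedman's sharpness comes from using the supermartingale property directly via optional stopping; one needs the event $\{W_k \le \sigma^2\}$ inside the stopping rule so that the exponential penalty $\theta(\lambda) W_\tau$ is controlled at the stopping time. A secondary (purely technical) point is verifying the elementary inequality $h(u) \ge u^2/(2(1+u/3))$, which is standard but easy to mis-state; I would check it by comparing Taylor expansions at $u = 0$ and noting the difference is monotone on $[0,\infty)$.
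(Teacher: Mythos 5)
The paper does not prove this statement at all: Theorem~\ref{thm:Freedman-ineq} is imported verbatim from Freedman's 1975 paper \cite{Fre75} and used as a black box in the estimation section, so there is no in-paper argument to compare yours against. Your proof is the standard modern derivation (the exponential supermartingale plus Ville/optional-stopping route, as in Freedman's original argument and Tropp's exposition), and it is correct: the supermartingale property, the stopping-time handling of the uniform-in-$k$ event, the optimal choice $\lambda^{*}=R^{-1}\log(1+R\beta/\sigma^{2})$ leading to the Bennett form, and the conversion $h(u)\ge u^{2}/(2(1+u/3))$ are all sound, and you correctly observe that only the one-sided bound $X_k-X_{k-1}\le R$ is needed for the upper tail.

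One cosmetic slip worth fixing: the elementary inequality should read $e^{x}\le 1+x+\frac{\psi(c)}{c^{2}}x^{2}$ for $x\le c$ with $c=\lambda R$ (equivalently $e^{\lambda d}\le 1+\lambda d+\frac{\psi(\lambda R)}{R^{2}}d^{2}$ for $d\le R$); as written, $e^{x}\le 1+x+\frac{\psi(\lambda R)}{R^{2}}x^{2}$ is off by a factor of $\lambda^{2}$ in the quadratic term. Your subsequent displayed bound $\mathbb{E}_{j-1}[\exp(\lambda D_j)]\le\exp\bigl(\tfrac{\psi(\lambda R)}{R^{2}}\mathbb{E}_{j-1}[D_j^{2}]\bigr)$ is the correct consequence, so this is a typo in the intermediate statement rather than a gap in the argument.
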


We now make use of Theorem \ref{thm:Freedman-ineq} in estimating the number of bad runs $|\mathcal{S}_{\text{H-bad}}|$. Firstly, let us define random variables $U_i = \left(\textbf{a}_i, \textbf{b}_i, \textbf{x}_i, \textbf{y}_i \right)$ for $i \in \mathcal{S}_{\text{H}}$ and $U_0 = (z,e)$. Let us denote for $i \in \mathcal{S}_{\text{H}}$ the conditional expectation values 
\begin{eqnarray}
\overline{B}^{SV}_{\text{H}}(\textbf{a}_i, \textbf{b}_i, \textbf{x}_i, \textbf{y}_i) &=& \mathbb{E}\big(B_{\text{H}}(\textbf{a}_i, \textbf{b}_i, \textbf{x}_i, \textbf{y}_i) | U_{i-1}, \dots, U_1, U_0 \big) \nonumber \\
&=& \sum_{l=1}^{N} \bigg[ \nu^{SV}_{\textbf{X}_i, \textbf{Y}_i | \textbf{x}_{< i}, \textbf{y}_{<i}, \textbf{a}_{<i}, \textbf{b}_{<i} }(l,l-1) P_{\textbf{A}_i, \textbf{B}_i| \textbf{X}_i, \textbf{Y}_i, \textbf{x}_{< i}, \textbf{y}_{<i}, \textbf{a}_{<i}, \textbf{b}_{<i} }(0,1|l,l-1)  \nonumber \\
&& \qquad +\nu^{SV}_{\textbf{X}_i, \textbf{Y}_i | \textbf{x}_{< i}, \textbf{y}_{<i}, \textbf{a}_{<i}, \textbf{b}_{<i} }(l-1,l) P_{\textbf{A}_i, \textbf{B}_i| \textbf{X}_i, \textbf{Y}_i, \textbf{x}_{< i}, \textbf{y}_{<i}, \textbf{a}_{<i}, \textbf{b}_{<i} }(1,0|l,l-1) \bigg]  \nonumber \\ && \qquad + \nu^{SV}_{\textbf{X}_i, \textbf{Y}_i | \textbf{x}_{< i}, \textbf{y}_{<i}, \textbf{a}_{<i}, \textbf{b}_{<i} }(0,0) P_{\textbf{A}_i, \textbf{B}_i| \textbf{X}_i, \textbf{Y}_i, \textbf{x}_{< i}, \textbf{y}_{<i}, \textbf{a}_{<i}, \textbf{b}_{<i} }(0,0|0,0),
\end{eqnarray}
with $\nu^{SV}_{\textbf{X}_i, \textbf{Y}_i | \textbf{x}_{< i}, \textbf{y}_{<i}, \textbf{a}_{<i}, \textbf{b}_{<i} }(\textbf{x}_i, \textbf{y}_i)$ being the probability of choosing the measurement settings $(\textbf{x}_i, \textbf{y}_i)$ using the $\epsilon$-SV source and $\textbf{x}_{< i}, \textbf{y}_{<i}, \textbf{a}_{<i}, \textbf{b}_{<i}$ denoting the inputs and outputs for the runs $1, \dots, i-1$.
Within the set of runs $\mathcal{S}_{\text{H}}$, define for $k = 1, 2, 3, \dots$, the empirical average 
\begin{eqnarray}
L_k := \frac{1}{k} \sum_{i = 1}^{k} B_{\text{H}}(\textbf{a}_i, \textbf{b}_i, \textbf{x}_i, \textbf{y}_i),
\end{eqnarray}
and the arithmetic average of conditional expectation values
\begin{eqnarray}
\overline{L}_k =  \frac{1}{k} \sum_{i = 1}^{k} \overline{B}^{SV}_{\text{H}}(\textbf{a}_i, \textbf{b}_i, \textbf{x}_i, \textbf{y}_i).
\end{eqnarray}
Now, define $X_0 := 0$ and $X_k = k(L_k - \overline{L}_k)$. This gives that $X_k - X_{k-1} = B_{\text{H}}(\textbf{a}_k, \textbf{b}_k, \textbf{x}_k, \textbf{y}_k) - \overline{B}^{SV}_{\text{H}}(\textbf{a}_k, \textbf{b}_k, \textbf{x}_k, \textbf{y}_k) \leq 1$ since $B_{\text{H}}$ is a binary random variable as an indicator. By a standard argument such as shown in \cite{BRGH+16}, we know that $X_0, \dots, X_k$ is a martingale with respect to $U_0, \dots, U_k$. We can now calculate that for all $k = 1, \dots, |\mathcal{S}_{\text{H}}|$, 
\begin{eqnarray}
W_k &=& \sum_{j=1}^{k} \mathbb{E}_{j-1} \left(X_{j} - X_{j-1} \right)^2 \nonumber \\
&=& \sum_{j=1}^{k} \mathbb{E}_{j-1} \left(B_{\text{H}}(\textbf{a}_j, \textbf{b}_j, \textbf{x}_j, \textbf{y}_j) - \overline{B}^{SV}_{\text{H}}(\textbf{a}_j, \textbf{b}_j, \textbf{x}_j, \textbf{y}_j)  \right)^2  \nonumber \\
&=& \sum_{j=1}^{k} \left( \overline{B}^{SV}_{\text{H}}(\textbf{a}_j, \textbf{b}_j, \textbf{x}_j, \textbf{y}_j) - \left( \overline{B}^{SV}_{\text{H}}(\textbf{a}_j, \textbf{b}_j, \textbf{x}_j, \textbf{y}_j) \right)^2 \right) \nonumber \\
& \leq & \sum_{j=1}^k \overline{B}^{SV}_{\text{H}}(\textbf{a}_j, \textbf{b}_j, \textbf{x}_j, \textbf{y}_j) \nonumber \\
&\leq & \sum_{j=1}^{|\mathcal{S}_{\text{H-bad}}|} \overline{B}^{SV}_{\text{H}}(\textbf{a}_j, \textbf{b}_j, \textbf{x}_j, \textbf{y}_j) =: \tilde{\sigma}^2.
\end{eqnarray}
We obtain by applying Freedman's inequality that with probability at least $\text{Pr}\left(W_{|\mathcal{S}_{\text{H}}|} \leq \tilde{\sigma}^2 \right) - \exp  \left\{ \frac{-\beta^2/2}{\tilde{\sigma}^2 +  \beta/3} \right\} = 1 - \exp  \left\{ \frac{-\beta^2/2}{\tilde{\sigma}^2 +  \beta/3} \right\}$, it holds that 
\begin{eqnarray}
\overline{\textit{Z}}_{\text{H}}(A,B,X,Y) \leq \textit{Z}_{\text{H}}(A,B,X,Y) +\frac{\beta}{|\mathcal{S}_{\text{H}}|},
\end{eqnarray}
where 
\begin{eqnarray}
 \overline{\textit{Z}}_{\text{H}}(A,B,X,Y) = \overline{L}_{| \mathcal{S}_{\text{H}}|}.
\end{eqnarray}
Thus, when the first test in Protocol I is passed, i.e., when the event $\texttt{ACC}_1$ occurs, it holds that
\begin{eqnarray}
\overline{\textit{Z}}_{\text{H}}(A,B,X,Y) \leq \delta_1 + \frac{\beta}{|\mathcal{S}_{\text{H}}|} 
\end{eqnarray}
with probability at least $1 - \exp  \left\{ \frac{-\beta^2/2}{|\mathcal{S}_{\text{H}}| \delta_1 +  4\beta/3} \right\}$. We now choose $\beta = |\mathcal{S}_{\text{H}}|^{0.01}$ and $\delta_1 = |\mathcal{S}_{\text{H}}|^{-0.99}$ so that we have
\begin{eqnarray}
 \overline{\textit{Z}}_{\text{H}}(A,B,X,Y) \leq 2  |\mathcal{S}_{\text{H}}|^{-0.99}
\end{eqnarray}
with probability at least $1 - \exp \left\{ -(3/14)  |\mathcal{S}_{\text{H}}|^{0.01} \right\}$.

Now, from 
\begin{eqnarray}
\sum_{i=1}^{|\mathcal{S}_{\text{H}}|}  \overline{B}^{SV}_{\text{H}}(\textbf{a}_i, \textbf{b}_i, \textbf{x}_i, \textbf{y}_i) &\leq & 2 |\mathcal{S}_{\text{H}}|^{0.01} 
\end{eqnarray}
we obtain that
\begin{eqnarray}
p_{\min}(\epsilon, N) \sum_{i=1}^{|\mathcal{S}_{\text{H}}|} \mathcal{I}_0\left(P_{\textbf{A}_i, \textbf{B}_i | \textbf{X}_i, \textbf{Y}_i} \right) &\leq & 2 |\mathcal{S}_{\text{H}}|^{0.01}  \nonumber \\
p_{\min}(\epsilon, N) \left[\sum_{i \in \mathcal{S}_{\text{H-bad}} }  \mathcal{I}_0\left(P_{\textbf{A}_i, \textbf{B}_i | \textbf{X}_i, \textbf{Y}_i} \right) + \sum_{i \notin \mathcal{S}_{\text{H-bad}} } \mathcal{I}_0\left(P_{\textbf{A}_i, \textbf{B}_i | \textbf{X}_i, \textbf{Y}_i} \right) \right] &\leq & 2 |\mathcal{S}_{\text{H}}|^{0.01}  \nonumber \\
p_{\min}(\epsilon, N) |\mathcal{S}_{\text{H-bad}}| \kappa &\leq & 2 |\mathcal{S}_{\text{H}}|^{0.01},
\end{eqnarray}
so that the number of bad runs (in which the value of $\mathcal{I}_0$ is greater than $\kappa$) is bounded as 
\begin{eqnarray}
|\mathcal{S}_{\text{H-bad}}| \leq \frac{2 |\mathcal{S}_{\text{H}}|^{0.01}}{p_{\min}(\epsilon, N)\kappa} \leq \frac{2}{ p_{\min}(\epsilon, N) \kappa} \left(\frac{3M}{N+1} \right)^{0.01}.
\end{eqnarray}
Here $p_{\min}(\epsilon, N)$ denotes the minimum probability with which the measurement settings are chosen in any particular run, i.e., 
\begin{eqnarray}
\nu^{SV}_{\textbf{X}_i, \textbf{Y}_i}(\textbf{x}_i, \textbf{y}_i) \geq p_{\min}(\epsilon, N), \; \; \; \; \forall \textbf{x}_i, \textbf{y}_i, i \in \mathcal{S}_{\text{H}}.
\end{eqnarray} 
Since the settings are chosen with an $\epsilon$-SV source, and the parties discard the runs in which the $2(N+1)$ settings that are involved in the Hardy ladder test do not appear, one can bound $p_{\min}(\epsilon, N)$ as \cite{CR12}
\begin{eqnarray}
p_{\min}(\epsilon, N) \geq \frac{\left(\frac{1}{2} - \epsilon \right)^{2 \log (N+1)}}{2(N+1) \left(\frac{1}{2} + \epsilon \right)^{2 \log (N+1)}} = \frac{1}{2} (N+1)^{2 \log (1/2 - \epsilon) - 2 \log (1/2 + \epsilon) - 1}.
\end{eqnarray}
Setting $\kappa = (N+1)^{-t}$ for a real parameter $t > 0$, and choosing the number of runs as $M = (N+1)^r$ gives that the number of bad runs (conditioned upon the first test being accepted) is bounded as
\begin{eqnarray}
\label{eq:testone-bad-bound}
|\mathcal{S}_{\text{H-bad}}| \leq 4 (3)^{0.01} (N+1)^{0.99 + t + 0.01 r +2 \log (1/2+ \epsilon) - 2 \log (1/2 - \epsilon)} =: 4 (3)^{0.01} (N+1)^{t'}, 
\end{eqnarray}
with probability at least $1 - \exp \left\{ -(3/14) (N+1)^{0.01(r-1)} \right\}$, which is close to $1$ for $ r > 1$. 

In the Protocol I, the honest parties perform a second tomographic test to estimate the probability $P_{\textbf{A}, \textbf{B}| \textbf{X}, \textbf{Y}}(0,0|N,N)$ in the runs in which the measurement setting $(N,N)$ appears (i.e., within the set $\mathcal{S}_{\text{H-nz}}$). This test checks that
\begin{eqnarray}
Z_{\text{H-nz}}(A,B,X,Y) := \frac{1}{|\mathcal{S}_{\text{H-nz}}|} \sum_{i \in \mathcal{S}_{\text{H-nz}}} B_{\text{H-nz}}(\textbf{a}_i, \textbf{b}_i, \textbf{x}_i, \textbf{y}_i) \geq \frac{1}{2} - \delta_2,
\end{eqnarray}
for fixed $\delta_2 > 0$, where recall that $\mathcal{S}_{\text{H-nz}}$ denotes the set of runs in which the setting $(N,N)$ appears, and $B_{\text{H-nz}}(\textbf{a}_i, \textbf{b}_i, \textbf{x}_i, \textbf{y}_i)$ is defined from Eq.(\ref{eq:B-Hardy-nz}) as
\begin{equation}
B_{\text{H-nz}}(\textbf{a}_i, \textbf{b}_i, \textbf{x}_i, \textbf{y}_i)=  \begin{cases} 
      1 & (\textbf{x}_i, \textbf{y}_i) = (N,N) \wedge (\textbf{a}_i, \textbf{b}_i) = (0,0)  \\
      0 & \text{otherwise}
   \end{cases}
\end{equation}
This indicator function takes value $1$ whenever the measurement outcome $(\textbf{a}_i, \textbf{b}_i) = (0,0)$ appears for a run in which setting $(\textbf{x}_i, \textbf{y}_i)  = (N, N)$ was chosen, and takes value $0$ otherwise. In the ideal noiseless scenario, the parties observe the value of $Z_{\text{H-nz}}(A,B,X,Y)$ close to $1/2$, with an expected deviation from $1/2$ of $N^{-d}$ with $d = 0.99$.


By computing the empirical average $Z_{\text{H-nz}}(A,B,X,Y)$, the parties estimate the value of the true average $\overline{Z}_{\text{H-nz}}(A,B,X,Y)$ defined as
\begin{eqnarray}
\overline{Z}_{\text{H-nz}}(A,B,X,Y) := \frac{1}{|\mathcal{S}_{\text{H-nz}}|} \sum_{i \in \mathcal{S}_{\text{H-nz}}} P_{\textbf{A}_i, \textbf{B}_i| \textbf{X}_i, \textbf{Y}_i}(0,0| N, N). \nonumber \\
\end{eqnarray}
In the estimation procedure, we employ the following Lemma \ref{lemmaazuma} shown in \cite{BRGH+16} based on the Azuma-Hoeffding inequality. 
\begin{lemma}[\cite{BRGH+16}] 
	\label{lemmaazuma} 
	Consider  arbitrary random variables $W_i$ for $i=0,1,\ldots,n$, and binary random variables $B_i$ for $i=1,\ldots n$ that are functions of 
	$W_i$, i.e. $B_i=f_i(W_i)$ 
	for some functions $f_i$. Let us denote $\overline{B}_i=\mathbb{E}(B_i|W_{i-1},\ldots,W_1,W_0)$ for $i=1,\ldots,n$ 
	(i.e. $\overline{B}_i$ are conditional means).
	Define for $k = 1, \ldots, n$, the empirical average
	\begin{eqnarray}
	L_k=\frac{1}{k}\sum_{i=1}^k B_i
    \label{eq:means}
	\end{eqnarray}
	and the arithmetic average of conditional means 
	\begin{eqnarray}
	\label{eq:cond-means}
	\overline{L}_k=\frac{1}{k}\sum_{i=1}^k \overline{B}_i.
	\end{eqnarray}
	Then we have 
	\begin{eqnarray}
	\text{Pr}(|L_n-\overline{L}_n|\geq  \delta_{Az})\leq 2 e^{-n\frac{\delta_{Az}^2}{2}}
	\label{eq:PrAzumaL-Ln}
	\end{eqnarray}
\end{lemma}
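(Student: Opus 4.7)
The plan is to recognize the bound in \eqref{eq:PrAzumaL-Ln} as a direct consequence of the Azuma--Hoeffding inequality applied to a suitable martingale. Specifically, I would introduce the filtration $\mathcal{F}_i := \sigma(W_0, W_1, \ldots, W_i)$ for $i = 0, 1, \ldots, n$, and define the partial sums
\begin{equation}
S_k := \sum_{i=1}^{k} \left( B_i - \overline{B}_i \right), \qquad S_0 := 0.
\end{equation}
Note that $B_i$ is $\mathcal{F}_i$-measurable (since $B_i = f_i(W_i)$) while $\overline{B}_i = \mathbb{E}(B_i \mid \mathcal{F}_{i-1})$ is $\mathcal{F}_{i-1}$-measurable by construction, so $S_k$ is $\mathcal{F}_k$-adapted.

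Next I would verify that $\{S_k\}$ is a martingale with respect to $\{\mathcal{F}_k\}$ and that its increments are uniformly bounded. For the martingale property, we simply compute
\begin{equation}
\mathbb{E}(S_k - S_{k-1} \mid \mathcal{F}_{k-1}) = \mathbb{E}(B_k \mid \mathcal{F}_{k-1}) - \overline{B}_k = 0,
\end{equation}
using the definition $\overline{B}_k = \mathbb{E}(B_k \mid W_{k-1}, \ldots, W_0)$. For the bound on increments, since $B_k \in \{0,1\}$ we have $\overline{B}_k \in [0,1]$, and therefore $|S_k - S_{k-1}| = |B_k - \overline{B}_k| \leq 1$ almost surely.

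With these two properties in hand, the Azuma--Hoeffding inequality yields, for any $t > 0$,
\begin{equation}
\mathrm{Pr}(|S_n| \geq t) \leq 2 \exp\!\left( -\frac{t^2}{2 \sum_{i=1}^{n} c_i^2} \right) = 2 \exp\!\left( -\frac{t^2}{2n} \right),
\end{equation}
with $c_i = 1$. Observing that $L_n - \overline{L}_n = S_n/n$ by \eqref{eq:means} and \eqref{eq:cond-means}, the event $\{|L_n - \overline{L}_n| \geq \delta_{Az}\}$ coincides with $\{|S_n| \geq n \delta_{Az}\}$. Substituting $t = n \delta_{Az}$ gives exactly the desired tail bound $2 \exp(-n \delta_{Az}^2 / 2)$.

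There is no serious obstacle here; the only minor care point is making the bounded-increment constant tight. Using the crude bound $|B_k - \overline{B}_k| \leq 1$ is what produces the factor $1/2$ in the exponent (matching the statement). A sharper bound like $|B_k - \overline{B}_k| \leq \max(\overline{B}_k, 1 - \overline{B}_k)$ could be used to recover a Bernstein-type inequality involving conditional variances (in the spirit of Freedman's inequality invoked earlier in Theorem~\ref{thm:Freedman-ineq}), but since the lemma as stated only requires the Hoeffding-style bound, the direct application above is sufficient.
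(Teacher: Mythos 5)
Your proof is correct and is exactly the intended argument: the paper does not reprove this lemma (it imports it from \cite{BRGH+16}, noting only that it is "based on the Azuma--Hoeffding inequality"), and your application of Azuma--Hoeffding to the centered partial sums $S_k=\sum_{i\le k}(B_i-\overline{B}_i)$ with increment bound $1$ reproduces the stated tail bound $2e^{-n\delta_{Az}^2/2}$. No gaps.
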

By Lemma \ref{lemmaazuma}, we see that when $Z_{\text{H-nz}}(A,B,X,Y) \geq \frac{1}{2} - \delta_2$, with probability at least $1 - 2  e^{- |\mathcal{S}_{\text{H-nz}}| \frac{\delta_{Az}^2}{2}}$ it holds that
\begin{eqnarray}
\overline{Z}_{\text{H-nz}}(A,B,X,Y) \geq \frac{1}{2} - \delta_2 - \delta_{Az},
\end{eqnarray}
for any fixed $\delta_{Az} > 0$. Note that in the protocol, the parties verify that $\frac{M}{2(N+1)^2} \leq |\mathcal{S}_{\text{H-nz}}| \leq \frac{3M}{2(N+1)^2}$, so the statement holds with probability at least $1 - 2  \exp \big\{- \frac{M \delta_{Az}^2}{4(N+1)^2} \big\}$. For $\delta_{Az} = (N+1)^{-\gamma}$ and $M = (N+1)^r$, this probability is $1 - 2 \exp \big\{ - \frac{1}{4}(N+1)^{r-2 - 2 \gamma} \big\}$ which is close to unity for $r > 2 + 2 \gamma$. 

The following Lemma is then useful to estimate the true value $P_{\textbf{A}_i, \textbf{B}_i | \textbf{X}_i, \textbf{Y}_i}(0,0|N,N)$ for a large fraction of the boxes within the set $\mathcal{S}_{\text{H-nz}}$. 

\begin{lemma}
If the average $\overline{Z}_{\text{H-nz}}(A,B,X,Y)$ satisfies $\overline{Z}_{\text{H-nz}}(A,B,X,Y) \geq \frac{1}{2} - \delta_2 - \delta_{Az}$ and when the number of runs $|\mathcal{S}_{\text{H-bad}}|$ in which $\mathcal{I}_0 \geq (N+1)^{-t}$ is bounded as $| \mathcal{S}_{\text{H-bad}}| \leq 4 (3)^{0.01} (N+1)^{t'}$ from Eq.\eqref{eq:testone-bad-bound}, then in at most $\frac{ \big|\mathcal{S}_{\text{H-nz}}\big| \left[\frac{1}{2}(N+1)^{-t} + \delta_2 + \delta_{Az} \right] + 4(3)^{0.01} (N+1)^{t'}}{\frac{1}{2}(N+1)^{-t}+ \delta_3}$ runs $i \in \mathcal{S}_{\text{H-nz}}$, we have $P_{\textbf{A}_i, \textbf{B}_i | \textbf{X}_i, \textbf{Y}_i}(0,0|N,N) \leq \frac{1}{2} - \delta_3$ for fixed constant $\delta_3 > 0$. 
\end{lemma}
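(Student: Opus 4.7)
The plan is to bound the number of ``bad'' runs in $\mathcal{S}_{\text{H-nz}}$, i.e., the cardinality of the set $\mathcal{B} := \{i \in \mathcal{S}_{\text{H-nz}} : P_{\textbf{A}_i, \textbf{B}_i|\textbf{X}_i,\textbf{Y}_i}(0,0|N,N) \leq \tfrac{1}{2} - \delta_3\}$, by combining a pointwise upper bound on each individual Hardy probability (from Lemma~\ref{lem:single-run-bound}) with the lower bound on the empirical average $\overline{Z}_{\text{H-nz}}$ assumed in the statement. The idea is essentially a pigeonhole argument: if too many runs had Hardy probability far below $1/2$, then the sum of Hardy probabilities could not reach its required lower bound, since the complementary runs cannot compensate by much (their Hardy probability is capped above by $\tfrac{1}{2} + \tfrac{1}{2}(N+1)^{-t}$ thanks to the first statistical test).

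First I partition $\mathcal{S}_{\text{H-nz}}$ into three disjoint classes: (i) the bad set $\mathcal{B}$, on which by definition $P_i \leq \tfrac{1}{2} - \delta_3$; (ii) runs outside $\mathcal{B}$ that also lie outside $\mathcal{S}_{\text{H-bad}}$, for which $\mathcal{I}_0 < (N+1)^{-t}$ and therefore Lemma~\ref{lem:single-run-bound} gives $P_i \leq \tfrac{1}{2} + \tfrac{1}{2}(N+1)^{-t}$; and (iii) runs lying in $\mathcal{S}_{\text{H-bad}}$ but outside $\mathcal{B}$, for which I have only the trivial bound $P_i \leq 1$. The total size of the third class is at most $|\mathcal{S}_{\text{H-bad}}| \leq 4(3)^{0.01}(N+1)^{t'}$ by Eq.~\eqref{eq:testone-bad-bound}.

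The next step is to upper-bound $\sum_{i \in \mathcal{S}_{\text{H-nz}}} P_i$ using these class-wise estimates. The cleanest accounting is to apply the class-(ii) bound $\tfrac{1}{2} + \tfrac{1}{2}(N+1)^{-t}$ to all $|\mathcal{S}_{\text{H-nz}}| - |\mathcal{B}|$ non-bad runs uniformly; this underestimates each class-(iii) contribution by at most $1 - \tfrac{1}{2} - \tfrac{1}{2}(N+1)^{-t} < 1$, so correcting for the excess amounts to adding a single additive term bounded by $|\mathcal{S}_{\text{H-bad}}|$. Combined with the lower bound
\begin{equation*}
|\mathcal{S}_{\text{H-nz}}|\bigl(\tfrac{1}{2} - \delta_2 - \delta_{Az}\bigr) \;\leq\; \sum_{i \in \mathcal{S}_{\text{H-nz}}} P_{\textbf{A}_i, \textbf{B}_i|\textbf{X}_i,\textbf{Y}_i}(0,0|N,N),
\end{equation*}
this yields a linear inequality in $|\mathcal{B}|$ whose solution, after inserting the bound on $|\mathcal{S}_{\text{H-bad}}|$, is exactly the expression claimed.

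The step most prone to subtlety is the handling of class-(iii), whose Hardy probabilities could in principle lie anywhere in $[0,1]$; without the quantitative control on $|\mathcal{S}_{\text{H-bad}}|$ afforded by the Freedman-inequality analysis of the first test, this class could swamp the estimate and prevent any nontrivial bound on $|\mathcal{B}|$. The key structural insight making the argument work is therefore the interplay between the two statistical tests of the protocol: the first certifies that all but $\mathrm{poly}(N)$ runs satisfy the hypothesis of Lemma~\ref{lem:single-run-bound}, and the second then leverages that pointwise bound to conclude that in a typical run the Hardy probability is close to $1/2$, so that the output bit obtained from a randomly chosen run is close to uniform.
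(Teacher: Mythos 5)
Your proposal is correct and takes essentially the same route as the paper's own proof: the identical three-way partition of $\mathcal{S}_{\text{H-nz}}$ into the bad set, the runs covered by Lemma \ref{lem:single-run-bound} (those outside $\mathcal{S}_{\text{H-bad}}$, bounded by $\tfrac{1}{2}[1+(N+1)^{-t}]$), and the runs in $\mathcal{S}_{\text{H-bad}}$ handled by the trivial bound together with the cardinality estimate \eqref{eq:testone-bad-bound}. The resulting linear inequality in $|\mathcal{B}|$ rearranges to exactly the claimed expression, matching the paper's computation step for step.
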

\begin{proof}
We have that $\overline{Z}_{\text{H-nz}}(A,B,X,Y) \geq \frac{1}{2} - \delta_2 - \delta_{Az}$, i.e., 
\begin{eqnarray}
\sum_{i \in \mathcal{S}_{\text{H-nz}}} P_{\textbf{A}_i, \textbf{B}_i | \textbf{X}_i, \textbf{Y}_i}(0,0| N,N) \geq \big|\mathcal{S}_{\text{H-nz}}\big| \left(\frac{1}{2} - \delta_2 - \delta_{Az} \right).
\end{eqnarray}
Let us define the subset $\mathcal{S}_{\text{H-nz-bad}} \subset \mathcal{S}_{\text{H-nz}}$ of runs $i \in \mathcal{S}_{\text{H-nz}}$ in which the value of $P_{\textbf{A}_i, \textbf{B}_i | \textbf{X}_i, \textbf{Y}_i}(0,0|N, N)$ falls below a specified threshold $\frac{1}{2} - \delta_3$, i.e., we define
\begin{eqnarray}
\label{eq:def-bad-nz}
\mathcal{S}_{\text{H-nz-bad}} := \bigg\{ i \in \mathcal{S}_{\text{H-nz}} \bigg| P_{\textbf{A}_i, \textbf{B}_i | \textbf{X}_i, \textbf{Y}_i}(0,0| N, N) \leq \frac{1}{2} - \delta_3 \bigg\},
\end{eqnarray}
for fixed $\delta_3 > 0$. We have
\begin{eqnarray}
\label{eq:S11-size}
 \big|\mathcal{S}_{\text{H-nz}}\big| \left(\frac{1}{2} - \delta_2 - \delta_{Az} \right) &\leq & \sum_{i \in \mathcal{S}_{\text{H-nz}}} P_{\textbf{A}_i, \textbf{B}_i | \textbf{X}_i, \textbf{Y}_i}(0,0| N,N) \nonumber \\ &=& \sum_{i \in \mathcal{S}_{\text{H-nz-bad}}} P_{\textbf{A}_i, \textbf{B}_i | \textbf{X}_i, \textbf{Y}_i}(0,0| N,N) + \sum_{i \notin \mathcal{S}_{\text{H-nz-bad}}} P_{\textbf{A}_i, \textbf{B}_i | \textbf{X}_i, \textbf{Y}_i}(0,0|N, N) \nonumber \\
 & \leq & \big| \mathcal{S}_{\text{H-nz-bad}} \big| \left(\frac{1}{2} - \delta_3 \right) + \sum_{i \in  \mathcal{S}^{c}_{\text{H-nz-bad}}} P_{\textbf{A}_i, \textbf{B}_i | \textbf{X}_i, \textbf{Y}_i}(0,0| N,N).
\end{eqnarray}
where $\mathcal{S}^{c}_{\text{H-nz-bad}}$ denotes the complement of $\mathcal{S}_{\text{H-nz-bad}}$ within $\mathcal{S}_{\text{H-nz}}$. 

Now, we know that conditioned upon the event $\texttt{ACC}_1)$, we have the number of bad runs in which $\mathcal{I}_0\left( P_{\textbf{A}_j, \textbf{B}_j | \textbf{X}_j, \textbf{Y}_j} \right) > (N+1)^{-t}$ is bounded as $| \mathcal{S}_{\text{H-bad}}| \leq 4 (3)^{0.01}(N+1)^{t'}$ from Eq.\eqref{eq:testone-bad-bound}.
Furthermore, we have by Lemma \ref{lem:single-run-bound}, that when $\mathcal{I}_0\left(P_{\textbf{A}_i, \textbf{B}_i| \textbf{X}_i, \textbf{Y}_i} \right)  \leq (N+1)^{-t}$ that
\begin{eqnarray}
\label{eq:single-bound-2}
P_{\textbf{A}_i, \textbf{B}_i| \textbf{X}_i, \textbf{Y}_i}(0,0|N, N) \leq \frac{1}{2} \bigg( 1 + \mathcal{I}_0\left(P_{\textbf{A}_i, \textbf{B}_i| \textbf{X}_i, \textbf{Y}_i} \right) \bigg) \leq \frac{1}{2} \left[1 + (N+1)^{-t}\right].
\end{eqnarray}
Therefore, we have that conditioned  upon $\texttt{ACC}_1$, in at least $ \big|\mathcal{S}_{\text{H-nz}}\big| - 4(3)^{0.01}(N+1)^{t'}$ runs, it holds that $P_{\textbf{A}_i, \textbf{B}_i| \textbf{X}_i, \textbf{Y}_i}(0,0|N, N)$ is bounded as in \eqref{eq:single-bound-2}.
Plugging this into \eqref{eq:S11-size} gives
\begin{eqnarray}
 \big|\mathcal{S}_{\text{H-nz}}\big|  \left(\frac{1}{2} - \delta_2 - \delta_{Az} \right) &\leq &  \big| \mathcal{S}_{\text{H-nz-bad}} \big| \left(\frac{1}{2} - \delta_3 \right) \nonumber \\ && +\sum_{i \in \left(\mathcal{S}_{\text{H-nz}} \setminus \mathcal{S}_{\text{H-nz-bad}} \right) \cap \mathcal{S}_{\text{H-bad}}} P_{\textbf{A}_i, \textbf{B}_i | \textbf{X}_i, \textbf{Y}_i}(0,0| N, N) \nonumber \\
&&+ \sum_{i \in \left(\mathcal{S}_{\text{H-nz}} \setminus \mathcal{S}_{\text{H-nz-bad}} \right) \cap \left( \mathcal{S}_{\text{H}} \setminus \mathcal{S}_{\text{H-bad}} \right)} P_{\textbf{A}_i, \textbf{B}_i | \textbf{X}_i, \textbf{Y}_i}(0,0| N, N) \nonumber \\
 & \leq &  \big| \mathcal{S}_{\text{H-nz-bad}} \big| \left(\frac{1}{2} - \delta_3 \right) + \big| \mathcal{S}_{\text{H-bad}} \big| + \bigg| \left(\mathcal{S}_{\text{H-nz}} \setminus \mathcal{S}_{\text{H-nz-bad}} \right) \bigg| \frac{1}{2} \left[1 + (N+1)^{-t} \right] \nonumber \\
  & \leq & \big| \mathcal{S}_{\text{H-nz-bad}} \big| \left(\frac{1}{2} - \delta_3 \right) + 4(3)^{0.01}(N+1)^{t'} +\bigg(  \big|\mathcal{S}_{\text{H-nz}}\big|- \big| \mathcal{S}_{\text{H-nz-bad}} \big| \bigg) \frac{1}{2} \left[1 +  (N+1)^{-t} \right]. \nonumber \\
\end{eqnarray}
From the above we obtain that
\begin{eqnarray}
\big| \mathcal{S}_{\text{H-nz-bad}}  \big| \left[\frac{1}{2}(N+1)^{-t}+ \delta_3 \right] \leq \big|\mathcal{S}_{\text{H-nz}}\big| \left[\frac{1}{2}(N+1)^{-t} + \delta_2 + \delta_{Az} \right] + 4(3)^{0.01} (N+1)^{t'}.
\end{eqnarray}
\end{proof}
 
We note that in the ideal noiseless scenario, the test parameter 
$\delta_2$ is of the order of $N^{-d}$ for parameter $d < 1$, we set $d = 0.99$. By suitably defining the value of $\delta_3$, we can ensure that the size of the subset $\mathcal{S}_{\text{H-nz-bad}}$ is much smaller than that of $\mathcal{S}_{\text{H-nz}}$. Furthermore, we have that the set of bad runs from the first test is bounded as $| \mathcal{S}_{\text{H-bad}}| \leq  4(3)^{0.01}(N+1)^{t'}$ from Eq.\eqref{eq:testone-bad-bound} with high probability when the first test is passed. Therefore, we can define the total set of bad runs $\mathcal{S}_{\text{bad}}$ as 
\begin{eqnarray}
\label{eq:def-bad}
\mathcal{S}_{\text{bad}} & := & \mathcal{S}_{\text{H-bad}}  \cup \mathcal{S}_{\text{H-nz-bad}} \nonumber \\
&=&  \bigg\{ i \in \mathcal{S}_{\text{H-nz}} \bigg| P_{\textbf{A}_i, \textbf{B}_i | \textbf{X}_i, \textbf{Y}_i}(0,0| N, N) \leq \frac{1}{2} - \delta_3 \vee \mathcal{I}_0 \left( P_{\textbf{A}_i, \textbf{B}_i | \textbf{X}_i, \textbf{Y}_i} \right) \geq  (N+1)^{-t} \bigg\},
\end{eqnarray}
and obtain that for chosen parameters $t = 0.99$, $\delta_2 = \delta_{Az} = (N+1)^{-0.99}$, $\delta_3 = (N+1)^{-0.01}$, $M = (N+1)^r$ that with probability greater than $1 - 2  \exp \left\{- \frac{1}{4} (N+1)^{(r-3.98)}\right\} - \exp \left\{ -\frac{3}{14} (N+1)^{0.01(r-1)} \right\}$
\begin{eqnarray}
|\mathcal{S}_{\text{bad}}| \leq \big| \mathcal{S}_{\text{H-bad}}  \big|  + \big| \mathcal{S}_{\text{H-nz-bad}} \big| &\leq &4(3)^{0.01}(N+1)^{t'} + \frac{1}{\delta_3}  \big|\mathcal{S}_{\text{H-nz}}\big| \left[\frac{1}{2}(N+1)^{-t} + \delta_2 + \delta_{Az} \right] + \frac{4(3)^{0.01} (N+1)^{t'}}{\delta_3}  \nonumber \\
&\leq & 4(3)^{0.01}(N+1)^{t'} +  \frac{5 |\mathcal{S}_{\text{H-nz}}| (N+1)^{-0.98}}{2} + 4(3)^{0.01}(N+1)^{t'+0.01}   \nonumber \\
&\leq & \frac{15 (N+1)^{r-2.98}}{4} + 9(N+1)^{t'+0.01},
\end{eqnarray}
where $t' = 0.01 r + 1.98 + 2 \log(1/2+ \epsilon) - 2 \log (1/2 - \epsilon)$, where we used the fact that $|\mathcal{S}_{\text{H-nz}}| \leq \frac{3M}{2(N+1)^2} = \frac{3}{2}(N+1)^{r-2}$.
\subsection{Security of the final output bit for a range of initial $\epsilon$.}
We now bound the probability that the box $r \in  \mathcal{S}_{\text{H-nz}}$ chosen using $\log |\mathcal{S}_{\text{H-nz}}|$ bits from the $\epsilon$-SV source belongs to the bad set $\mathcal{S}_{\text{bad}}$. Since the probability of choosing one of the runs from $\mathcal{S}_{\text{H-nz}}$ using an $\epsilon$-SV source is at most $\left(1/2 + \epsilon \right)^{\log |\mathcal{S}_{\text{H-nz}}|}$, we have by the union bound that 
\begin{eqnarray}
\text{Pr}\left(r \in \mathcal{S}_{\text{bad}} \right)& \leq & \left(1/2 + \epsilon \right)^{\log |\mathcal{S}_{\text{H-nz}}|} \big|\mathcal{S}_{\text{bad}} \big| \nonumber \\
& \leq & |\mathcal{S}_{\text{H-nz}}|^{\log (1/2 + \epsilon)} \left[ \frac{15 (N+1)^{r-2.98}}{4} + 9(N+1)^{t'+0.01} \right] \nonumber \\
& \leq & \left(\frac{3(N+1)^{r-2}}{2} \right)^{\log (1/2+ \epsilon)} \left[ \frac{15 (N+1)^{r-2.98}}{4} +9(N+1)^{t'+0.01} \right] \nonumber \\
& \leq & c \left[(N+1)^{r[1 + \log (1/2+\epsilon)] - 2 \log (1/2+ \epsilon) - 2.98} + (N+1)^{(r-2) \log (1/2+\epsilon) + t' + 0.01} \right], 
\end{eqnarray}  
where $c$ is a constant $c = (3/2)^{6 + \log (1/2+ \epsilon)}$. 
We obtain that the probability $\text{Pr}\left(r \in \mathcal{S}_{\text{bad}} \right) \rightarrow 0$ whenever $\epsilon \lessapprox 0.090$ for the choice 
\begin{eqnarray}
r = \frac{2.98 + 2 \log (1/2 + \epsilon)}{1 + \log (1/2 + \epsilon)} - 0.01.
\end{eqnarray}

In the set of good runs $\mathcal{S}_{\text{good}} = \mathcal{S}^{c}_{\text{bad}}$ which is the complement of Eq.\eqref{eq:def-bad}, the probability $P_{\textbf{A}_i, \textbf{B}_i | \textbf{X}_i, \textbf{Y}_i, W = w, Z =z}(0,0|N,N)$ is bounded as 
\begin{eqnarray}
\frac{1}{2} - \delta_3 \leq P_{\textbf{A}_i, \textbf{B}_i | \textbf{X}_i, \textbf{Y}_i, W = w, Z =z}(0,0|N,N) \leq \frac{1}{2}\left[1 + (N+1)^{-t} \right].
\end{eqnarray}
We can now estimate the distance of the final bit to uniform $D\left(P_{R_r, W, Z|N,N}, P_{U} \times P_{W, Z| N, N}\right)$. To do this, we use the reasoning from the proof of Prop. \ref{prop:quant-assum}. Specifically from Eq. \eqref{eq:dist-final-bit}, we obtain that $D\left(P_{R_r, W, Z|N,N}, P_{U} \times P_{W, Z| N, N}\right)$ is bounded as
\begin{eqnarray}
D\left(P_{R_r, W, Z|N,N}, P_{U} \times P_{W, Z| N, N}\right) &\leq & \max \bigg\{ \delta_3, (N+1)^{-t} \left( \frac{1/2 +  \epsilon}{1/2 -  \epsilon} \right)^{2 \log (N+1)} \bigg\} \nonumber \\
&=& \max \bigg\{ (N+1)^{-0.01}, (N+1)^{-0.99}  \left( \frac{1/2 +  \epsilon}{1/2 -  \epsilon} \right)^{2 \log (N+1)} \bigg\} \nonumber \\
&=& \max \bigg\{ (N+1)^{-0.01}, (N+1)^{-0.99 + 2 \log (1/2 + \epsilon) - 2 \log (1/2 - \epsilon)} \bigg\}.
\end{eqnarray}
This gives that, conditioned on $\texttt{ACC} = \texttt{ACC}_1 \cap \texttt{ACC}_2$, the final output bit is arbitrarily close to uniform when for all $\epsilon \lessapprox 0.085$ when $-0.99 + 2 \log (1/2 + \epsilon) - 2 \log (1/2 - \epsilon) \leq 0$. We also check that for this range of $\epsilon$, it holds that $r > 3.98$ so that the probabilities in the concentration inequalities are close to unity. 

The correctness of the Protocol then follows from the value achieved by quantum correlations for both tests, since both tests in the protocol can be passed with high probability (i.e., event $\texttt{ACC}$ occurs with high probability) when the parties use honest devices that share the appropriate quantum correlations. 

We remark that the above value of $\epsilon$ is not necessarily optimal and is an artefact of the method of our proof. Given Prop. \ref{prop:quant-assum}, one may well expect that an improved value of $\epsilon$ may be obtained from a different analysis, or a different protocol based on the same ladder Hardy inequality. We further remark that it is also possible to modify the protocol to one such as in \cite{RBHH+16} where the requirement that $M/2(N+1)^2 \leq \big| \mathcal{S}_{\text{H-nz}} \big| \leq 3M/2(N+1)^2$ is dropped. In its place, one could then apply a generalized Chernoff bound to deduce that with high probability, the input $(\textbf{x}_i, \textbf{y}_i) = (N,N)$ appears for a linear fraction of the runs in the Protocol \cite{RBHH+16}.

\section{Supplemental Material: Sketch of proof of security of one-sided device-independent randomness amplification}
In this section, we sketch the proof of security of the one-sided device-independent (1sDI) randomness amplification protocol from the main text. We defer a full security proof to future work for the following reason. It may be expected that this protocol can be modified to allow for a constant rate of noise. However, such a modification would require the application of a (strong) randomness extractor at the end of the protocol rather than obtaining the output bit from the outcomes of one randomly chosen run. It is then required to study if the techniques of the proofs against quantum side information from \cite{ADF+18, KAF17} can be adapted to prove general security in the 1sDI scenario. 

In this section, we show the existence of extremal sequential non-signalling assemblages that admit quantum realization in the time-ordered no-signaling (TONS) scenario. Furthermore, we show that such extremal assemblages are such that the outcomes of each party are perfectly random with respect to any non-signalling adversary. As such, under the correctness assumption that the devices behave according to quantum theory, picking the output bit from the measurement outcomes in a randomly chosen run (using an $\epsilon$-SV source) guarantees that the output bit is uniformly random and private with respect to any non-signalling adversary. Furthermore, since quantum correlations achieve algebraic violation of the steering inequality used in this protocol, the certification of perfect randomness can be made for arbitrary measurement independence, i.e., for arbitrary initial $\epsilon < 0.5$.

\subsection{No-signaling assemblages - Preliminaries}

In this subsection we give some preliminaries on the notions of no-signaling assemblages used to characterise the devices held by the honest parties and adversary Eve in the one-sided device-independent (1sDI) randomness amplification protocol. 

We consider a four-party device with three honest users Alice (A), Bob (B), Charlie (C) and an adversary Eve (E). The reduced three-party system held by the honest parties consists of two uncharacterised subsystems held by A-B and a trusted characterised system held by C. The subsystem C is quantum and is described by some Hilbert space of known dimension $d_C$, while the subsystems A and B are black box devices described only by a set of classical inputs (labels of the measurement settings) and classical outputs (labels of the measurement outcomes). In each run of the protocol, Alice and Bob perform local measurements $x, y$ respectively chosen using an $\epsilon$-SV source and obtain outcomes $a,b$ respectively.

When Alice and Bob choose the settings $x, y$, the quantum system of Charlie is described by the normalised version of operator $\sigma^{(C)}_{ab|xy}$ with probability $\mathrm{Tr}(\sigma^{(C)}_{ab|xy})$ (which describes the probability of Alice and Bob obtaining outcomes labeled by $a,b$). The  operator satisfies the no-signaling, non-negativity and normalisation conditions given by
\begin{equation}\label{sup_as1}
\forall_{a,b,x,y}\ \sigma^{(C)}_{ab|xy}\geq 0,
\end{equation}
\begin{equation}\label{sup_as2}
\forall_{b,x,x',y}\ \sum_a\sigma^{(C)}_{ab|xy}=\sum_a\sigma^{(C)}_{ab|x'y},
\end{equation}
\begin{equation}\label{sup_as3}
\forall_{a,x,y,y'}\ \sum_b\sigma^{(C)}_{ab|xy}=\sum_b\sigma^{(C)}_{ab|xy'},
\end{equation}
\begin{equation}\label{sup_as4}
\forall_{x,y}\ \mathrm{Tr}(\sum_{a,b}\sigma^{(C)}_{ab|xy})=\mathrm{Tr}(\sigma^{(C)})=1.
\end{equation}
The abstract tripartite no-signaling assemblage (with two uncharacterised subsystems) is then defined as a collection of operators $\Sigma^{(C)}=\left\{\sigma^{(C)}_{ab|xy}\right\}_{a,b,x,y}$ acting on fixed $d_C$ dimensional space, which satisfy constraints (\ref{sup_as1}-\ref{sup_as4}) \cite{SBCSV15}.  
If for a given no-signaling assemblage $\left\{\sigma^{(C)}_{ab|xy}\right\}_{a,b,x,y}$ there exist subsystems A and B, local POVMs given respectively by elements $M^{(A)}_{a|x},N^{(B)}_{b|y}$ and some tripartite state $\rho^{(ABC)}$ of the composite system ABC such that
\begin{equation}\label{sup_q2}
\sigma^{(C)}_{ab|xy}=\mathrm{Tr}_{AB}\left(M^{(A)}_{a|x}\otimes N^{(B)}_{b|y}\otimes \mathds{1}\rho^{(ABC)} \right),
\end{equation}
we say that $\Sigma^{(C)}$ admits a quantum realisation (or simply that $\Sigma^{(C)}$ is a quantum assemblage).

It is clear that the set of no-signaling assemblages and the subsets of quantum assemblages (assemblages with quantum realisation) are convex. The same is true for the subset of local hidden state (LHS) assemblages - a no-signaling assemblage from tripartite steering admits an LHS model \cite{SAPHS18} if it can be represented as 
\begin{equation}\label{sup_lhs2}
\sigma^{(C)}_{ab|xy}=\sum_i q_i p^{(A)}_i(a|x)p^{(B)}_i(b|y)\sigma^{(C)}_i
\end{equation}where $q_i\geq 0, \sum_i q_i=1$, $\sigma^{(C)}_i$ are some states of the characterised subsystem C and $p^{(A)}_i(a|x),p^{(B)}_i(b|y)$ denote conditional probability distributions for the uncharacterised subsystems A and B respectively.

\subsection{Sequential no-signaling assemblages}\label{sec1}

Consider a tripartite setting ABC in which each separated party obtains a particle at time $t_i$ where $i=1,\ldots, n$. Let two subsystems (A,B) be uncharacterised and let one subsystem be characterised (C). For the characterised subsystem C, let any particle obtained at $t_i$ be described by a Hilbert space $H_{C_i}$ and let then the whole subsystem C be related to $\otimes_{i=1}^{n}H_{C_{i}}$. Finally, take analogues description of particles arriving at subsystems A and B. Consider ordered sequence of points in time $t_1<t_2<\ldots <t_n$ and a steering scenario in which $A$ performs at $t_i$ local uncharacterised measurement (on the particles which arrived not later than $t_i$) with setting $x_i$ and outcome $a_i$, $B$ performs at $t_i$ local uncharacterised measurement (on the particles which arrived not later than $t_i$) with setting $y_i$ and outcome $b_i$, and there is no signaling between A and B as well as there is no signaling backward in time (from $t_i$ to $t_j$ with $j<i$), i.e. outcomes of past measurements do not depend on future settings.

For simplicity we will sometimes write $\textbf{a}$ instead of ordered n-tuple of outcomes $(a_1,\ldots, a_n)$, and for a given $0\leq k\leq n$ we will use notation $\textbf{a}_{\leq k}$ for the tuple obtained from $\textbf{a}$ by discarding outcomes $a_{i}$ with $i>k$. Finally, we will put $\textbf{a}_{\overline{k}}=(a_1,\ldots,a_{k-1},a_{k+1},\ldots, a_n)$. When convenient, we will use the same convention for $\textbf{i}=(1,\ldots, n)$ and any n-tuple related respectively to $b_i,x_i$ or $y_i$.

Using this notation we can describe quantum assemblage related to the discussed sequential scenario. \textit{Sequential quantum assemblage} is a collection of subnormalised states $\sigma_{\textbf{a}\textbf{b}|\textbf{x}\textbf{y}}$ obtained by
\begin{equation}\label{ass}
\sigma_{\textbf{a}\textbf{b}|\textbf{x}\textbf{y}}:=\mathrm{Tr}_{AB}\left(K^{(n)\dagger}_{a_n|x_{n}}\otimes L^{(n)\dagger}_{b_n|y_{n}}\ldots K^{(1)\dagger}_{a_1|x_{1}}\otimes L^{(1)\dagger}_{b_1|y_{1}}\rho_{ABC}K^{(1)}_{a_1|x_{1}}\otimes L^{(1)}_{b_1|y_{1}}\ldots K^{(n)}_{a_n|x_{n}}\otimes L^{(n)}_{b_n|y_{n}}\right)
\end{equation}where $\rho_{ABC}\in \otimes_{i=1}^{n}B(H_{A_i}\otimes H_{B_i}\otimes H_{C_{i}})$ is a given state and for any $1\leq k\leq n$ and any fixed $x_{k}, y_{k}$, operators $K^{(k)}_{a_k|x_{k}}K^{(k)\dagger}_{a_k|x_{ k}}\in \otimes_{i=1}^{k}B(H_{A_i})$ and $L^{(k)}_{b_k|y_{ k}}L^{(k)\dagger}_{b_k|y_{k}}\in\otimes_{i=1}^{k}B(H_{B_i})$ are elements of POVMs, i.e. $\sum_{a_k}K^{(k)}_{a_k|x_{k}}K^{(k)\dagger}_{a_k|x_{k}}=\mathds{1}\in \otimes_{i=1}^{k}B(H_{A_i})$ and $L^{(k)}_{b_k|y_{ k}}L^{(k)\dagger}_{b_k|y_{k}}=\mathds{1}\in \otimes_{i=1}^{k}B(H_{B_i})$. Note that we abuse the notation by writing $\mathds{1}$ for identity operator acting on a Hilbert space of arbitrary dimension, depending on the context. Moreover, in the above definition of $\sigma_{\textbf{a}\textbf{b}|\textbf{x}\textbf{y}}$, we put $K^{(n)\dagger}_{a_n|x_{n}}\otimes L^{(n)\dagger}_{b_n|y_{n}}\in \otimes_{i=1}^{k}B(H_{A_i}\otimes H_{B_i})$ instead of $K^{(n)\dagger}_{a_n|x_{n}}\otimes L^{(n)\dagger}_{b_n|y_{n}}\otimes \mathds{1}\in \otimes_{i=1}^{n}B(H_{A_i}\otimes H_{B_i}\otimes H_{C_{i}})$ for the sake of notation compactness.

If we forget about quantum nature of measurements on $A, B$, we may try to generalized this picture in the language of no-signaling assemblages. In that case description of the steering scenario in a sequential setting will by given by the \textit{sequential no-signaling assemblage}, i.e. by the set   
\begin{equation}\label{ass}
\Sigma^{(n)}=\left\{\sigma_{\textbf{a}\textbf{b}|\textbf{x}\textbf{y}}\right\}_{\textbf{a}\textbf{b}|\textbf{x}\textbf{y}}
\end{equation}of positive operators (subnormalised states) $\sigma_{\textbf{a}\textbf{b}|\textbf{x}\textbf{y}}=\sigma_{a_1b_1\ldots a_{n}b_{n}|x_1y_1\ldots x_{n}y_{n}}\in \otimes_{i=1}^{n}B(H_{C_{i}})$ satisfying the following conditions. \newline

\noindent 1) For any choice of $\textbf{x},\textbf{y}$
\begin{equation}\label{0con1}
\sum_{a_1,b_1,\ldots, a_n,b_n}\sigma_{a_1b_1\ldots a_nb_n|x_1y_1\ldots x_ny_n}=\sigma, \ \mathrm{Tr}(\sigma)=1.
\end{equation}

\noindent 2) For any $0\leq k\leq n$ and any choice of $\textbf{a}_{\leq k},\textbf{b},\textbf{x},\textbf{y}$ 
\begin{equation}\label{0con2}
\sigma_{\textbf{a}_{\leq k}\textbf{b}|\textbf{x}\textbf{y}}=\sigma_{\textbf{a}_{\leq k}\textbf{b}|\textbf{x}_{\leq k}\textbf{y}}
\end{equation}where 
\begin{equation}
\sigma_{\textbf{a}_{\leq k}\textbf{b}|\textbf{x}\textbf{y}}:=\sum_{a_{k+1},\ldots, a_n}\sigma_{a_1b_1\ldots a_nb_n|x_1y_1\ldots x_ny_n}.
\end{equation}
\noindent 3) For any $0\leq k\leq n$ and any choice of $\textbf{a},\textbf{b}_{\leq k},\textbf{x},\textbf{y}$ 
\begin{equation}\label{0con3}
\sigma_{\textbf{a}\textbf{b}_{\leq k}|\textbf{x}\textbf{y}}=\sigma_{\textbf{a}\textbf{b}_{\leq k}|\textbf{x}\textbf{y}_{\leq k}}
\end{equation}where 
\begin{equation}
\sigma_{\textbf{a}\textbf{b}_{\leq k}|\textbf{x}\textbf{y}}:=\sum_{b_{k+1},\ldots, b_n}\sigma_{a_1b_1\ldots a_nb_n|x_1y_1\ldots x_ny_n}.
\end{equation}

It is useful to present elements of sequential no-signaling assemblage as $\sigma_{\textbf{a}\textbf{b}|\textbf{x}\textbf{y}}=p_{\textbf{a}\textbf{b}|\textbf{x}\textbf{y}}\rho_{\textbf{a}\textbf{b}|\textbf{x}\textbf{y}}$ where $p_{\textbf{a}\textbf{b}|\textbf{x}\textbf{y}}=\mathrm{Tr}(\sigma_{\textbf{a}\textbf{b}|\textbf{x}\textbf{y}})$ stand for conditional probabilities and each $\rho_{\textbf{a}\textbf{b}|\textbf{x}\textbf{y}}$ denotes a normalised state (note that $\rho_{\textbf{a}\textbf{b}|\textbf{x}\textbf{y}}$ may be arbitrary if $p_{\textbf{a}\textbf{b}|\textbf{x}\textbf{y}}=0$). Observe that due to (\ref{0con1}-\ref{0con3}), $\left\{p_{\textbf{a}\textbf{b}|\textbf{x}\textbf{y}}\right\}_{\textbf{a}\textbf{b}|\textbf{x}\textbf{y}}$ forms a correlation box that obeys constraints of time-ordering no-signaling (TONS) - see for example \cite{BPA}.

Fix any $\textbf{x}_{\overline{k}},\textbf{y}_{\overline{k}}$. Observe that according to (\ref{0con2},\ref{0con3}) the following no-signaling assemblage
\begin{equation}\label{cond1}
\Sigma^{(1)}_{\textbf{x}_{\overline{k}}\textbf{y}_{\overline{k}}}:=\left\{\sum_{\textbf{a}_{\overline{k}},\textbf{b}_{\overline{k}}}\mathrm{Tr}_{\textbf{i}_{\overline{k}}}(\sigma_{a_1b_1\ldots a_{n}b_{n}|x_1y_1\ldots x_{n}y_{n}})\right\}_{a_k,b_k,x_k,y_k}
\end{equation}does not depend on the choice of those $x_s$ and $y_r$ for which $s,r>k$ (in other words one can define $\Sigma^{(1)}_{\textbf{x}_{<k}\textbf{y}_{<k}}:=\Sigma^{(1)}_{\textbf{x}_{\overline{k}}\textbf{y}_{\overline{k}}}$). I particular case when $\Sigma^{(1)}_{\textbf{x}_{<k}\textbf{y}_{<k}}$ does not depend on the choice of any $x_i,y_j$ (for $i,j\neq k$) we will simply write $\Sigma^{(1)}_{k}:=\Sigma^{(1)}_{\textbf{x}_{< k}\textbf{y}_{<k}}$ for unambiguous assemblage related to time $t_k$.

Note that for any fixed $\text{a}_{< n},\text{b}_{< n},\text{x}_{< n},\text{y}_{< n}$ we have (due to conditions (\ref{0con2},\ref{0con3}))
\begin{equation}\label{cond2}
\sum_{a_n}\mathrm{Tr}_{1,\ldots, n-1}(\sigma_{a_1b_1\ldots a_{n}b_{n}|x_1y_1 \ldots x_{n}y_{n}})=\sum_{a_n}\mathrm{Tr}_{1,\ldots, n-1}(\sigma_{a_1b_1\ldots a_{n}b_{n}|x_1y_1\ldots x'_{n}y_{n}})
\end{equation}and
\begin{equation}\label{cond3}
\sum_{b_n}\mathrm{Tr}_{1,\ldots, n-1}(\sigma_{a_1b_1\ldots a_{n}b_{n}|x_1y_1 \ldots x_{n}y_{n}})=\sum_{b_n}\mathrm{Tr}_{1,\ldots, n-1}(\sigma_{a_1b_1\ldots a_{n}b_{n}|x_1y_1\ldots x_{n}y'_{n}})
\end{equation}for any choice of $x_n,x'_n,y_n,y'_n$. If so then
\begin{equation}\label{subass}
\Sigma^{(1)}_{\textbf{a}_{< n}\textbf{b}_{< n}\textbf{x}_{< n}\textbf{y}_{< n}}:=\left\{\mathrm{Tr}_{1,\ldots, n-1}(\sigma_{a_1b_1\ldots a_{n}b_{n}|x_1y_1\ldots x_{n}y_{n}})\right\}_{a_n,b_n,x_n,y_n}
\end{equation}is a no-signaling assemblage up to a positive rescaling (it is a subnormalised assemblage).

From now on let $a_i,b_i,x_i,y_i\in \left\{0,1\right\}$ for any $i=1,\ldots, n$. In what follows, when this not cause a confusion we will omit subscripts while referring to sets.

\subsection{The steering inequality used in the 1sDI Protocol II} 
In the 1sDI protocol II in Fig. \ref{protocol1sDI} from the main text, we consider the simplest nontrivial case of tripartite steering with two uncharacterised subsystems, i.e., with $a,b,x,y\in \left\{0,1\right\}$. Note that a no-signaling assemblage can be then seen as a box of positive operators (i.e. subnormalised states) where $(a|x)$ label rows and $(b|y)$ label columns, i.e. 
\begin{equation}\label{eq:NS-assemblage}
\Sigma^{(C)}=\begin{pmatrix}
\begin{array}{cc|cc}
 \sigma^{(C)}_{00|00} &  \sigma^{(C)}_{01|00} & \sigma^{(C)}_{00|01} &  \sigma^{(C)}_{01|01} \\  
 \sigma^{(C)}_{10|00} & \sigma^{(C)}_{11|00}& \sigma^{(C)}_{10|01}  & \sigma^{(C)}_{11|01} \\ \hline
 \sigma^{(C)}_{00|10} & \sigma^{(C)}_{01|10} & \sigma^{(C)}_{00|11}&  \sigma^{(C)}_{01|11}  \\
   \sigma^{(C)}_{10|10} & \sigma^{(C)}_{11|10} & \sigma^{(C)}_{10|11} & \sigma^{(C)}_{11|11}
\end{array}
\end{pmatrix}.
\end{equation}In particular, LHS assemblages are convex combinations of extremal boxes  (of operators) which have only four nonzero positions occupied by the same pure state forming a rectangle with exactly one element for each pair $(x,y)$.

We also recall the notions of {\it similarity} and  {\it inflexibility} of no-signaling assemblages with pure rank one elements $\sigma_{ab|xy}^{(C)}$ introduced in \cite{RBRH20}.

\begin{definition}\label{similarity}
Consider a general no-signaling assemblage $\Sigma^{(C)}$ as in Eq.(\ref{eq:NS-assemblage}) with all positions occupied by at most rank one operators and denote it by $\Sigma^{(C)}=\left\{p_i|\psi^{(C)}_i\rangle \langle \psi^{(C)}_i|\right\}_i$, where $i=(ab|xy)$ and $p_i=\mathrm{Tr}(\sigma^{(C)}_i)$.  Any other assemblage $\tilde{\Sigma}^{(C)}=\left\{q_i|\psi^{(C)}_i\rangle \langle \psi^{(C)}_i|\right\}_i$ with the same states at the same positions and the additional property that $p_i=0$ implies $q_i=0$ is called {\it similar} to $\Sigma^{(C)}$.
\end{definition}
Note that the above relation is not symmetric, i.e. it may happen that $\tilde{\Sigma}^{(C)}$ is similar to $\Sigma^{(C)}$, 
but $\Sigma^{(C)}$ is not similar to $\tilde{\Sigma}^{(C)}$. The second concept is defined here

\begin{definition}\label{inflexibility}
An assemblage $\Sigma^{(C)}$ is called \textit{inflexible} if for any $\tilde{\Sigma}^{(C)}$ similar to $\Sigma^{(C)}$ we 
get $\Sigma^{(C)}=\tilde{\Sigma}^{(C)}$.
\end{definition}

Note that in particular {\it inflexibility implies extremality} in the set of all no-signaling assemblages. In the two-party steering scenario, it is known that all no-signaling assemblages admit quantum realisation \cite{HJW93}. As such, it is easy to identify extremal no-signaling assemblages (with quantum realisations) that mandate a decoupling from any adversarial system. On the other hand, none of the extremal points so far have been found to give rise to a perfect random bit under the constraint of arbitrarily weak measurement independence. Such a random bit is more readily obtained from a three-party test on a GHZ state for arbitrary initial $\epsilon$. This fact, coupled with a recent discovery that there are extremal non-trivial non-signaling assemblages in the tripartite scenario that admit quantum realisation motivates us to formulate the 1sDI protocol with three honest users. Specifically the following proposition was shown in \cite{RBRH20}. 

\begin{proposition}(\cite{RBRH20})\label{proposition_genuine}
For any pure genuine three-party entangled state $|\psi^{(ABC)}\rangle\in \mathbb{C}^{2}\otimes \mathbb{C}^{2}\otimes \mathbb{C}^{d}$ there exists a pair of PVMs with two outcomes on subsystems A and B respectively such that a no-signaling assemblage $\Sigma^{(C)}$ obtained by these measurements is inflexible, extremal, not LHS, and uniquely and maximally violates some steering inequality $F$.
\end{proposition}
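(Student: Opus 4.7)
The plan is to construct, for any pure genuine tripartite entangled state $|\psi^{(ABC)}\rangle \in \mathbb{C}^2 \otimes \mathbb{C}^2 \otimes \mathbb{C}^d$, an explicit pair of dichotomic PVMs on Alice's and Bob's qubits that produces an assemblage satisfying the four required properties. First, I would use the Schmidt decomposition across the $AB|C$ bipartition; since $\dim(H_A \otimes H_B)=4$, this decomposition has at most four Schmidt terms, and the genuine tripartite entanglement hypothesis translates into non-degeneracy conditions on the Schmidt vectors in $\mathbb{C}^2 \otimes \mathbb{C}^2$ (in particular ruling out any biseparable normal form). This canonical presentation reduces the problem to analysing a finite-parameter family of states.

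Next I would design PVMs $M^{(A)}_{a|x}=|\alpha^x_a\rangle\langle\alpha^x_a|$ and $N^{(B)}_{b|y}=|\beta^y_b\rangle\langle\beta^y_b|$ so that the resulting assemblage exhibits a Hardy-type pattern of zero entries. The elements
\begin{equation}
\sigma^{(C)}_{ab|xy} = \mathrm{Tr}_{AB}\!\left[\left(|\alpha^x_a\rangle\langle\alpha^x_a| \otimes |\beta^y_b\rangle\langle\beta^y_b| \otimes \mathds{1}\right)|\psi^{(ABC)}\rangle\langle\psi^{(ABC)}|\right]
\end{equation}
are automatically rank one and vanish precisely when the amplitude $(\langle\alpha^x_a|\otimes\langle\beta^y_b|\otimes\mathds{1})|\psi^{(ABC)}\rangle$ is zero in $\mathbb{C}^d$. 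By tuning the measurement directions on each qubit, one enforces several (but not all) of these amplitudes to vanish, producing a sparse assemblage that cannot admit the LHS form \eqref{sup_lhs2} because the rank-one entries are pinned to non-product pure states of the Schmidt vectors, which no convex mixture of product strategies can reproduce while maintaining the prescribed zeros.

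The crucial step is inflexibility. Given the chosen zero pattern and the rank-one post-measurement states $|\gamma^{xy}_{ab}\rangle$ sitting at the non-zero positions, the no-signaling constraints \eqref{sup_as2}--\eqref{sup_as4} become a linear system in the scalar weights $p_{ab|xy}=\mathrm{Tr}(\sigma^{(C)}_{ab|xy})$. I would verify that, under a suitable choice of measurement directions, these rank-one states are linearly independent in the directions required to make this linear system have a unique solution. Inflexibility then implies extremality in the convex set of all no-signaling assemblages, and a standard separating hyperplane argument furnishes a steering inequality $F$ that is uniquely maximally violated by $\Sigma^{(C)}$. The main obstacle I expect is to guarantee that such a choice of PVMs exists for \emph{every} genuine tripartite entangled state rather than only for a generic one; this requires a case analysis driven by the Schmidt rank across the $AB|C$ cut and by the internal structure of the Schmidt vectors $|\phi_i\rangle_{AB}$, with the hardest sub-case being states close to biseparable form along some bipartition, where the room for choosing informative measurements that simultaneously produce enough Hardy zeros and keep the non-zero states linearly independent is most constrained.
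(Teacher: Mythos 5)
First, note that the paper does not prove this proposition at all: it is imported verbatim from \cite{RBRH20}, so there is no in-paper argument to compare yours against. Judged on its own terms, your proposal is a plausible high-level plan that matches the known strategy (rank-one assemblages from PVMs on a pure state, inflexibility from a pinned zero/state pattern, extremality and self-testing as corollaries), but it contains a genuine gap at exactly the point where the proposition has content. The claim is universally quantified over \emph{all} pure genuinely entangled states in $\mathbb{C}^2\otimes\mathbb{C}^2\otimes\mathbb{C}^d$, and the step you defer --- ``verify that, under a suitable choice of measurement directions, these rank-one states are linearly independent in the directions required to make this linear system have a unique solution'' --- is the theorem. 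You yourself flag the worst sub-case (states near biseparable form) as an unresolved obstacle. Without an explicit argument that the required linear-independence and zero-pattern conditions can be met simultaneously for every such state (not just generically), the proof is a reduction of the proposition to itself. A complete argument has to exhibit the measurements (e.g.\ in terms of the Schmidt data across $AB|C$ together with the entanglement conditions across $A|BC$ and $B|AC$) and then actually run the no-signaling linear algebra to conclude uniqueness of the weights.

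Two smaller points. For the steering inequality, a ``standard separating hyperplane argument'' separates $\Sigma^{(C)}$ from the LHS set and certifies violation, but it does not give \emph{unique maximal} violation over the full no-signaling set, which is what the proposition asserts. The functional actually used (cf.\ Eq.~\eqref{expression} in the text) is built explicitly from the normalized target states, $F_{\Sigma}(\tilde{\Sigma})=\sum_{a,b,x,y}\mathrm{Tr}(\rho_{ab|xy}\tilde{\sigma}_{ab|xy})$ with $\rho_{ab|xy}=0$ where the target vanishes; saturation of the algebraic maximum forces $\tilde{\Sigma}$ to be \emph{similar} to $\Sigma^{(C)}$ in the sense of Definition~\ref{similarity}, and only then does inflexibility deliver uniqueness. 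Your argument for ``not LHS'' is also off target: the LHS decomposition \eqref{sup_lhs2} involves product \emph{response functions} for $A$ and $B$ and arbitrary states $\sigma^{(C)}_i$ of $C$ alone, so ``non-product pure states of the Schmidt vectors'' is not the obstruction; the clean route is that an extremal non-LHS-extremal assemblage (i.e.\ one not of the single-pure-state rectangle form) cannot lie in the LHS convex set.
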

Note that in contrast to the above Proposition, it has been shown in the fully device-independent setting that no non-trivial extremal no-signaling boxes admit quantum realization \cite{RTHHPRL}. In the 1sDI protocol, we use a specific instance of  Prop. \ref{proposition_genuine} in which the steering inequality $F$ allows for randomness extraction for arbitrarily measurement independence, i.e., when the measurement settings are chosen using SV sources with arbitrary $\epsilon < 1/2$. Furthermore, the maximal violation of the chosen steering inequality $F$ is self-testing for the GHZ state and certifies that the output bit of one party (Alice) is perfectly random. 

In particular, we consider the steering inequality defined as follows. Consider the three-qubit GHZ state $|\psi^{(ABC)}\rangle=\frac{1}{\sqrt{2}}\left(|000\rangle +|111\rangle\right)$. Let the no-signaling assemblage $\Sigma^{(C)}_{GHZ}$ be given by $\sigma^{(C)}_{ab|xy}=\mathrm{Tr}_{AB}(P^{(A)}_{a|x}\otimes Q^{(B)}_{b|y}\otimes \mathds{1}|\psi^{(ABC)}\rangle \langle \psi^{(ABC)}|)$ with $P^{(A)}_{0|0}=Q^{(B)}_{0|0}=|+\rangle \langle +|$ and $P^{(A)}_{0|1}=Q^{(B)}_{0|1}=|0\rangle \langle 0|$, i.e.
\begin{equation}\nonumber
\Sigma^{(C)}_{GHZ}=\frac{1}{4}\begin{pmatrix}
\begin{array}{cc|cc}
 |+\rangle \langle +| &  |-\rangle \langle -| & |0\rangle \langle 0| &  |1\rangle \langle 1| \\  
 |-\rangle \langle -| & |+\rangle \langle +|& |0\rangle \langle 0|  &|1\rangle \langle 1|\\ \hline
 |0\rangle \langle 0| & |0\rangle \langle 0| &2|0\rangle \langle 0|&  0  \\
   |1\rangle \langle 1| &  |1\rangle \langle 1|& 0 & 2|1\rangle \langle 1| 
\end{array}
\end{pmatrix}.
\end{equation}
We first consider the steering functional from \cite{RBRH20} given by
\begin{equation}\label{expression}
F_{\Sigma^{(C)}_{GHZ}}(\tilde{\Sigma}^{(C)})=\sum_{a,b,x,y = 0,1}\mathrm{Tr}(\rho_{ab|xy}\tilde{\sigma}^{(C)}_{ab|xy}),
\end{equation}
where 
\begin{equation}
 \rho_{ab|xy}=
\begin{cases}
0\ \ \ \mathrm{for}\ \sigma^{(C)}_{ab|xy}=0,\\
\frac{\sigma^{(C)}_{ab|xy}}{\mathrm{Tr}(\sigma^{(C)}_{ab|xy})} \ \ \  \mathrm{for}\ \sigma^{(C)}_{ab|xy}\neq 0.
\end{cases}
\end{equation}
Observe that $F_{\Sigma^{(C)}_{GHZ}}(\tilde{\Sigma}^{(C)})\leq 4$ for all no-signalling assemblages and as shown in \cite{RBRH20}, the maximal value of $F_{\Sigma^{(C)}_{GHZ}}(\tilde{\Sigma}^{(C)})$ is uniquely obtained for $\Sigma^{(C)}_{GHZ}$. It was also shown in \cite{RBRH20} that the maximum value of the steering functional over all local hidden state assemblages is given by $
c=\sup_{|\phi\rangle}\mathrm{Tr}\left[(3|0\rangle \langle 0|+|+\rangle \langle +|)|\phi\rangle \langle \phi|\right]=\frac{4+\sqrt{10}}{2} < 4$. 

The parties in the Protocol II test for the complementary steering functional given as $4 - F_{\Sigma^{(C)}_{GHZ}}(\tilde{\Sigma}^{(C)})$ which achieves its minimum algebraic value of $0$ in quantum theory while the LHS minimum is $\frac{4-\sqrt{10}}{2}$. In other words, the parties check that the quantity 
\begin{eqnarray}
Z_{\text{S}} := \frac{1}{M} \sum_{i=1}^{M} B_{\text{S}}(a_i, b_i, x_i, y_i) = 0
\end{eqnarray} 
where
\begin{eqnarray}
B_{\text{S}}(a_i, b_i, x_i, y_i) = \left(\mathds{1} - \rho_{a_i,b_i|x_i,y_i} \right) \cdot \tilde{\sigma}^{(C)}_{a_i,b_i|x_i,y_i}
\end{eqnarray}
and $\tilde{\sigma}^{(C)}_{a_i,b_i|x_i,y_i}$ is the state of Charlie's system for the output-input pairs $(a_i,x_i)$ and $(b_i,y_i)$ on Alice and Bob's systems. Given that the algebraic violation of the inequality is achieved by quantum correlations, the steering test works for arbitrary initial $\epsilon$ \cite{PRB+14}. 
Finally, it is also clear that Alice's measurements $P_{0|0}^{(A)}$ and $P_{0|1}^{(A)}$ on the GHZ state give rise to perfectly random outcomes.  
In the next subsection, we show that the properties of self-testing, extremality and perfect randomness in the measurement output of one party persist in the time-ordered no-signaling steering scenario where the parties perform sequential measurements on their respective subsystems. 

\subsection{Relative inflexibility of sequential assemblages}

\begin{thm}\label{thm} Let $\Sigma^{(n)}$ be a sequential assemblage such that $\Sigma^{(1)}_i=\Sigma_{i}$ for any $i=1,\ldots, n$ and each $\Sigma_{i}$ is an inflexible assemblage, then automatically $\Sigma^{(n)}=\otimes_i^n \Sigma_{i}$.
\end{thm}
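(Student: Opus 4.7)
The plan is to proceed by induction on $n$. The base case $n=1$ is immediate since $\Sigma^{(1)} = \Sigma^{(1)}_1 = \Sigma_1$ by definition. For the inductive step I would exploit the fact that each $\Sigma_i$ is inflexible, so in particular its non-zero elements are rank-one, of the form $\sigma_{i,a_i b_i|x_i y_i} = p_{i,a_i b_i|x_i y_i}|\psi_{i,a_i b_i|x_i y_i}\rangle\langle\psi_{i,a_i b_i|x_i y_i}|$, and combine this with the TONS structure to first peel the last time step off as a tensor factor, and then apply the induction hypothesis to the resulting $(n-1)$-step marginal.

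\textbf{Isolating the last time step.} Fix past outcomes and settings $(\textbf{a}_{<n},\textbf{b}_{<n},\textbf{x}_{<n},\textbf{y}_{<n})$ and consider the subnormalised sub-assemblage
\[
\tau_{a_n b_n|x_n y_n} := \mathrm{Tr}_{1,\ldots,n-1}\bigl(\sigma_{\textbf{a}\textbf{b}|\textbf{x}\textbf{y}}\bigr),
\]
which is a valid subnormalised no-signaling assemblage on the single time step $n$ by Eqs.~(\ref{cond2})--(\ref{cond3}). Summing over $(\textbf{a}_{<n},\textbf{b}_{<n})$ recovers the one-time marginal $\Sigma^{(1)}_n = \Sigma_n$, whose elements are rank-one. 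The key observation is that a sum of positive semidefinite operators can equal a rank-one operator only if each summand is itself proportional to that same rank-one projector (and must vanish whenever the target does). Applied at each fixed $(a_n,b_n,x_n,y_n)$ this gives $\tau_{a_n b_n|x_n y_n}\propto |\psi_{n,a_n b_n|x_n y_n}\rangle\langle\psi_{n,a_n b_n|x_n y_n}|$, with zero-pattern matching that of $\Sigma_n$. Rescaling by the total probability $q(\textbf{a}_{<n},\textbf{b}_{<n}|\textbf{x}_{<n},\textbf{y}_{<n}) := \mathrm{Tr}\bigl(\sum_{a_n b_n}\tau_{a_n b_n|x_n y_n}\bigr)$, which is independent of $(x_n,y_n)$ by no-signaling, yields (when $q>0$) a bona fide no-signaling assemblage that is similar to $\Sigma_n$ in the sense of Definition~\ref{similarity}. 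Inflexibility of $\Sigma_n$ then forces
\[
\mathrm{Tr}_{1,\ldots,n-1}\bigl(\sigma_{\textbf{a}\textbf{b}|\textbf{x}\textbf{y}}\bigr) = q(\textbf{a}_{<n},\textbf{b}_{<n}|\textbf{x}_{<n},\textbf{y}_{<n})\,\sigma_{n,a_n b_n|x_n y_n},
\]
with the $q=0$ case handled trivially by both sides vanishing.

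\textbf{Factorising and inducting.} Since the reduced state of $\sigma_{\textbf{a}\textbf{b}|\textbf{x}\textbf{y}}$ on $H_{C_n}$ is now proportional to the rank-one projector $|\psi_{n,a_n b_n|x_n y_n}\rangle\langle\psi_{n,a_n b_n|x_n y_n}|$, an eigendecomposition of $\sigma_{\textbf{a}\textbf{b}|\textbf{x}\textbf{y}}$ together with the observation that each eigenvector must then take the product form $|v\rangle\otimes|\psi_{n,a_n b_n|x_n y_n}\rangle$ delivers
\[
\sigma_{\textbf{a}\textbf{b}|\textbf{x}\textbf{y}} = \tilde{\tau}_{\textbf{a}\textbf{b}|\textbf{x}\textbf{y}} \otimes |\psi_{n,a_n b_n|x_n y_n}\rangle\langle\psi_{n,a_n b_n|x_n y_n}|
\]
for some PSD operator $\tilde{\tau}_{\textbf{a}\textbf{b}|\textbf{x}\textbf{y}}$ on $\bigotimes_{i=1}^{n-1}H_{C_i}$. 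Summing over $(a_n,b_n)$ defines the $(n-1)$-step marginal $\sigma^{(n-1)}_{\textbf{a}_{<n}\textbf{b}_{<n}|\textbf{x}_{<n}\textbf{y}_{<n}}$; one checks directly from Eqs.~(\ref{0con1})--(\ref{0con3}) that this is a sequential no-signaling assemblage on $n-1$ time steps whose one-time marginals coincide with $\Sigma_1,\ldots,\Sigma_{n-1}$. The induction hypothesis then gives $\sigma^{(n-1)}_{\textbf{a}_{<n}\textbf{b}_{<n}|\textbf{x}_{<n}\textbf{y}_{<n}} = \bigotimes_{i=1}^{n-1}\sigma_{i,a_i b_i|x_i y_i}$, which is again rank-one.

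Applying the positivity-plus-rank-one argument once more to $\sum_{a_n b_n}\tilde{\tau}_{\textbf{a}\textbf{b}|\textbf{x}\textbf{y}} = \bigotimes_{i=1}^{n-1}\sigma_{i,a_i b_i|x_i y_i}$ forces each $\tilde{\tau}_{\textbf{a}\textbf{b}|\textbf{x}\textbf{y}}$ to be a nonnegative scalar multiple of this rank-one operator; matching the trace against $\mathrm{Tr}(\tilde{\tau}_{\textbf{a}\textbf{b}|\textbf{x}\textbf{y}}) = q\cdot p_{n,a_n b_n|x_n y_n}$ and using $\sum_{a_n b_n}p_{n,a_n b_n|x_n y_n}=1$ identifies $q=\prod_{i<n}p_{i,a_i b_i|x_i y_i}$, yielding
\[
\sigma_{\textbf{a}\textbf{b}|\textbf{x}\textbf{y}} = \bigotimes_{i=1}^{n}\sigma_{i,a_i b_i|x_i y_i}
\]
and closing the induction. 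The main obstacle I anticipate is bookkeeping-heavy rather than conceptual: one must verify at each stage that the constructed conditional and marginal objects genuinely satisfy the TONS structure, that the normalising scalar $q$ has no residual dependence on $(x_n,y_n)$, and that zero-pattern compatibility holds so that Definition~\ref{similarity} applies and inflexibility of $\Sigma_n$ (and, via the inductive hypothesis, of $\Sigma_1,\ldots,\Sigma_{n-1}$) may legitimately be invoked.
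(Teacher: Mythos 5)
Your proposal is correct and follows essentially the same route as the paper's proof: both arguments peel off one time step, use the fact that a sum of positive operators equal to a rank-one operator forces every summand to be proportional to that projector (hence the pure single-time marginal forces a tensor-product structure), and then invoke similarity plus inflexibility of the past-conditioned single-time assemblage to fix the conditional probabilities, closing by induction. The only difference is cosmetic: the paper treats $n=2$ explicitly and applies the induction hypothesis to the truncated assemblage before handling the final step, whereas you handle the final step first and then induct on the first $n-1$ steps.
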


\begin{proof}

For simplicity we will use the notation in which $|\psi\rangle \langle \psi|=\Psi$ for any pure state. Consider first the special case with $n=2$ and $\Sigma^{(2)}$ such that
\begin{equation}\label{eq}
\Sigma^{(1)}_1=\Sigma_{1}:=\left\{q^{(1)}_{ab|xy}\Psi^{(1)}_{ab|xy}\right\}
\end{equation}and
\begin{equation}\label{eq2}
\Sigma^{(1)}_2=\Sigma_{2}:=\left\{q^{(2)}_{ab|xy}\Psi^{(2)}_{ab|xy}\right\}
\end{equation}where $\Sigma_{1},\Sigma_{2}$ stand for some fixed inflexible assemblages. We will show that $\Sigma^{(2)}=\Sigma^{(1)}_1\otimes\Sigma^{(1)}_2$.

Observe that 
\begin{equation}
p_{a_1b_1a_2b_2|x_1y_1x_2y_2}=p_{a_2b_2|a_1b_1x_1y_1x_2y_2}p_{a_1b_1|x_1y_1x_2y_2}
\end{equation}and
\begin{equation}
p_{a_1b_1|x_1y_1x_2y_2}=p_{a_1b_1|x_1y_1}
\end{equation} by no-signaling (backward in time, from $t_2$ to $t_1$ - see conditions (\ref{0con2},\ref{0con3})). Consider now the spectral decomposition given by
\begin{equation}
\rho_{a_1b_1a_2b_2|x_1y_1x_2y_2}=\sum_j \lambda^{(j)}_{a_1b_1a_2b_2|x_1y_1x_2y_2}\Phi^{(j)}_{a_1b_1a_2b_2|x_1y_1x_2y_2}.
\end{equation}By equality (\ref{eq}) for $\Sigma^{(1)}_1$ we get
\begin{equation}\label{con1}
q^{(1)}_{a_1b_1|x_1y_1}\Psi^{(1)}_{a_1b_1|x_1y_1}=\sum_{a_2,b_2}p_{a_2b_2|a_1b_1x_1y_1x_2y_2}p_{a_1b_1|x_1y_1}\mathrm{Tr}_{2}(\rho_{a_1b_1a_2b_2|x_1y_1x_2y_2})
\end{equation} for any fixed $a_1,b_1,x_1,y_1$ and any choice of $x_2,y_2$. Note that $q^{(1)}_{a_1b_1|x_1y_1}=0$ if and only if $p_{a_1b_1|x_1y_1}=0$. If this is not the case, (\ref{con1}) may be true if and only if each $\Phi^{(j)}_{a_1b_1a_2b_2|x_1y_1x_2y_2}$ (such that $p_{a_2b_2|a_1b_1x_1y_1x_2y_2}\neq 0$) is related to the product state
\begin{equation}\label{phi}
\Phi^{(j)}_{a_1b_1a_2b_2|x_1y_1x_2y_2}=\Psi^{(1)}_{a_1b_1|x_1y_1}\otimes \tilde{\Psi}^{(j)}_{a_1b_1a_2b_2|x_1y_1x_2y_2}
\end{equation}and $p_{a_1b_1|x_1y_1}=q^{(1)}_{a_1b_1|x_1y_1}$. On the other hand, by equality (\ref{eq2}), we get
\begin{equation}\label{con2}
q^{(2)}_{a_2b_2|x_2y_2}\Psi^{(2)}_{a_2b_2|x_2y_2}=\sum_{a_1,b_1}p_{a_2b_2|a_1b_1x_1y_1x_2y_2}p_{a_1b_1|x_1y_1}\mathrm{Tr}_{1}(\rho_{a_1b_1a_2b_2|x_1y_1x_2y_2})
\end{equation} for any fixed $a_2,b_2,x_2,y_2$ and any choice of $x_1,y_1$. Note that $q^{(2)}_{a_2b_2|x_2y_2}=0$ implies 
$p_{a_1b_1|x_1y_1}=0$ or $p_{a_2 b_2|a_1b_1x_1y_1x_2y_2}=0$ (for all $a_1,b_1$) and if $p_{a_2b_2|a_1b_1x_1y_1x_2y_2}p_{a_1b_1|x_1y_1}\neq 0$ then for each $j$ (by (\ref{phi})) 
\begin{equation}
\Phi^{(j)}_{a_1b_1a_2b_2|x_1y_1x_2y_2}=\Psi^{(1)}_{a_1b_1|x_1y_1}\otimes \Psi^{(2)}_{a_2b_2|x_2y_2}.
\end{equation}From this without loss of generality one can write
\begin{equation}
\Sigma^{(2)}=\left\{p_{a_1b_1|x_1y_1}p_{a_2b_2|a_1b_1, x_1y_1x_2y_2}\Psi^{(1)}_{a_1b_1|x_1y_1}\otimes \Psi^{(2)}_{a_2b_2|x_2y_2}\right\}.
\end{equation}For some fixed $a_1,b_1,x_1,y_1$ (such that $p_{a_1b_1|x_1y_1}\neq 0$) consider $\Sigma^{(1)}_{a_1b_1x_1y_1}=\left\{\mathrm{Tr}_{1}\left(\sigma_{a_1b_1a_2b_2|x_1y_1x_2y_2}\right)\right\}$ (compare with (\ref{subass})). This object (up to normalisation) has to be a no-signaling assemblage similar to $\Sigma_{2}$, but from inflexibility of $\Sigma_{2}$ we have
\begin{equation}
p_{a_1b_1|x_1y_1}p_{a_2b_2|a_1b_1x_1y_1x_2y_2}=\alpha_{a_1b_1|x_1y_1}q^{(2)}_{a_2b_2|x_2y_2}
\end{equation}for some positive constant $\alpha_{a_1b_1|x_1y_1}$. Therefore 
\begin{equation}
p_{a_1b_1|x_1y_1}=\sum_{a_2,b_2}p_{a_1b_1|x_1y_1}p_{a_2b_2|a_1b_1x_1y_1x_2y_2}=\sum_{a_2,b_2}\alpha_{a_1b_1|x_1y_1}q^{(2)}_{a_2b_2|x_2y_2}
\end{equation}and finally $\alpha_{a_1b_1|x_1y_1}=p_{a_1b_1|x_1y_1}$. Since in has been already established that $p_{a_1b_1|x_1y_1}=q^{(1)}_{a_1b_1|x_1y_1}$, in the end we see that
\begin{equation}
\Sigma^{(2)}=\left\{q^{(1)}_{a_1b_1|x_1y_1}q^{(2)}_{a_2b_2|x_2y_2}\Psi^{(1)}_{a_1b_1|x_1y_1}\otimes \Psi^{(2)}_{a_2b_2|x_2y_2}\right\}=\Sigma_1\otimes \Sigma_2.
\end{equation}

To conclude the proof we need to show that if the statement is true for a given $n$, it must be true as well for $n+1$. In order to do that consider sequential assemblage $\Sigma^{(n+1)}$ such that $\Sigma^{(1)}_i=\Sigma_{i}$ for any $i=1,\ldots, n+1$ and each $\Sigma_{i}$ is inflexible. Define new sequential assemblage by discarding part related to time $t_{n+1}$, i.e. define
\begin{equation}
\tilde{\Sigma}^{(n)}=\left\{\sum_{a_{n+1},b_{n+1}}\mathrm{Tr}_{n+1}\left(\sigma_{a_1b_1\ldots a_{n+1}b_{n+1}|x_1y_1\ldots x_{n+1}y_{n+1}}\right)\right\}.
\end{equation}Because for any $i=1,\ldots ,n$ we have $\tilde{\Sigma}^{(1)}_i=\Sigma^{(1)}_i=\Sigma_{i}$, due to inductive assumption we obtain
\begin{equation}\label{prod}
\tilde{\Sigma}^{(n)}=\otimes_{i}^n\Sigma_{i}.
\end{equation}
Observe that 
\begin{equation}
p_{a_1b_1\ldots a_{n+1}b_{n+1}|x_1y_1\ldots x_{n+1}y_{n+1}}=p_{a_{n+1}b_{n+1}|a_1b_2\ldots a_nb_nx_1y_1\ldots x_{n+1}y_{n+1}}p_{a_1b_1\ldots a_nb_n|x_1y_1\ldots x_{n+1}y_{n+1}}
\end{equation}and
\begin{equation}
p_{a_1b_1\ldots a_nb_n|x_1y_1\ldots x_{n+1}y_{n+1}}=p_{a_1b_1\ldots a_nb_n|x_1y_1\ldots x_{n}y_{n}}
\end{equation} by no-signaling (backward in time). 

Moreover, because of (\ref{prod}) and $\Sigma^{(1)}_{n+1}=\Sigma_{n+1}$, by the similar arguments as considered above in specific case $n=2$, one can see that
\begin{equation}
p_{a_1b_1\ldots a_nb_n|x_1y_1\ldots x_{n}y_{n}}=\prod_i^n q^{(i)}_{a_ib_i|x_iy_i}
\end{equation} and without loss of generality $\Sigma^{(n+1)}$ can be expressed as
\begin{equation}\nonumber
\Sigma^{(n+1)}=\left\{\prod_i^n q^{(i)}_{a_ib_i|x_iy_i}p_{a_{n+1}b_{n+1}|a_1b_2\ldots a_nb_nx_1y_1\ldots x_{n+1}y_{n+1}}\otimes_i^n \Psi^{(i)}_{a_ib_i|x_iy_i}\otimes \Psi^{(n+1)}_{a_{n+1}b_{n+1}|x_{n+1}y_{n+1}}\right\}.
\end{equation}For a fixed choice of $a_1,b_2,\ldots, a_n,b_nx_1,y_1,\ldots ,x_{n},y_{n}$ such that $\prod_i^n q^{(i)}_{a_ib_i|x_iy_i}\neq 0$, one may consider subnormalised assemblage $\Sigma^{(1)}_{\textbf{a}_{\leq n}\textbf{b}_{\leq n}\textbf{x}_{\leq n}\textbf{y}_{\leq n}}=\left\{\mathrm{Tr}_{1,\ldots, n}\left(\sigma_{a_1b_1\ldots a_{n+1}b_{n+1} |x_1y_1\ldots x_{n+1}y_{n+1})}\right)\right\}$ and once again by the same arguments like above, one can show that
\begin{equation}
p_{a_{n+1}b_{n+1}|a_1b_2\ldots a_nb_nx_1y_1\ldots x_{n+1}y_{n+1}}=q^{(n+1)}_{a_{n+1}b_{n+1}|x_{n+1}y_{n+1}}
\end{equation} and finally 
\begin{equation}\nonumber
\Sigma^{(n+1)}=\left\{\prod_i^{n+1} q^{(i)}_{a_ib_i|x_iy_i}\otimes_{i=1}^{n+1} \Psi^{(i)}_{a_ib_i|x_iy_i}\right\}=\otimes_{i}^{n+1}\Sigma_{i}.
\end{equation}By the rule of inductive procedure the proof is completed.
\end{proof}

In particular if for any $i=1,\ldots, n$ we have $\Sigma^{(1)}_i=\Sigma_{fix}$ for some inflexible assemblage $\Sigma_{fix}$, then $\Sigma^{(n)}=\otimes_i^n \Sigma_{fix}$

\subsection{Self-testing result for sequential assemblages}

For any $i=1,\ldots, n$ consider a given inflexible assemblage $\Sigma_{i}=\left\{\sigma^{(i)}_{ab|xy}\right\}=\left\{q^{(i)}_{ab|xy}\Psi^{(i)}_{ab|xy}\right\}$ where once more $\Psi$ stands for pure state $|\psi\rangle \langle \psi|$. We know \cite{RBRH20} that it can be self-tested by some functional (steering inequity) $F_i$ defined by
\begin{equation}
 F_i(\Sigma)=\sum_{a,b,x,y = 0,1}\mathrm{Tr}(\tau^{(i)}_{ab|xy}\sigma_{ab|xy}), \ \ \ \ \ \tau^{(i)}_{ab|xy}=
\begin{cases}
0\ \ \ \mathrm{for}\ \sigma^{(i)}_{ab|xy}=0,\\
\frac{\sigma^{(i)}_{ab|xy}}{\mathrm{Tr}(\sigma^{(i)}_{ab|xy})} \ \ \  \mathrm{for}\ \sigma^{(i)}_{ab|xy}\neq 0.
\end{cases}
\end{equation}where $\Sigma=\left\{\sigma_{ab|xy}\right\}$ stands for any no-signaling assemblage ($F_i(\Sigma)=4$ if and only if $\Sigma=\Sigma_i$ - see \cite{RBRH20}).

Now, based on the structure of each $F_i$, define a new functional $F_{1\ldots n}^{(n)}$ acting on sequential assemblages (described in the Section \ref{sec1}) by
\begin{equation}
 F_{1,\ldots, n}^{(n)}(\Sigma^{(n)}):=\sum_{\textbf{a},\textbf{b},\textbf{x},\textbf{y}}\mathrm{Tr}(\tau_{\textbf{a}\textbf{b}|\textbf{x}\textbf{y}}\sigma_{\textbf{a}\textbf{b}|\textbf{x}\textbf{y}}), \ \ \ \ \ \tau_{\textbf{a}\textbf{b}|\textbf{x}\textbf{y}}=\otimes_i^{n}\tau^{(i)}_{a_ib_i|x_iy_i}
\end{equation}where $\Sigma^{(n)}=\left\{\sigma_{\textbf{a}\textbf{b}|\textbf{x}\textbf{y}}\right\}$ stands for any sequential no-signaling assemblage on $n$ time points. Observe that the maximal value for $F_{1\ldots n}^{(n)}(\tilde{\Sigma}^{(n)})$ if given by $4^n$ and if so, then we can write (for this sequential assemblage $\tilde{\Sigma}^{(n)}$ which maximizes $F_{1\ldots n}^{(n)}$)
\begin{equation}
\tilde{\Sigma}^{(n)}=\left\{p_{\textbf{a}\textbf{b}|\textbf{x}\textbf{y}}\rho_{\textbf{a}\textbf{b}|\textbf{x}\textbf{y}}\right\}
\end{equation}with $\rho_{\textbf{a}\textbf{b}|\textbf{x}\textbf{y}}=\prod_i^{n}\Psi^{(i)}_{a_ib_i|x_iy_i}$ and $p_{\textbf{a}\textbf{b}|\textbf{x}\textbf{y}}=0$ whenever there exist $i$ such that $q^{(i)}_{a_ib_i|x_iy_i}=0$. 

Fix any $i$. According to the previous observation we have (compare with formula (\ref{cond1}))
\begin{equation}
F_{1,\ldots, n}^{(n)}(\tilde{\Sigma}^{(n)})=\sum_{\textbf{x}_{\overline{i}},\textbf{y}_{\overline{i}}}\left(\sum_{a_i,b_i,x_iy_i}\mathrm{Tr}\left((\tilde{\Sigma}^{(1)}_{\textbf{x}_{\overline{i}},\textbf{y}_{\overline{i}}})_{a_ib_i|x_iy_i}\right)\right)=\sum_{\textbf{x}_{\overline{i}},\textbf{y}_{\overline{i}}}F_i(\tilde{\Sigma}^{(1)}_{\textbf{x}_{\overline{i}},\textbf{y}_{\overline{i}}})=4^n. 
\end{equation}Note that as above expression boils down to the sum of $4^{n-1}$ terms and that each term is given by a value of functional $F_i$ on some no-signaling assemblage $\tilde{\Sigma}^{(1)}_{\textbf{x}_{\overline{i}},\textbf{y}_{\overline{i}}}$. Therefore, from this equality we get $F_i(\tilde{\Sigma}^{(1)}_{\textbf{x}_{\overline{i}},\textbf{y}_{\overline{i}}})=4$ for each choice $\textbf{x}_{\overline{i}},\textbf{y}_{\overline{i}}$ and by uniqueness $\tilde{\Sigma}^{(1)}_{\textbf{x}_{\overline{i}},\textbf{y}_{\overline{i}}}=\Sigma_{i}^{(1)}=\Sigma_{i}$. As the choice of $i$ was arbitrary, $\tilde{\Sigma}^{(1)}_{i}=\Sigma_{i}$ for any $i$. We can see now that $\tilde{\Sigma}^{(n)}$ fulfills assumption of Theorem \ref{thm} and we derive at the following results.

\begin{thm}\label{thm2} $F^{(n)}_{1,\ldots, n}(\Sigma^{(n)})=4^n$ if and only if $\Sigma^{(n)}=\otimes_i^n \Sigma_{i}$, where each inflexible assemblage $\Sigma_{i}$ maximizes functional $F_i$.
\end{thm}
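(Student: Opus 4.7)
The plan is to establish both directions of the equivalence separately, with the bulk of the work already handled by Theorem \ref{thm} and the uniqueness of self-testing by $F_i$. For the easy direction ($\Leftarrow$), I would use the tensor-product structure: if $\Sigma^{(n)} = \otimes_i^n \Sigma_i$, then because $\tau_{\textbf{a}\textbf{b}|\textbf{x}\textbf{y}} = \otimes_i^n \tau^{(i)}_{a_i b_i | x_i y_i}$ by definition, the trace factorizes across tensor factors, giving $F^{(n)}_{1,\ldots,n}(\otimes_i^n \Sigma_i) = \prod_i^n F_i(\Sigma_i) = 4^n$ since each $\Sigma_i$ maximizes $F_i$ at value $4$ by hypothesis.

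For the nontrivial direction ($\Rightarrow$), I would closely follow the calculation already carried out in the paragraph preceding the theorem statement. Fix any $i \in \{1,\ldots,n\}$; by grouping terms in the definition of $F^{(n)}_{1,\ldots,n}(\tilde{\Sigma}^{(n)})$ according to the settings $\textbf{x}_{\overline{i}},\textbf{y}_{\overline{i}}$ at times other than $t_i$ and summing over the corresponding outcomes, the functional rewrites as
\begin{equation}
F^{(n)}_{1,\ldots,n}(\tilde{\Sigma}^{(n)}) = \sum_{\textbf{x}_{\overline{i}},\textbf{y}_{\overline{i}}} F_i\bigl(\tilde{\Sigma}^{(1)}_{\textbf{x}_{\overline{i}},\textbf{y}_{\overline{i}}}\bigr),
\end{equation}
a sum of $4^{n-1}$ nonnegative terms each bounded above by $4$. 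Equality to $4^n$ then forces every single term to saturate at $4$, so each marginal assemblage $\tilde{\Sigma}^{(1)}_{\textbf{x}_{\overline{i}},\textbf{y}_{\overline{i}}}$ must maximize $F_i$. Invoking the uniqueness of self-testing from \cite{RBRH20}, this yields $\tilde{\Sigma}^{(1)}_{\textbf{x}_{\overline{i}},\textbf{y}_{\overline{i}}} = \Sigma_i$ for every choice of the other settings, and in particular $\tilde{\Sigma}^{(1)}_i = \Sigma_i$ as an unambiguous marginal assemblage at time $t_i$. Since $i$ was arbitrary, this holds for every $i$.

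The final step is a direct appeal to Theorem \ref{thm}: the sequential no-signaling assemblage $\tilde{\Sigma}^{(n)}$ now satisfies all the hypotheses, with each marginal $\tilde{\Sigma}^{(1)}_i$ equal to the prescribed inflexible assemblage $\Sigma_i$, so we conclude $\tilde{\Sigma}^{(n)} = \otimes_i^n \Sigma_i$. The main conceptual difficulty has already been absorbed into Theorem \ref{thm}, which rules out correlations between time slots; what remains for Theorem \ref{thm2} is the essentially combinatorial step of using saturation of a sum of bounded terms to pass from the single functional identity $F^{(n)}_{1,\ldots,n}(\tilde{\Sigma}^{(n)}) = 4^n$ to the stronger pointwise condition $F_i(\tilde{\Sigma}^{(1)}_{\textbf{x}_{\overline{i}},\textbf{y}_{\overline{i}}}) = 4$ for every choice of past and future settings. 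No further obstacle is expected beyond careful bookkeeping of the time-ordered no-signaling constraints used in deriving the decomposition of $F^{(n)}_{1,\ldots,n}$.
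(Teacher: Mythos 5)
Your proposal is correct and follows essentially the same route as the paper: the forward direction decomposes $F^{(n)}_{1,\ldots,n}$ into $4^{n-1}$ terms of the form $F_i\bigl(\tilde{\Sigma}^{(1)}_{\textbf{x}_{\overline{i}},\textbf{y}_{\overline{i}}}\bigr)$, each bounded by $4$, forces saturation of every term, invokes uniqueness of the maximizer of $F_i$ from \cite{RBRH20}, and then applies Theorem \ref{thm}; the reverse direction is the straightforward factorization of the trace over tensor factors, which the paper leaves implicit. No gaps.
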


Note that in particular if $F_i=F$ for every $i=1,\dots, n$ and $F$ is a functional maximized by some fixed inflexible assemblage $\Sigma_{fix}$, then $F^{(n)}(\Sigma^{(n)})=4^n$ if and only if $\Sigma^{(n)}=\otimes_i^n \Sigma_{fix}$.

\end{document}